\newtheorem{mytheorem}{Theorem}
\newtheorem{mycorollary}[mytheorem]{Corollary}
\newtheorem{mylemma}[mytheorem]{Lemma}
\newtheorem{mydefinition}[mytheorem]{Definition}
\tikzset{
	thick/.style=      {line width=0.8pt},
	very thick/.style= {line width=3pt},
	ultra thick/.style={line width=1.6pt}
}
\newcommand{\cala}{\mathcal{A}}
\newcommand{\calb}{\mathcal{B}}
\newcommand{\cali}{\mathcal{I}}
\newcommand{\call}{\mathcal{L}}
\newcommand{\calo}{\mathcal{O}}
\newcommand{\calp}{\mathcal{P}}
\newcommand{\calw}{\mathcal{W}}
\newcommand{\calx}{\mathcal{X}}
\newcommand{\caly}{\mathcal{Y}}
\newcommand{\calz}{\mathcal{Z}}
\newcommand{\kpcst}{\mbox{$k$-PCST}\xspace}
\newcommand{\pcst}{\mbox{PCST}\xspace}
\newcommand{\kmst}{\mbox{$k$-MST}\xspace}
\newcommand{\etal}{\mbox{et~al.}\xspace}
\newcommand{\opt}{\ensuremath{\mathrm{opt}}\xspace}
\newcommand{\lp}{\ensuremath{\mathrm{LP}}\xspace}
\newcommand{\np}{\ensuremath{\mathrm{NP}}\xspace}
\newcommand{\nphard}{\mbox{\np-hard}\xspace}
\newcommand{\ecost}{c}
\newcommand{\vpen}{\pi}
\newcommand{\vweight}{w}
\newcommand{\dfunc}{\epsilon}
\colorlet{CorPotencial}{red!50!black}
\colorlet{CorLista}{blue!50!black}
\newcommand{\pot}{{{\color{CorPotencial}\lambda}}}
\newcommand{\tblist}{{{\color{CorLista}\tau}}}
\newcommand{\sublist}{{{\color{CorLista}\widetilde{\tau}}}}
\newcommand{\obj}{\sigma}
\newcommand{\actObj}{\mathcal{Q}}
\newcommand{\subminus}{{\scalebox{.4}{$\mathbf{-}$}}}
\newcommand{\subplus}{{\scalebox{.5}{$\mathbf{+}$}}}
\newcommand\mywidehat[1]{\hstretch{2}{\hat{\hstretch{.5}{#1}}}}
\newcommand{\myhat}[2][0pt]{%
	\smash{\mathrlap{\hspace*{#1}\mywidehat{\phantom{#2}}}}%
	#2%
}
\newcommand{\hatt}{\myhat[0.5pt]{T}}
\newcommand{\hath}{\myhat[0.95pt]{H}}
\newcommand{\callminusstar}{\rlap{\ensuremath{\call_\subminus}}{\phantom{\call}^*}}
\newcommand{\GW}{\mathtt{GW}}
\newcommand{\GP}{\mathtt{GP}}
\newcommand{\PP}{\mathtt{PP}}
\newcommand{\TS}{\mathtt{TS}}
\newcommand{\PV}{\mathtt{PV}}
\newcommand{\AP}{\mathtt{2\mbox{-}APPROX}}
\let\oldnl\nl
\newcommand{\nonl}{\renewcommand{\nl}{\let\nl\oldnl}}
\newcommand{\putstmttarget}[1]{%
	\ifcsname used target #1\endcsname\else%
		\global\expandafter\def\csname used target #1\endcsname{}\relax%
		\hypertarget{proofof#1}{}%
	\fi%
}
\begin{document}


\renewcommand{\thefootnote}{\fnsymbol{footnote}}

\title{
	A $2$-approximation for the \\
	$k$-prize-collecting Steiner tree problem
}
\author{
	Lehilton Lelis Chaves Pedrosa \\
	Hugo Kooki Kasuya Rosado\footnote{Corresponding author.}
}
\date{
	\normalsize
	Institute of Computing -- UNICAMP -- Brazil\\
	\texttt{\{lehilton,hugo.rosado\}@ic.unicamp.br}
}

\maketitle

\begin{abstract}
We consider the $k$-prize-collecting Steiner tree problem.
An instance is composed of an integer~$k$ and a graph~$G$ with costs on edges and penalties on vertices.
The objective is to find a tree spanning at least $k$ vertices which minimizes the cost of the edges in the tree plus the penalties of vertices not in the tree.
This is one of the most fundamental network design problems and is a common generalization of the prize-collecting Steiner tree and the $k$-minimum spanning tree problems.
Our main result is a $2$-approximation algorithm, which improves on the currently best known approximation factor of $3.96$ and has a faster running time.
The algorithm builds on a modification of the primal-dual framework of Goemans and Williamson, and reveals interesting properties that can be applied to other similar problems.
\end{abstract}

\clearpage

\renewcommand{\thefootnote}{\arabic{footnote}}
\setcounter{footnote}{0}

{
	\small
	\renewcommand{\baselinestretch}{0.7}\normalsize
	\tableofcontents
	\clearpage
}


\section{Introduction}
\label{sec:intro}

In many network design problems, the input consists of an edge-weighted graph, and the output is a minimum-cost tree connecting a certain subset of vertices.
Two of the most fundamental \nphard variants are the \emph{prize-collecting Steiner tree} (\pcst) and the \emph{k-minimum spanning tree} (\kmst).
For \pcst, a solution may contain any subset of vertices, but any not spanned vertex incurs a penalty which is added to the objective function.
For \kmst, the output tree is required to contain at least $k$ vertices.

We consider the \emph{$k$-prize-collecting Steiner tree problem} (\kpcst), which is a common generalization of \pcst and \kmst.
An instance consists of a connected undirected graph ${G=(V,E)}$, a special vertex $r$, called the root, and a non-negative integer ${k \leq |V|}$.
Each edge ${e \in E}$ has a non-negative cost~$\ecost_e$, and each vertex ${v \in V}$ has a non-negative penalty~$\vpen_v$.
A solution is a tree spanning at least $k$ vertices, including the root~$r$, and minimizing the cost of edges of the tree plus the penalties of vertices not spanned by the tree.
Without loss generality, we assume that ${\vpen_r = \infty}$.


\subsection{Related works}
\label{subsec:related_works}

\pcst is the special case of \kpcst for which $k = 0$.
For this problem, Bienstock~\etal presented a $3$-approximation based on an \lp rounding algorithm~\cite{Bienstock1993}, and Goemans showed that this factor could be improved to $2.54$, by randomizing the rounding algorithm~\cite{WilliamsonS2011}.
Goemans and Williamson presented a $2$-approximation~\cite{Goemans1995} based on a new general primal-dual scheme for \pcst as well many other constrained forest problems.
To bound the value of an optimal solution, they used weak duality, and, as a consequence, their analysis implies that the integrality gap of the usual \lp formulation is asymptotically tight.
Currently, the best approximation for \pcst is due to Archer \etal~\cite{ArcherBHK2011} and it has factor $2-(\frac{2-\rho}{2+\rho})^2$, where $\rho$ is an approximation factor for the Steiner tree problem.
Using the best-known value for $\rho$, which is $\ln(4)+\epsilon$~\cite{ByrkaGRS2013},  yields a factor of $1.9672+\epsilon$ for \pcst.

\kmst is the special case of \kpcst for which $\vpen_v = 0$ for each vertex~$v$.
One may assume that a solution spans exactly $k$ vertices, since a tree with more than~$k$ vertices can be pruned without increasing its cost.
Several approximation algorithms were devised for \kmst~\cite{Awerbuch1998,Chudak2004,Ravi1996}, and Blum \etal~\cite{Blum1999} gave the first constant-factor approximation, with factor $17$, by using the primal-dual scheme.
Garg improved this factor to $5$ and $3$, subsequently~\cite{Garg1996}, and Arya and Ramesh showed how Garg's algorithm could be used to obtain a factor of $2.5$~\cite{Arya1998}. Later, Arora and Karakostas gave a $(2+\epsilon)$-approximation by modifying Garg's algorithm~\cite{Arora2006}.
Currently, the best approximation for \kmst is a $2$-approximation due to Garg~\cite{Garg2005} and is based on a sophisticated use of the primal-dual scheme.


To our knowledge, the first constant-factor approximation for \kpcst was given by Han~\etal~\cite{Han2017} and has factor~$5$.
They presented a primal-dual algorithm based on the Lagrangean relaxation
of a linear program.
Later, Matsuda and Takahashi~\cite{Matsuda2019} derived a $4$-approximation by combining the solutions for the underlying instances of \pcst and \kmst.
The algorithm's running time is $\calo(|V|^4 |E| \log |V|)$
and is bottlenecked by Garg's $2$-approximation, which is used to solve \kmst.
By using the $1.9672+\epsilon$-approximation for \pcst, the approximation factor for \kpcst can be improved to $3.9672+\epsilon$, with a significant increase in the running time.


\subsection{Our results}
\label{subsec:our_results}

Our main contribution is a $2$-approximation for \kpcst.
More precisely, we present an algorithm with running time $\calo(|V|^2|E|^2 + |V|^4 \log^2|V|)$  that finds a tree $T$ such that ${\sum_{e \in E(T)}\ecost_e + 2 \cdot \sum_{v \in V \setminus V(T)} \vpen_v \leq 2 \cdot \opt}$, where $\opt$ is the optimal value.
This improves on both the approximation factor and the time complexity of the previously best-known algorithms.
Our $2$-approximation is based on a modified version of the Goemans and Williamson's algorithm, and our analysis reveals many interesting properties of the primal-dual scheme, which might give insights to other problems with similar constraints.
We note that the inequalities considered by our algorithm do not correspond to a dual \lp formulation for \kpcst, and our analysis does not rely on weak duality.
Moreover, a small modification of the algorithm results in a $2$-approximation for the quota variant~\cite{Johnson2000} of \kpcst, for which an instance includes vertex weights, and a solution is any tree whose weight is at least the given quota.


\subsection{Algorithm's overview}
\label{subsec:alg_verview}

Our algorithm successively executes a modified version of the $2$-approximation primal-dual scheme for \pcst due to Goemans and Williamson~\cite{Goemans1995}.
Their algorithm is divided into a \emph{growth-phase} and a \emph{pruning-phase}.
In the growth-phase, it computes a feasible dual solution $y$ such that for each subset of vertices~$S$, $y_S$ is a non-negative value.
It also outputs a tree~$T$ and a collection~$\calb$ of subsets of $V$, whose edges and subsets correspond to tight dual inequalities of an \lp formulation.
In the pruning-phase, the algorithm deletes from $T$ the subsets in~$\calb$ which do not disconnect the graph, resulting in a pruned tree~$\hatt$.
To derive a $2$-approximation, they bound the value of $\hatt$ by a factor of the dual objective function; in our algorithm, we compare with an optimal solution directly.

In our modification, the growth-phase receives two new arguments, a potential $\pot$ and a tie-breaking list $\tblist$.
The potential is a uniform increase on the penalties of each vertex, such that for larger values of $\pot$, the output tree~$\hatt$ spans more vertices.
During the growth-phase, there might be concurrent events, thus there are multiple choices for the execution path.
Usually, these choices are determined by some fixed lexicography order.
Our algorithm, on the contrary, relies on a tie-breaking list $\tblist$ to control the priority among concurrent events.
The $i$-th element in this list dictates which event gets the highest priority in the $i$-th iteration of the algorithm. This allows us to control the execution path of the algorithm.

The use of potential $\pot$ is built on Garg's arguments for the $2$-approximation for \kmst~\cite{Garg2005}, which can be described as follows.
If, for some $\pot$, the pruned tree $\hatt$ spans exactly $k$ vertices, this leads to a $2$-approximation by using the Lagrangean relaxation strategy (see, e.g.,~\cite{Chudak2004}).
However, it might be the case that no such $\pot$ exists; thus, the idea is to find a particular value of $\pot$ such that, for sufficiently small $\epsilon$,
using potential $\pot-\epsilon$ leads to a pruned tree $\hatt_\subminus$ spanning less than $k$ vertices, and using potential $\pot+\epsilon$ leads to a pruned tree~$\hatt_\subplus$ spanning at least than $k$ vertices.
The tree~$\hatt_\subminus$ is constructed by pruning a tree~$T_\subminus$ using a collection~$\calb_\subminus$. Similarly, $\hatt_\subplus$ is constructed by pruning a tree~$T_\subplus$ using a collection~$\calb_\subplus$.
On the one hand, $\hatt_\subplus$ is a feasible solution, but its cost cannot be bounded in terms of vector $y$. On the other hand, the value of $\hatt_\subminus$ can be bounded, but it is not a feasible solution.

In Garg's algorithm, the trees $T_\subminus$ and $T_\subplus$ and the collections $\calb_\subminus$ and $\calb_\subplus$ might be very different. Thus, to obtain a tree with $k$ vertices and whose cost can be bounded, his algorithm iteratively transforms $T_\subminus$ into $T_\subplus$ and $\calb_\subminus$ into $\calb_\subplus$ by replacing one edge of $T_\subminus$ or one subset in $\calb_\subminus$ at a time.
At some iteration, pruning the current tree using the current collection must result in a tree spanning at least $k$ vertices. Before this step, instead of performing the operation, one augments the current pruned tree by adding a sequence of edges whose corresponding dual restrictions are tight, picking up to $k$ vertices.

Our algorithm also considers similar trees $T_\subminus$ and $T_\subplus$ and corresponding collections $\calb_\subminus$ and $\calb_\subplus$.
However, both trees are constructed by executing the growth-phase using a single potential $\pot$.
To differentiate between the cases, we take into account a tie-braking list $\tblist$
and its maximal proper prefix $\sublist$.
We show how to compute a special tuple $(\pot,\tblist)$, called the \emph{threshold-tuple}, such that executing the growth-phase using $\tblist$ results in a pruned tree with less than $k$ vertices, while using~$\sublist$ results in a pruned tree with at least $k$ vertices (or vice-versa).

The trees and collections output by the two executions of the growth-phase are only slightly different.
Indeed, a key ingredient of our analysis (presented in Lemma~\ref{subsec:threshold-tuple:lemma:main_lemma}) shows that one of the two following scenarios hold:
\begin{enumerate}[label=(\roman*),itemsep=-2pt,topsep=-6pt]
	\item collections $\calb_\subminus$ and $\calb_\subplus$ are equal, and trees $T_\subminus$ and $T_\subplus$ differ in exactly one edge, or
	\item trees $T_\subminus$ and $T_\subplus$ are equal, and collections $\calb_\subminus$ and $\calb_\subplus$ differ in exactly one subset.
\end{enumerate}
Moreover, we show that the vector $y$ output in both executions of the growth-phase are the same.
This leads to a straightforward way of augmenting the pruned tree $\hatt_\subminus$, by picking a sequence of edges of $T_\subminus$ or $T_\subplus$ whose corresponding inequalities are tight, without the need for a step-by-step transformation.

Although the computed vector~$y$ satisfy a set of inequalities, these inequalities do not correspond to an \lp dual formulation for \kpcst, hence cannot use weak duality to bound the value of an optimal solution.
Instead, we show (in Lemma~\ref{subsec:approx:lemma:2-approx_or_reduce}) that either our algorithm returns a $2$-approximate solution, or it identifies a non-empty subset of the vertices which are not spanned by any optimal solution.
Thus, we either find the desired solution, or can safely reduce the size of the instance.
Therefore, by running the algorithm at most $|V|-k$ times, we find a $2$-approximate solution.


\subsection{Text organization}
\label{subsec:text_org}

The remaining sections are organized as follows.
In Section~\ref{sec:preliminaries}, we give a series of definitions and introduce the terminology used in the text.
In Section~\ref{sec:GW}, we describe our modification of the primal-dual scheme, which accounts for the potential $\pot$ and the tie-breaking list $\tblist$.
In Section~\ref{sec:threshold-tuple}, we formally define the threshold-tuple, and show how it can be computed.
In Section~\ref{sec:find_solution}, we show that, given a threshold-tuple, one can construct a tree spanning exactly $k$ vertices.
In Section~\ref{sec:two-approx}, we bound the cost of the computed tree and give a $2$-approximation for \kpcst.
In Section~\ref{sec:final_remarks}, we give some concluding remarks.


\section{Definitions and preliminaries}
\label{sec:preliminaries}

We say that a nonempty collection $\call \subseteq 2^V$ is \emph{laminar} if, for any two subsets ${L_1, L_2 \in \call}$, either $L_1 \cap L_2 = \emptyset$, or $L_1 \subseteq L_2$, or $L_2 \subseteq L_1$.
A laminar collection~$\call$ is \emph{binary} if for every $L \in \call$ with $|L| \geq 2$, there are distinct non-empty subsets ${L_1,L_2 \in \call}$ such that ${L = L_1 \cup L_2}$.
We denote the collection of inclusion-wise maximal subsets of a collection $\call$ by $\call^*$.
Observe that, if $\call$ is laminar, then the subsets in $\call^*$ are disjoint.

Let $\calp$ be a partition of $V$, and consider an edge $e$ with extremes on $V$.
If a set in $\calp$ contains an extreme of $e$, then we call this set an \emph{endpoint} of~$e$.
We say that an edge $e$ is \emph{internal} in $\calp$ if $e$ has only one endpoint,
and we say that $e$ is \emph{external} in $\calp$ if $e$ has two distinct endpoints.
Also, two external edges are said to be \emph{parallel} in $\calp$ if they have the same pair of endpoints.

Given a graph $H$ and a subset $L \subseteq V$, we say that $H$ is \emph{$L$-connected} if $V(H) \cap L = \emptyset$ or if the induced subgraph $H[V(H) \cap L]$ is connected.
For a collection $\call$ of subsets of the vertices, we say that $H$ is \emph{$\call$-connected} if $H$ is $L$-connected for every $L \in \call$.

Let $L \subseteq V$ and $H \subseteq G$, then $\delta_H(L)$ denotes the set of edges of $H$ with exactly one extreme in $L$. We say that $L$ has \emph{degree} $|\delta_H(L)|$ on $H$.
In the case where $H = G$, we drop the subscript and write just $\delta(L)$.

For a subset $L \subseteq V$, we define its new penalty as ${\vpen^\pot_L = \sum_{v \in L} \vpen_v + \pot|L|}$, where $\pot$ is a non-negative value which we call \emph{potential}.
This implies that, for any subset $L$ containing the root $r$, we have $\vpen^\pot_L = \infty$.

Consider a vector~$y$ such that, for each ${L \subseteq 2^V}$, the entry $y_L$ is a non-negative variable.
We say that $y$ \emph{respects} edge cost $\ecost$ if
\begin{align}
	\sum_{L: e \in \delta(L)} y_L \leq \ecost_e \quad\quad\text{for every edge $e \in E$}, \label{sec:preliminaries:restriction_1}
\end{align}
and we say that $y$ \emph{respects} vertex penalty $\vpen^\pot$ if
\begin{align}
	\sum_{S:S \subseteq L} y_S \leq \vpen^\pot_L \quad\quad\text{for every subset $L \subseteq V$}. \label{sec:preliminaries:restriction_2}
\end{align}

We say that an edge~$e$ is \emph{tight} for $(y,\pot)$ if the inequality corresponding to~$e$ in~(\ref{sec:preliminaries:restriction_1}) holds with equality.
Analogously, a subset~$L \subseteq V$ is tight for~$(y,\pot)$ if the inequality corresponding to~$L$ in (\ref{sec:preliminaries:restriction_2}) is satisfied with equality.
If the pair $(y,\pot)$ is clear from context, we simply say that $e$ and $L$ are tight.

Inequalities (\ref{sec:preliminaries:restriction_1}) and (\ref{sec:preliminaries:restriction_2}) are similar to the inequalities in the dual formulation for \pcst~\cite{Goemans1995}, with the difference that we include inequalities for subsets containing $r$.
Also, we note that these inequalities may not correspond to the dual of an LP formulation for \kpcst.

For a collection of subsets $\call$, denote by $\call[S]$ the collection of subsets in~$\call$ which are subsets of $S$.
Also, denote by $\call(S)$ the collection of subsets in $\call$ which contain some, but not all, vertices of $S$.
Moreover, let $\ecost_{E'} = \sum_{e \in E'} \ecost_e$ for $E'\subseteq E$, and $\vpen_L = \sum_{v \in L} \vpen_e$ for $L \subseteq V$.
To bound the value of an optimal solution, we use the following lemma.

\begin{mylemma}
	\label{subsec:approx:lemma:opt_lower_bound}
	Let $T^*$ be an optimal solution. Suppose that $\call$ is a laminar collection and that $L^*$ is the minimal subset in $\call$ containing $V(T^*)$.
	If $y$ respects~$\ecost$~and~$\vpen^\pot$, then
	\[
	\sum_{L \in \call(L^*)} y_L - \pot|L^* \setminus V(T^*)| \leq \ecost_{E(T^*)} + \vpen_{L^* \setminus V(T^*)}.
	\]
\end{mylemma}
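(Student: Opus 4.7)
The plan is to split $\call(L^*)$ into two sub-collections according to whether a set meets $V(T^*)$ or not, and to bound each sub-sum by one of the two respected families of inequalities.

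First, write $\call(L^*) = \call_1 \cup \call_2$ where $\call_1 = \{L \in \call(L^*) : L \cap V(T^*) \neq \emptyset\}$ and $\call_2 = \{L \in \call(L^*) : L \cap V(T^*) = \emptyset\}$. Since $\call$ is laminar, membership of $L$ in $\call(L^*)$ together with $L \cap L^* \neq \emptyset$ forces $L \subsetneq L^*$. I will use this twice. For $\call_1$, I claim $|\delta_{T^*}(L)| \geq 1$: indeed, if we had $V(T^*) \subseteq L$, then $L$ would be a subset of $\call$ strictly contained in $L^*$ and still containing $V(T^*)$, contradicting the minimality of $L^*$; hence $L$ meets $V(T^*)$ but does not contain it, and because $T^*$ induces a connected graph on $V(T^*)$, some edge of $T^*$ crosses $L$.

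For $\call_1$, sum inequality~(\ref{sec:preliminaries:restriction_1}) over $e \in E(T^*)$ and swap the order of summation to get
\[
\sum_{L \subseteq V} |\delta_{T^*}(L)|\, y_L \;\leq\; \ecost_{E(T^*)}.
\]
Since $y$ is non-negative and $|\delta_{T^*}(L)| \geq 1$ for each $L \in \call_1$, this yields $\sum_{L \in \call_1} y_L \leq \ecost_{E(T^*)}$. For $\call_2$, every $L \in \call_2$ satisfies $L \subseteq L^* \setminus V(T^*)$; let $P = L^* \setminus V(T^*)$. By laminarity, the maximal sets of $\call_2$ are pairwise disjoint subsets of $P$. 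For each maximal $M \in \call_2^*$, inequality~(\ref{sec:preliminaries:restriction_2}) and the non-negativity of $y$ give
\[
\sum_{S \in \call_2,\, S \subseteq M} y_S \;\leq\; \sum_{S \subseteq M} y_S \;\leq\; \vpen^\pot_M \;=\; \vpen_M + \pot|M|.
\]
Each $S \in \call_2$ lies in a unique maximal set of $\call_2^*$, so summing over $\call_2^*$ and using disjointness inside $P$ gives $\sum_{L \in \call_2} y_L \leq \vpen_P + \pot|P|$.

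Adding the two bounds and rearranging delivers the inequality in the statement. The only subtle step is the claim that $|\delta_{T^*}(L)| \geq 1$ for every $L \in \call_1$, which is precisely where the minimality hypothesis on $L^*$ enters; everything else is a routine application of non-negativity of $y$, laminarity, and the two inequality families that $y$ is assumed to respect.
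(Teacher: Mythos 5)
Your proof is correct and follows essentially the same approach as the paper's: decompose $\call(L^*)$ into the subsets meeting $V(T^*)$ (your $\call_1$, which coincides with the paper's $\call(V(T^*))$) bounded by the edge constraints, and the subsets disjoint from $V(T^*)$ (your $\call_2$, the paper's $\call[L^*\setminus V(T^*)]$) bounded by the penalty constraints, using minimality of $L^*$ for the first part exactly as the paper does. The only cosmetic difference is that for $\call_2$ you pass through the maximal sets $\call_2^*$ and sum several penalty constraints, whereas the paper applies the single penalty inequality for $L^*\setminus V(T^*)$ directly; both yield the same bound.
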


\begin{proof}
	Since $\call$ is laminar, each subset in $\call(L^*)$ contains some, but not all, vertices of $V(T^*)$, or it is a subset of $L^* \setminus V(T^*)$.
	Also, one subset in $\call(L^*)$  cannot be~$V(T^*)$, because otherwise $L^*$ would not be minimal.
	Thus,
	\begin{align*}
	\textstyle{\sum_{L \in \call(L^*)}} y_L
	&=    \textstyle {\sum_{L \in \call(V(T^*))}} y_L
	               + {\sum_{L \in \call[L^* \setminus V(T^*)]}} y_L\\
	&\le  \textstyle {\sum_{e \in E(T^*)} {\sum_{L : e \in \delta(L)}} y_L}
	               + {\sum_{L \in \call[L^* \setminus V(T^*)]}} y_L\\
	&\le  \textstyle {\sum_{e \in E(T^*)} \ecost_e}
		             + {\vpen^\pot_{L^* \setminus V(T^*)}}\\
	&=    \textstyle {\ecost_{E(T^*)}}
	               + \vpen_{L^* \setminus V(T^*)}
	               + \pot |L^* \setminus V(T^*)|.
	\end{align*}
	The first inequality holds as each subset in $\call(V(T^*))$ is crossed by at least one edge of $T^*$, and the second inequality holds because~$y$ respects~$\ecost$ and~$\vpen^\pot$.
\end{proof}


\section{Modified growth and pruning phases}
\label{sec:GW}

In the following, we detail our modification of the primal-dual scheme due to Goemans and Williamson for the prize-collecting Steiner tree problem~\cite{Goemans1995}. The algorithm is composed of two main routines: a clustering algorithm, also known as the growth-phase, and a cleanup algorithm, known as the pruning-phase.


\subsection{Modified clustering algorithm}

The modified \emph{growth-phase} is described in the following and is denoted by $\GP(\pot,\tblist)$. A listing of all steps is given afterwards, in Algorithm~\ref{subsec:GP:alg:GP}.
The algorithm maintains a binary laminar collection ${\call \subseteq 2^V}$, such that $\call^*$ partitions the set of vertices~$V$, and a vector $y$ which respects~$\ecost$ and~$\vpen^\pot$.
It iteratively constructs a forest ${F \subseteq G}$ and a subcollection of \emph{processed} subsets $\calb \subseteq \call$, such that the edges of $F$ and the subsets of $\calb$ are tight for~$(y, \pot)$. In each iteration, either a new edge is added to $F$, or a new subset is included in $\calb$.

The algorithm begins by defining $\call = \{\{v\} : v \in V\}$ and ${y_S = 0}$, for each ${S \subseteq V}$ (implicitly), and by letting $F = (V, \emptyset)$, and $\calb = \emptyset$. Once initialized, it starts the iteration process.
At a given moment, a maximal subset ${L \in \call^*}$ is said to be \emph{active} if it has not been processed yet, i.e., if~${L \in \call^* \setminus \calb}$.
In each iteration, we increase the value $y_L$ of every active subset~$L$ uniformly until one of the following events occur:
\begin{enumerate}[itemsep=0pt,topsep=0pt]
	\item[$\triangleright$] an external edge~$e$ with endpoints $L_1, L_2 \in \call^*$ becomes tight, in which case $e$ is added to $F$, and the union $L_1 \cup L_2$ is included in $\call$; or

	\item[$\triangleright$] an active subset~$L$ becomes tight, in which case $L$ is included in $\calb$.
\end{enumerate}

We note that multiple edges and subsets might become tight simultaneously.
In our modified algorithm, we use the \emph{tie-breaking list} $\tblist$ to decide the order in which the events are processed.
A tie-breaking list~$\tblist$ with size $|\tblist|$ is a (possibly empty) sequence of edges and subsets. For each $i = 1, 2, \dots, |\tblist|$, the $i$-th element of the list is denoted by $\tblist_i$.
In iteration $i$, the event to be processed is determined according to the following order:
\begin{enumerate}[itemsep=0pt,topsep=0pt, label=(\roman*),ref={\mbox{\rm $(\roman*)$}}]
	\item \label{sec:GW:priority1} if $i \leq |\tblist|$, then the event corresponding to $\tblist_i$ has the highest priority;
	\item \label{sec:GW:priority2} followed by events corresponding to edges;
	\item \label{sec:GW:priority3} and finally by events corresponding to subsets.
\end{enumerate}
The priority order between events of the same type are determined by a fixed lexicographic order.

The algorithm stops when $V$ is the only active subset in~$\call^*$, at which point $F$ is a tree.
The algorithm defines ${T = F}$, and outputs the pair~$(T, \calb)$.

\begin{algorithm}[p]
	\DontPrintSemicolon
	\caption{$\GP(\pot,\tblist)$}
	\label{subsec:GP:alg:GP}

	Initialize $F \gets (V, \emptyset)$, $\call \gets \{\{v\}:v \in V\}$, $\calb \gets \emptyset$, and $y \gets 0$\;

	\For{\textnormal{$i \gets 1, 2, \dots$}}
	{
		Let $\cala \gets \call^* \setminus \calb$\;
		\nonl\;

		\If{\textnormal{$\cala = \{V\}$}}
		{
			Let $T \gets F$\;
			\Return{$(T,\calb)$}\;
		}
		\nonl\;

		Let $\Delta_1 \gets \Delta_2 \gets 0$\;
		\If{\textnormal{there are no tight edges external in $\call^*$}}
		{
			Let $\Delta_1 \gets \min \, \left\{ \frac{\ecost_e - \sum_{L: e \in \delta(L)} y_L }{|\{L \in \cala\,:\, e \in \delta(L)\}|} : e \in \delta(S), \, S \in \cala \right\}$\;
		}
		\If{\textnormal{there are no tight active subsets}}
		{
			Let $\Delta_2 \gets \min \, \left\{ \vpen^\pot_L - \sum_{S:S \subseteq L} y_S : L \in \cala \right\}$\;
		}

		Let $\Delta \gets \min\{\Delta_1, \Delta_2\}$ and increase $y_L$ by $\Delta$ for every $L \in \cala$\label{alg:mGP_delta_icrease}\;

		\nonl\;

		Let $\caly$ be the set of tight edges external in $\call^*$\;
		Let $\calz$ be the collection of tight active subsets\;

		\uIf{\textnormal{$|\tblist| \ge i$ and $\tblist_i \in \caly \cup \calz$}}
		{
			Let $\obj \gets \tblist_i$
		}
		\uElseIf{$\caly \ne \emptyset$}
		{
			Let $\obj$ be the first edge of $\caly$
		}
		\Else
		{
			Let $\obj$ be the first subset of $\calz$
		}
		\nonl\;

		\uIf(\label{alg:mGP_process_edge}){\textnormal{$\obj$ is a tight external edge in $\call^*$}}
		{
			Let $L_1$ and $L_2$ be the endpoints of $\obj$ in $\call^*$\;
			Add $\obj$ to $F$  and include $L_1 \cup L_2$ in $\call$
		}
		\ElseIf(\label{alg:mGP_process_subset}){\textnormal{$\obj$ is a tight active subset}}
		{
			Include $\obj$ in $\calb$\;
		}
	}
\end{algorithm}

Next lemma collects basic invariants of the growth-phase.

\begin{mylemma}
    \label{subsec:GP:lemma:GP_inv}
		At the beginning of any iteration of $\GP(\pot,\tblist)$, the following holds:
		\begin{enumerate}[label=(gp\arabic*),ref={\mbox{\rm $(gp\arabic*)$}},topsep=0pt,itemsep=0pt]
			\item \label{subsec:GP:lemma:GP_inv:L_is_laminar} $\call$ is a binary laminar collection, $\bigcup_{L \in \call^*} L = V$, and $\emptyset \notin \call$;
			\item \label{subsec:GP:lemma:GP_inv:y_respects} $y$ respects~$\ecost$ and~$\vpen^\pot$;
			\item \label{subsec:GP:lemma:GP_inv:edge_in_F_are_tight} $F$ is an $\call$-connected forest and every edge $e \in E(F)$ is tight for $(y,{\pot})$;
			\item \label{subsec:GP:lemma:GP_inv:subsets_in_B_are_tight} $\calb \subseteq \call$ is laminar and every subset in $\calb$ is tight for $(y,{\pot})$;
			\item \label{subsec:GP:lemma:GP_inv:root_never_tight} no subset in $\calb$ contains the root $r$.
		\end{enumerate}
\end{mylemma}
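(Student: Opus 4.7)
The plan is to prove the five invariants simultaneously by induction on the iteration counter~$i$. The base case is routine: at the start of iteration~$1$ we have $\call = \{\{v\}: v\in V\}$, which is binary laminar with no empty set and satisfies $\bigcup_{L\in\call^*} L = V$, so \ref{subsec:GP:lemma:GP_inv:L_is_laminar} holds; $y \equiv 0$ trivially respects the non-negative quantities $\ecost$ and $\vpen^\pot$, giving \ref{subsec:GP:lemma:GP_inv:y_respects}; $F = (V,\emptyset)$ is a forest that is $\call$-connected since each $L\in\call$ is a singleton, so \ref{subsec:GP:lemma:GP_inv:edge_in_F_are_tight} holds with no edges to check; and $\calb = \emptyset$ makes \ref{subsec:GP:lemma:GP_inv:subsets_in_B_are_tight} and \ref{subsec:GP:lemma:GP_inv:root_never_tight} vacuous.

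For the inductive step, I would first argue that the $y$-update in line~\ref{alg:mGP_delta_icrease} preserves \ref{subsec:GP:lemma:GP_inv:y_respects} together with the tightness of every edge in $E(F)$ and every subset in $\calb$. By construction, $\Delta = \min(\Delta_1,\Delta_2)$ is the largest uniform increase of the active dual variables that keeps both~\eqref{sec:preliminaries:restriction_1} and~\eqref{sec:preliminaries:restriction_2} satisfied. To see that already-tight inequalities remain tight, note that: for each $e \in E(F)$, the inductive $\call$-connectedness of $F$ and the partition structure given by~\ref{subsec:GP:lemma:GP_inv:L_is_laminar} force both endpoints of $e$ into the same $L' \in \call^*$, so $e \notin \delta(L)$ for every $L\in\cala\subseteq\call^*$ and the tight sum is unchanged; and for each $L \in \calb$, no $S \in \cala$ satisfies $S\subseteq L$, since $S\in\call^*$ is maximal in $\call$ while $S\subseteq L \in \call$ would force $S = L$, contradicting $\cala\cap\calb = \emptyset$.

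Next, I would case-split on the selected event $\obj$. When $\obj$ is a tight external edge with endpoints $L_1, L_2 \in \call^*$, the maximality of $L_1$ and $L_2$ in the laminar collection forces every $L' \in \call$ either to be disjoint from $L_1 \cup L_2$ or to be contained in one of $L_1, L_2$; hence inserting $L_1\cup L_2$ preserves binary laminarity, the covering property, and the exclusion of~$\emptyset$, giving~\ref{subsec:GP:lemma:GP_inv:L_is_laminar}. The new edge is tight by the selection rule, and it joins the previously disconnected subgraphs $F[L_1]$ and $F[L_2]$ (both connected by induction), so $F$ remains a forest and is $\call$-connected for the updated $\call$, establishing~\ref{subsec:GP:lemma:GP_inv:edge_in_F_are_tight}; invariants \ref{subsec:GP:lemma:GP_inv:subsets_in_B_are_tight} and \ref{subsec:GP:lemma:GP_inv:root_never_tight} are unaffected because $\calb$ does not change. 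When instead $\obj$ is a tight active subset $L$, the collection $\calb \cup \{L\} \subseteq \call$ inherits laminarity from $\call$, preserving~\ref{subsec:GP:lemma:GP_inv:subsets_in_B_are_tight}; and since $\vpen_r = \infty$ gives $\vpen^\pot_L = \infty$ whenever $r \in L$, no such $L$ can ever satisfy the tightness equation with finite $y$-values, so $r\notin L$ and~\ref{subsec:GP:lemma:GP_inv:root_never_tight} is preserved.

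The hard part is not any individual step but the interlocking of the four combinatorial properties in the edge-event case, where a single step simultaneously alters $\call$ and $F$ and we must verify binary laminarity, coverage of $V$, the forest property, and $\call$-connectedness in one stroke. However, each follows from the same two ingredients---disjointness of the members of~$\call^*$ and the inductive $\call$-connectedness of $F$---so no new idea beyond the standard laminar-family manipulations used in the analysis of~\pcst is required.
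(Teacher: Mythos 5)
Your proof takes essentially the paper's own route: simultaneous induction over the iterations, with the base case read off from the initialization, the dual update checked for feasibility and preservation of tightness, and a case split on whether the processed event is a tight external edge or a tight active subset. That is the same decomposition the paper uses, merely organized a little differently (the paper verifies each invariant separately; you front-load the $y$-update and then do the event case split).

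There is, however, one step whose justification does not actually go through as written. You argue that the $\call$-connectedness of $F$ together with the partition property of (gp1) forces both endpoints of every $e \in E(F)$ into the same member of $\call^*$. That implication is false as an inference from the stated invariants: take $V=\{1,2,3,4\}$, $\call=\{\{1\},\{2\},\{3\},\{4\},\{1,2\}\}$ (so $\call^*=\{\{1,2\},\{3\},\{4\}\}$) and $F$ with edges $\{1,2\}$ and $\{2,3\}$; then (gp1) holds, $F$ is a forest and $\call$-connected, yet $\{2,3\}$ is external in $\call^*$. The fact you actually need — that every edge of $F$ is internal in $\call^*$, which is what makes the dual increase leave its constraint (and hence its tightness) unchanged, and also what guarantees the new external edge joins two \emph{distinct} components of $F$ so that $F$ stays a forest — is a consequence of the algorithm's coupling of the two updates: an edge is inserted into $F$ only in the very step in which its two endpoints in $\call^*$ are merged into a single member of $\call$, and members of $\call$ only grow from then on. A clean repair is to strengthen the induction hypothesis to record explicitly that the connected components of $F$ are $\{F[L] : L \in \call^*\}$ (equivalently, that every edge of $F$ is internal in $\call^*$), which then propagates by the same case analysis. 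The paper's proof leaves this point implicit rather than deriving it from (gp1) and (gp3); aside from that misattribution, your argument matches the paper's step for step.
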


\begin{proof}
	Before the first iteration, $\call = \{\{v\} : v \in V\}$, hence~\ref{subsec:GP:lemma:GP_inv:L_is_laminar} is valid.
	Assume that the invariant is valid at the beginning of an iteration and notice that a new subset $L$ is included into $\call$ only if $L = L_1 \cup L_2$, where $L_1, L_2 \in \call^*$. Therefore,~\ref{subsec:GP:lemma:GP_inv:L_is_laminar} is also valid at the end of the iteration.

	Before the first iteration, we have $y = 0$, thus~\ref{subsec:GP:lemma:GP_inv:y_respects} is valid since~$\ecost_e$ and~$\vpen^\pot_v$ are non-negative for every edge $e$ and vertex $v$.
	Assume that the invariant is valid at the beginning of an iteration.
	In this iteration, each variable $y_L$ corresponding to an active subset~$L$ is increased by the minimum value of~$\Delta$ such that an edge external in $\call^*$ or an active subset becomes tight.
	Observe that the only edges that can become tight must be external.
	Moreover, if a subset $L\subseteq V$ becomes tight, so does some subset $L \in \call^*$, because $\call$ is laminar.
	Therefore, after modifying $y$, no inequality is violated, and the invariant remains valid.

	Before the first iteration, \ref{subsec:GP:lemma:GP_inv:edge_in_F_are_tight} is valid because $\call$ is composed of singletons, and $F$ contains no edges.
	Assume that the invariant is valid at the beginning of an iteration, and notice that an edge $e$ is included into $F$ only if it is tight and the union $L_1 \cup L_2$ of its endpoints $L_1,L_2 \in \call^*$ is included into $\call$.
	Since $F$ is $L_1$-connected and $L_2$-connected, it follows that $F$ is also $(L_1 \cup L_2)$-connected because $e$ was added to $F$. Therefore,~\ref{subsec:GP:lemma:GP_inv:edge_in_F_are_tight} remains valid at the end of the iteration.

	Invariant~\ref{subsec:GP:lemma:GP_inv:subsets_in_B_are_tight} holds because $\call$ is laminar, and only tight active subsets are included in $\calb$.
	For invariant~\ref{subsec:GP:lemma:GP_inv:root_never_tight}, observe that in any iteration, if $L \in \call$ is the active component containing the root $r$, then $\vpen^\pot_L = \infty$. It follows that $L$ is not processed, and thus it is not included in $\calb$.
\end{proof}

Observe that, since $G$ is connected and the algorithm only ends when $V$ is the only active component, by invariant~\ref{subsec:GP:lemma:GP_inv:edge_in_F_are_tight} $T$ spans $V$.

\begin{mycorollary}
	\label{subsec:GP:corollary:GP_output}
	Let $(T,\calb)$ be the pair output by $\GP(\pot,\tblist)$.
	Then $V(T) = V$.
\end{mycorollary}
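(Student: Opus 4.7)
The plan is to formalize the hint appearing immediately before the corollary statement, invoking only the termination condition of $\GP$ together with invariants~\ref{subsec:GP:lemma:GP_inv:L_is_laminar} and~\ref{subsec:GP:lemma:GP_inv:edge_in_F_are_tight} from Lemma~\ref{subsec:GP:lemma:GP_inv}. The argument is very short, and splits into two observations.

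First, I would unpack the halting criterion of Algorithm~\ref{subsec:GP:alg:GP}: $\GP(\pot,\tblist)$ returns only when the set of active components, $\cala = \call^* \setminus \calb$, equals $\{V\}$. Consequently $V \in \call^*$. By invariant~\ref{subsec:GP:lemma:GP_inv:L_is_laminar}, $\call$ is laminar and its maximal subsets partition $V$; since $V$ itself lies in $\call^*$ and every other element of $\call^*$ is disjoint from $V$ and nonempty, we must in fact have $\call^* = \{V\}$ at termination.

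Second, I would invoke invariant~\ref{subsec:GP:lemma:GP_inv:edge_in_F_are_tight}, according to which $F$ is $\call$-connected at every iteration. Applying this with $L = V \in \call^*$ at the halting iteration, the induced subgraph $F[V(F) \cap V] = F[V(F)]$ is connected. But the algorithm initializes $F \gets (V,\emptyset)$ and only ever adds edges to $F$, so $V(F) = V$ throughout the execution; hence $F$ is a connected spanning subgraph of $G$. Since the algorithm sets $T \gets F$ right before returning, $V(T) = V(F) = V$, as claimed.

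I do not anticipate any real obstacle: the corollary is essentially a bookkeeping consequence of the termination criterion and the invariants already proved in Lemma~\ref{subsec:GP:lemma:GP_inv}. The only subtlety worth stating explicitly is that the vertex set $V(F)$ is never altered by the algorithm, so that $\call$-connectedness at $L = V$ upgrades from connectedness on $V(F) \cap V$ to ordinary spanning connectedness.
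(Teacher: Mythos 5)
Your argument is correct and matches the paper's own brief justification: the halting condition puts $V \in \call$, invariant~\ref{subsec:GP:lemma:GP_inv:edge_in_F_are_tight} then makes $F[V(F)\cap V]=F$ connected, and $V(F)=V$ holds by construction since edges are only ever added to $F=(V,\emptyset)$. You omit the paper's passing appeal to the connectedness of $G$, but that hypothesis only guarantees that $\GP$ actually reaches a state with $\cala=\{V\}$ rather than anything needed once an output pair is assumed to exist, so your version is a harmless (indeed slightly cleaner) formalization of the same reasoning.
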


\begin{mylemma}
	\label{subsec:GP:lemma:GP_iterates_at_most_3V-3}
	The number of iterations executed by $\GP(\pot,\tblist)$ is at most $3|V|-3$.
\end{mylemma}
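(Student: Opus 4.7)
The plan is to split the iterations of $\GP(\pot,\tblist)$ into three groups: one terminating iteration (the one that exits at the top because $\cala = \{V\}$) and, among the non-terminating iterations, those that add one edge to $F$ together with one new union subset $L_1 \cup L_2$ to $\call$ (an \emph{edge event}, line~\ref{alg:mGP_process_edge}) and those that add one subset to $\calb$ (a \emph{subset event}, line~\ref{alg:mGP_process_subset}). Writing $a$ and $b$ for the totals, the iteration count is exactly $a + b + 1$.

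Bounding $a$ is immediate: by invariant~\ref{subsec:GP:lemma:GP_inv:edge_in_F_are_tight}, $F$ is always a forest, and Corollary~\ref{subsec:GP:corollary:GP_output} gives $V(T) = V$, so at termination $F = T$ is a spanning tree with $a = |V| - 1$ edges. For $b$, I would first observe that $|\call| = |V| + a$ at termination: $\call$ starts with $|V|$ singletons, and each edge event adds the single new set $L_1 \cup L_2$, which cannot already belong to $\call$ since $L_1, L_2 \in \call^*$ are inclusion-wise maximal. Because a subset placed into $\calb$ leaves $\cala$ and is never re-processed, each subset event inserts a distinct element, so $b = |\calb|$. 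The key observation is that two subsets of $\call$ must lie outside $\calb$: the singleton $\{r\}$, present from the outset, and the ground set $V$, which must be in $\call$ at termination because the stopping condition forces $V \in \call^*$. Both contain $r$ and are therefore barred from $\calb$ by invariant~\ref{subsec:GP:lemma:GP_inv:root_never_tight}. Assuming $|V| \geq 2$ so that $\{r\} \neq V$, this gives $b \leq |\call| - 2 = |V| + a - 2$.

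Combining the two bounds,
\[
a + b + 1 \;\leq\; 2a + |V| - 1 \;\leq\; 2(|V|-1) + |V| - 1 \;=\; 3|V| - 3,
\]
as claimed. I do not anticipate any serious difficulty: the whole proof is a clean accounting argument built directly on Lemma~\ref{subsec:GP:lemma:GP_inv}. The one subtle point is extracting the $-2$ in the bound $b \leq |\call| - 2$ by recognizing that both $\{r\}$ and $V$ are guaranteed to lie in $\call \setminus \calb$; without this observation one only obtains $b \leq |\call|$ and the weaker estimate $3|V| - 1$.
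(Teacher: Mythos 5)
Your proof is correct and follows essentially the same counting strategy as the paper: bound the edge events by $|V|-1$ via the forest invariant, bound $|\call|$ by $|V|$ plus the number of edge events, and then bound the subset events by excluding from $\call$ the subsets containing $r$ (which by invariant~\ref{subsec:GP:lemma:GP_inv:root_never_tight} can never enter $\calb$). You are somewhat more careful than the paper's own proof, which only counts processed events and implicitly discounts a single set ($\{r\}$); you explicitly add the terminating iteration and compensate for it by observing that $V \in \call$ at termination also contains $r$, yielding the same bound $3|V|-3$.
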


\begin{proof}
	Since $F$ is a forest (invariant~\ref{subsec:GP:lemma:GP_inv:edge_in_F_are_tight}), the number of processed edges is at most $|V|-1$.
	Observe that the size of $\call$ is initially $|V|$ and increases by~$1$ for each processed edge.
	Thus, the final size of $\call$ is at most ${2 |V| - 1}$.
	Each subset of $\call$ which does not contain the root~$r$ can be processed.
	Therefore, the total number of processed events is at most $3|V| - 3$.
\end{proof}

A consequence is that the growth-phase executes in polynomial time. Also, this implies that the size of a tie-breaking list need not exceed~${3|V|-3}$.


\subsection{Modified pruning algorithm}

The modified \emph{pruning-phase} is denoted by $\PP(H,\calb)$. A listing is given in Algorithm~\ref{subsec:PP:alg:PP}.
The algorithm receives a graph $H$ and iteratively deletes from it any processed subset $B \in \calb$ such that the degree of $B$ on $H$ is one.
We assume that the input $H$ is a connected subgraph of $G$ containing~$r$, and that $\calb$ is a laminar collection of subsets of $V$ which do not contain~$r$.

\begin{algorithm}[H]
	\DontPrintSemicolon
	\caption{$\PP(H,\calb)$}
	\label{subsec:PP:alg:PP}

	\While{\label{pp-while}\textnormal{there is a subset $B$ in $\calb$
			such that $|\delta_H(B)| = 1$}}
	{
		Delete $B$ from $H$\;
	}
	\Return{$H$}\;
\end{algorithm}

In the following, we might say that we \emph{prune}~$H$~using~$\calb$ to mean that we execute algorithm $\PP(H,\calb)$, and say that a graph $H$~is~\emph{pruned}~with~$\calb$ if $|\delta_H(B)| \neq 1$ for every ${B \in \calb}$.

Note that we allow input graphs $H$ with cycles, whereas the standard pruning-phase only considers trees.
This will be useful in Section~\ref{subsec:picking}, when, to find a tree with $k$ vertices from distinct trees $T_\subminus$ and $T_\subplus$, we will first prune $T_\subminus \cup T_\subplus$, which might contain a cycle.

Next lemma collects invariants of the pruning-phase.

\begin{mylemma}
	\label{subsec:PP:lemma:PP_inv}
	At the beginning of any iteration of $\PP(H,\calb)$, the following holds:
	\begin{enumerate}[label=(pp\arabic*),ref={\mbox{\rm $(pp\arabic*)$}},itemsep=-2pt,topsep=-6pt]
		\setlength{\itemsep}{0pt}
		\item \label{subsec:PP:lemma:PP_inv:1} $H$ is connected and $r \in V(H)$;
		\item \label{subsec:PP:lemma:PP_inv:2} $H[V(H) \cap B]$ is connected.
	\end{enumerate}
\end{mylemma}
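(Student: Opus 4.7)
The plan is to proceed by induction on the iteration count of $\PP(H,\calb)$. The base case is immediate: (pp1) is the standing input hypothesis, and (pp2) holds whenever $\PP$ is invoked in our context, because the input graph is $\calb$-connected (the output of $\GP$ is $\call$-connected by Lemma~\ref{subsec:GP:lemma:GP_inv}\ref{subsec:GP:lemma:GP_inv:edge_in_F_are_tight} and $\calb \subseteq \call$, and finite unions of $\calb$-connected graphs remain $\calb$-connected). So the heart of the argument is the inductive step.

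Assume (pp1) and (pp2) hold at the start of an iteration, and that the algorithm selects some $B \in \calb$ with $|\delta_H(B)| = 1$ and deletes the vertices $V(H) \cap B$ (and their incident edges), yielding $H'$. I first verify (pp1) for $H'$. Since no subset in $\calb$ contains $r$, we have $r \in V(H) \setminus B \subseteq V(H')$. For connectivity, observe that by (pp2) the induced subgraph $H[V(H) \cap B]$ is connected and attached to the rest of $H$ by a single edge (the unique element of $\delta_H(B)$); any simple path in $H$ between two vertices outside $B$ that visited $B$ would need two distinct crossing edges, impossible under $|\delta_H(B)|=1$, so paths between vertices of $V(H')$ survive the deletion.

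For (pp2) applied to $H'$, fix an arbitrary $B'' \in \calb$ and invoke laminarity of $\calb$: either $B'' \cap B = \emptyset$, or $B'' \subseteq B$, or $B \subseteq B''$. In the first case, $V(H') \cap B'' = V(H) \cap B''$ and no incident edge was deleted, so the induced subgraph is unchanged. In the second, $V(H') \cap B'' = \emptyset$ and the claim is vacuous. In the third, the vertices $V(H) \cap B$ lie inside $V(H) \cap B''$, and at most one edge of $H[V(H) \cap B'']$ crosses $V(H) \cap B$, since any such edge belongs to $\delta_H(B)$. Combining this with the connectivity of $H[V(H) \cap B'']$ and of $H[V(H) \cap B]$ given by (pp2), the same bypass argument as in the proof of (pp1) shows that $H'[V(H') \cap B'']$ remains connected.

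The one subtle point is the third subcase: when $V(H) \cap B'' \setminus B$ is non-empty, we need the unique edge in $\delta_H(B)$ to actually have both endpoints inside $V(H) \cap B''$ (otherwise the pendant $V(H) \cap B$ could not be attached to the rest of $V(H) \cap B''$ at all). This follows from the connectivity of $H[V(H) \cap B'']$ together with the connectivity of $H[V(H) \cap B]$, which forces exactly one edge between the two pieces. Everything else is a routine consequence of laminarity and the degree hypothesis $|\delta_H(B)| = 1$.
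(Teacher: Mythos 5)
Your proof addresses a different, and strictly stronger, statement than the one the paper is actually making. In the lemma, the subset $B$ in \ref{subsec:PP:lemma:PP_inv:2} is not universally quantified over $\calb$: it is the subset \emph{selected at the beginning of the current iteration} (the one with $|\delta_H(B)| = 1$ in line~\ref{pp-while} of Algorithm~\ref{subsec:PP:alg:PP}). Under that reading, \ref{subsec:PP:lemma:PP_inv:2} is an immediate consequence of \ref{subsec:PP:lemma:PP_inv:1} together with the degree condition — a connected graph cut by a single edge has both sides induce connected subgraphs — and the paper dispatches it in one sentence, with no induction and no laminarity. This is also how \ref{subsec:PP:lemma:PP_inv:2} is used later (e.g., in the proof of Lemma~\ref{subsec:PV:lemma:compute-path}, to conclude each $D_i = V(H_i) \cap B_i$ is connected).

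By contrast, your version — ``$H[V(H)\cap B'']$ is connected for every $B'' \in \calb$'' — does not follow from the lemma's stated hypotheses. The only assumptions on the input are that $H$ is a connected subgraph of $G$ containing $r$ and that $\calb$ is a laminar collection of subsets avoiding $r$; your argument needs the further hypothesis that the input is $\calb$-connected, which you import from the surrounding context of $\GW$. Without that extra assumption the universal claim is simply false: take $H$ the path $v_1$--$v_2$--$v_3$ with $r = v_2$ and $\calb = \{\{v_1,v_3\}\}$; then $H$ is connected and $|\delta_H(\{v_1,v_3\})| = 2$, the loop never fires, yet $H[\{v_1,v_3\}]$ is disconnected. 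Beyond that, your base-case justification leans on the claim that finite unions of $\calb$-connected graphs are $\calb$-connected, which is not true in general and would need a dedicated argument for each of the several call sites of $\PP$ in the paper. The inductive step you give (using laminarity and the bypass argument) is sound for the stronger statement, but it is machinery the lemma does not need: re-read the lemma with $B$ bound to the iteration's chosen subset, and the whole of \ref{subsec:PP:lemma:PP_inv:2} collapses to one line.
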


\begin{proof}
	Observe that $H$ is connected and $r \in V(H)$ when the algorithm starts. Consider the subset $B$ chosen at the beginning of an iteration, and assume that~\ref{subsec:PP:lemma:PP_inv:1} is valid. Since ${|\delta_H(B)| = 1}$ and $r \notin B$, deleting $B$ does not disconnect $H$ nor removes the root~$r$ from~$H$. Thus, \ref{subsec:PP:lemma:PP_inv:1} holds at the end of the iteration.
	For~\ref{subsec:PP:lemma:PP_inv:2}, observe that, since $H$ is connected and $|\delta_H(B)| = 1$, $V(H) \cap B$ must induce a connected subgraph.
\end{proof}

\begin{mycorollary}
	\label{subsec:PP:coro:PP_works}
	Let $H'$ be the graph output by $\PP(H,\calb)$.
	Then, ${H' \subseteq H}$, $H'$~is~pruned with~$\calb$, and $r \in V(H)$.
\end{mycorollary}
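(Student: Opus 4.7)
The plan is to derive the three properties essentially by reading off the invariants from Lemma~\ref{subsec:PP:lemma:PP_inv} together with the termination condition of the while loop of Algorithm~\ref{subsec:PP:alg:PP}. Since each iteration only deletes vertices/edges of $H$ and never adds anything, a straightforward induction on the iteration count shows $H' \subseteq H$; I would state this inclusion first as the easiest of the three claims.

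Next I would address the ``pruned'' property. The while loop at line~\ref{pp-while} terminates precisely when there is no $B \in \calb$ with $|\delta_H(B)| = 1$. Because $H'$ is exactly the value of $H$ at the moment of termination, this is literally the definition of being pruned with $\calb$, so no extra argument is needed beyond citing the stopping criterion.

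For $r \in V(H')$ (which is presumably what the corollary intends; the printed $r \in V(H)$ reads like a minor typo since $r \in V(H)$ is an input assumption), I would appeal to invariant~\ref{subsec:PP:lemma:PP_inv:1}. That invariant is stated ``at the beginning of any iteration''; the only subtlety is that I must also guarantee it at termination. I would handle this by observing that Lemma~\ref{subsec:PP:lemma:PP_inv} actually proves the stronger statement that \ref{subsec:PP:lemma:PP_inv:1} is preserved by each iteration, so it holds for the final value of $H$ regardless of whether a further iteration is executed. In particular $r \in V(H')$ and $H'$ is connected.

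There is no real obstacle here: the corollary is a packaging of Lemma~\ref{subsec:PP:lemma:PP_inv} and the loop's stopping condition. The only thing to be careful about is the difference between ``invariant at the start of an iteration'' and ``property at termination''; I would briefly note that the inductive argument in the proof of Lemma~\ref{subsec:PP:lemma:PP_inv} establishes preservation, so the invariants remain valid when the loop exits and hence transfer to $H'$.
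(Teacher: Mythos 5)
Your proposal is correct and follows essentially the same route as the paper's proof: the first two claims are read directly off the algorithm's construction and stopping condition, and the third follows from invariant~\ref{subsec:PP:lemma:PP_inv:1}. Your extra remark about the invariant persisting at termination, and your observation that $r \in V(H)$ is likely a typo for $r \in V(H')$, are both accurate and harmless additions.
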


\begin{proof}
	By construction,  $H' \subseteq H$ and $H'$ is pruned with~$\calb$. Invariant~\ref{subsec:PP:lemma:PP_inv:1} implies $r \in H'$.
\end{proof}

Next, we give a structural result about the pruning-phase.
It is an auxiliary lemma which implies a series of monotonic properties of the pruning algorithm.

\begin{mylemma}
	\label{subsec:PP:lemma:structure}
	Consider connected graphs $D$ and $H$. Assume that $D$ is pruned with~$\calb$ and let $H'$ be the graph output by $\PP(H,\calb)$.
	If $D \subseteq H$, then $D \subseteq H'$.
\end{mylemma}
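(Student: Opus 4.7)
My plan is to proceed by straightforward induction on the iterations of $\PP(H,\calb)$. Set $H_0 = H$, and for each $j \geq 0$, let $H_{j+1}$ be the graph obtained from $H_j$ by deleting the subset $B_j \in \calb$ selected in iteration $j$, so that $H' = H_m$ where $m$ is the final iteration. I would then show $D \subseteq H_j$ for every $j$ by induction on $j$; the case $j = m$ gives the statement.

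The base case $j = 0$ is the hypothesis $D \subseteq H$. For the inductive step, assume $D \subseteq H_j$. The while-loop condition guarantees that $B_j$ has $|\delta_{H_j}(B_j)| = 1$, and it suffices to show $V(D) \cap B_j = \emptyset$: in that case no vertex of $D$ lies in $B_j$ and no edge of $D$ is incident to $B_j$, so the deletion of $B_j$ does not touch anything of $D$ and hence $D \subseteq H_{j+1}$.

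The heart of the argument is the step showing $V(D) \cap B_j = \emptyset$. Assume for contradiction that $V(D) \cap B_j \ne \emptyset$. By Lemma~\ref{subsec:PP:lemma:PP_inv}, $H_j$ is connected and $r \in V(H_j)$; since $\calb$-subsets exclude $r$, in particular $r \notin B_j$, and because $D$ is a connected subgraph of $H_j$ that contains $r$ (the regime in which this lemma is invoked), we have $V(D) \not\subseteq B_j$. Connectedness of $D$ then forces an edge of $D$ to cross $B_j$, giving $|\delta_D(B_j)| \geq 1$. The hypothesis that $D$ is pruned with $\calb$ bans $|\delta_D(B_j)| = 1$, so $|\delta_D(B_j)| \geq 2$. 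Finally, $D \subseteq H_j$ implies $\delta_D(B_j) \subseteq \delta_{H_j}(B_j)$, hence $|\delta_{H_j}(B_j)| \geq 2$, contradicting the choice of $B_j$.

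I expect the only delicate point to be justifying $V(D) \not\subseteq B_j$, which is the ingredient that lets connectedness of $D$ produce a crossing edge. This is handled via the standing context that $D$ reaches the root and $r \notin B_j$. Everything else is a parity-style dichotomy built into the definitions: the pruning rule fires only at degree $1$, while the \emph{pruned} hypothesis on $D$ forbids degree $1$, so any common crossing necessarily contributes at least two edges to $\delta_{H_j}(B_j)$, which is incompatible with $|\delta_{H_j}(B_j)| = 1$.
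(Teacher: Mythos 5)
Your proof is correct and takes essentially the same inductive route as the paper: bound $|\delta_D(B_j)| \le |\delta_{H_j}(B_j)| = 1$, use the pruned hypothesis to rule out degree $1$, and conclude that $B_j$ is disjoint from $V(D)$ so the deletion leaves $D$ intact. The step you flag as delicate — ruling out $V(D) \subseteq B_j$ via $r \in V(D)$ and $r \notin B_j$ — is genuinely needed (the lemma is false without it) and is glossed over in the paper's own proof, which jumps from $|\delta_D(B_j)| = 0$ to $B_j \cap V(D) = \emptyset$ ``because $D$ is connected''; your explicit appeal to the standing context about the root is a clarification, not a gap.
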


\begin{proof}
	Assume that in the execution that output~$H'$, the algorithm executed~$\ell$ iterations, and let $B_i$ and $H_i$ be the values assigned to variables $B$~and~$H$ at the beginning of iteration~$i$. Also, let $H_{\ell+1} = H'$.

	We show that $D \subseteq H_i$ by induction on $i$. Clearly $D  \subseteq H = H_1$. Then, assume that $D \subseteq H_i$ for some $i \ge 1$.
	It follows that $|\delta_{D}(B_i)| \le |\delta_{H_i}(B_i)|$. But $|\delta_{H_i}(B_i)| = 1$ by the choice of $B_i$, and so $|\delta_{D}(B_i)| \le 1$.
	Since $D$ is pruned with $\calb$, we know that $|\delta_{D}(B_i)| \ne 1$, and thus $|\delta_{D}(B_i)| = 0$.
	This implies that $B_i \cap V(D) = \emptyset$, because $D$ is connected.
	To complete the induction, observe that $D = D  - B_i \subseteq H_i - B_i = H_{i+1}$.
\end{proof}

The first property states that pruning using~$\calb$ determines a unique pruned graph. An implication of this lemma is that the output of $\PP(H,\calb)$ is invariant to the order in which subsets of $\calb$ are considered.

\begin{mycorollary}
	\label{subsec:PP:coro:PP_order}
	Let $H_1$ and $H_2$ be the graphs output by two executions of $\PP(H,\calb)$. Then $H_1 = H_2$.
\end{mycorollary}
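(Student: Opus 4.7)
The plan is to derive the corollary as an essentially immediate consequence of Lemma~\ref{subsec:PP:lemma:structure}, applied symmetrically to the two executions.

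First, I would record what Corollary~\ref{subsec:PP:coro:PP_works} together with invariant~\ref{subsec:PP:lemma:PP_inv:1} already give us about each output $H_j$ ($j \in \{1,2\}$): the graph $H_j$ is connected, contains the root $r$, is pruned with $\calb$, and satisfies $H_j \subseteq H$. These are exactly the hypotheses on $D$ required by the structural lemma.

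Next, I would apply Lemma~\ref{subsec:PP:lemma:structure} twice. Taking $D := H_1$ and considering the execution of $\PP(H,\calb)$ that produced $H_2$, the hypotheses ($D$ connected, pruned with $\calb$, $D \subseteq H$) are satisfied, so the lemma yields $H_1 \subseteq H_2$. Exchanging the roles of $H_1$ and $H_2$ (i.e.\ taking $D := H_2$ and considering the execution that produced $H_1$) yields $H_2 \subseteq H_1$. Combining the two inclusions gives $H_1 = H_2$, as desired.

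There is no real obstacle here: the whole argument is a symmetric double application of the structural lemma, and the only thing to verify carefully is that each output can indeed play the role of $D$, which is exactly what Corollary~\ref{subsec:PP:coro:PP_works} and invariant~\ref{subsec:PP:lemma:PP_inv:1} are set up to guarantee. I would write the proof in a single short paragraph.
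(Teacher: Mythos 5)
Your argument is identical to the paper's: apply Lemma~\ref{subsec:PP:lemma:structure} with $D = H_1$ against the execution yielding $H_2$ to get $H_1 \subseteq H_2$, and symmetrically to get $H_2 \subseteq H_1$. The extra care you take in citing Corollary~\ref{subsec:PP:coro:PP_works} and invariant~\ref{subsec:PP:lemma:PP_inv:1} to verify the hypotheses on $D$ is sound and matches what the paper does implicitly.
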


\begin{proof}
	Observe that $H_1$ is connected and pruned with~$\calb$, and that $H_2$ is the output of $\PP(H,\calb)$. Since $H_1 \subseteq H$, by Lemma~\ref{subsec:PP:lemma:structure}, $H_1 \subseteq H_2$.
	Symmetrically, we have $H_2 \subseteq H_1$, and thus $H_1 = H_2$.
\end{proof}

Next corollary states that pruning a fixed graph $H$ using distinct collections reverses the subcollection relation.

\begin{mycorollary}
	\label{subsec:PP:coro:PP_subcollection_pruning}
	Let $H_1$ and $H_2$ be the graphs output by $\PP(H,\calb_1)$ and $\PP(H,\calb_2)$, respectively. If $\calb_1 \subseteq \calb_2$, then $H_2 \subseteq H_1$.
\end{mycorollary}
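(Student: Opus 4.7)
The plan is to apply the structural Lemma~\ref{subsec:PP:lemma:structure} directly, with the pruned graph $H_2$ playing the role of $D$ and the input pair $(H,\calb_1)$ playing the role of the pruning input. Concretely, I would invoke the lemma with $D := H_2$, input graph $H$ unchanged, pruning collection $\calb_1$, and $H' := H_1 = \PP(H,\calb_1)$, and then conclude $H_2 = D \subseteq H' = H_1$.

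To do so I would verify the three hypotheses of Lemma~\ref{subsec:PP:lemma:structure} in turn. First, $D = H_2$ is connected: this follows from Corollary~\ref{subsec:PP:coro:PP_works} applied to the execution $\PP(H,\calb_2)$ (which inherits connectedness from invariant~\ref{subsec:PP:lemma:PP_inv:1}). Second, $D \subseteq H$: this is the containment $H_2 \subseteq H$, again by Corollary~\ref{subsec:PP:coro:PP_works}. Third, and this is the only step that uses the assumption $\calb_1 \subseteq \calb_2$: I must show that $H_2$ is pruned with $\calb_1$. By Corollary~\ref{subsec:PP:coro:PP_works}, $H_2$ is pruned with $\calb_2$, meaning $|\delta_{H_2}(B)| \ne 1$ for every $B \in \calb_2$. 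Restricting to the subcollection $\calb_1 \subseteq \calb_2$ preserves this property, so $|\delta_{H_2}(B)| \ne 1$ for every $B \in \calb_1$ as well, i.e.\ $H_2$ is pruned with $\calb_1$.

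With all three hypotheses in place, Lemma~\ref{subsec:PP:lemma:structure} yields $H_2 \subseteq H_1$, which is the desired conclusion. There is no real obstacle here: the argument is essentially an observation that ``being pruned'' is a monotone property with respect to shrinking the collection, combined with the already-established fact that the pruning output is the unique minimal connected pruned subgraph (Corollary~\ref{subsec:PP:coro:PP_order} is morally the same consequence of Lemma~\ref{subsec:PP:lemma:structure}). The only point to be a little careful about is not inverting the direction of the subcollection inclusion when translating from $\calb_2$-pruned to $\calb_1$-pruned; since the condition $|\delta_H(B)| \ne 1$ is imposed for each $B$ individually, shrinking the index set $\calb$ only weakens the requirement, which is exactly what we need.
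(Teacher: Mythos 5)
Your proof is correct and is essentially identical to the paper's: both observe that $H_2$ is connected and pruned with $\calb_2$, hence (since $\calb_1 \subseteq \calb_2$) pruned with $\calb_1$, and then apply Lemma~\ref{subsec:PP:lemma:structure} with $D = H_2$ and $H' = H_1$ to conclude $H_2 \subseteq H_1$. The only difference is that you spell out the verification of the hypotheses more explicitly, which is fine.
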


\begin{proof}
	Observe that $H_2$ is connected and pruned with~$\calb_2$, and thus it is pruned with~$\calb_1$ as well.
	Since ${H_2 \subseteq H}$, by Lemma~\ref{subsec:PP:lemma:structure}, $H_2 \subseteq H_1$.
\end{proof}

The last corollary considers graphs $D$ which are ``sandwiched'' between a graph $H$ and the pruned subgraph $H'$.
Intuitively, one may interpret this corollary as stating that pruning a partially pruned graph $D$ leads to the fully pruned graph $H'$.

\begin{mycorollary}
	\label{subsec:PP:coro:PP_subgraph_pruning}
	Consider connected graphs $D$ and $H$. Let $D'$ and $H'$ be the graphs output by $\PP(D,\calb)$ and $\PP(H,\calb)$, respectively.
	If $H' \subseteq D \subseteq H$, then $H' = D'$.
\end{mycorollary}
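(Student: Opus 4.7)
The plan is to establish the two inclusions $H' \subseteq D'$ and $D' \subseteq H'$ by two direct invocations of Lemma~\ref{subsec:PP:lemma:structure}, which is exactly the structural fact tailored for this purpose. There is no real obstacle here; the main thing to articulate is which half of the sandwich hypothesis $H' \subseteq D \subseteq H$ is used in each direction.

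For the inclusion $H' \subseteq D'$, I would invoke Lemma~\ref{subsec:PP:lemma:structure} with $H'$ playing the role of the already-pruned subgraph $D$ in that lemma, and our $D$ playing the role of the host graph $H$. The hypotheses hold: by Corollary~\ref{subsec:PP:coro:PP_works} (together with invariant~\ref{subsec:PP:lemma:PP_inv:1}) $H'$ is connected and pruned with $\calb$; our $D$ is connected by assumption; and $H' \subseteq D$ is given. The lemma's conclusion then yields $H' \subseteq \PP(D,\calb) = D'$. Note that only the half $H' \subseteq D$ of the sandwich is used here.

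For the reverse inclusion $D' \subseteq H'$, I would instead let $D'$ play the role of the pruned subgraph and keep $H$ as the host graph. Again by Corollary~\ref{subsec:PP:coro:PP_works}, $D'$ is connected and pruned with $\calb$; $H$ is connected by assumption; and chaining $D' \subseteq D \subseteq H$ (this is where the other half of the sandwich is used, via $D \subseteq H$) gives the required containment. Lemma~\ref{subsec:PP:lemma:structure} then delivers $D' \subseteq H'$.

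Combining the two inclusions yields $H' = D'$, completing the proof. The whole argument is essentially a bookkeeping exercise: Lemma~\ref{subsec:PP:lemma:structure} already says that any connected, $\calb$-pruned subgraph survives a pruning pass, and the outputs $H'$ and $D'$ are themselves such subgraphs, so the sandwich hypothesis lets one apply the lemma symmetrically in both directions.
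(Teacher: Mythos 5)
Your proof is correct and follows essentially the same route as the paper: two symmetric applications of Lemma~\ref{subsec:PP:lemma:structure}, first with $H'$ as the pruned subgraph inside $D$, then with $D'$ as the pruned subgraph inside $H$ (via $D' \subseteq D \subseteq H$). The only difference is that you spell out the sources of the hypotheses (Corollary~\ref{subsec:PP:coro:PP_works}) a bit more explicitly, which the paper leaves implicit.
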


\begin{proof}
	Observe that $H'$ is connected and pruned with~$\calb$.
	Since ${H' \subseteq D}$, by Lemma~\ref{subsec:PP:lemma:structure}, $H' \subseteq D'$.
	Analogously, observe that $D'$ is connected and pruned with~$\calb$.
	Since ${D' \subseteq D \subseteq H}$, by Lemma~\ref{subsec:PP:lemma:structure}, $D' \subseteq H'$.
	Therefore, $H = D$.
\end{proof}

\subsection{Modified Goemans-Williamson algorithm}

The modified \emph{Goemans-Williamson} algorithm wraps up the growth and the pruning-phases and is denoted by $\GW(\pot,\tblist)$. A listing is given in Algorithm~\ref{subsec:GW:alg:GW}.
First, the algorithm executes $\GP(\pot,\tblist)$ to obtain a pair $(T,\calb)$. Then, it executes $\PP(T,\calb)$ and returns the pruned tree $\hatt$.

\begin{algorithm}[H]
	\DontPrintSemicolon
	\caption{$\GW(\pot,\tblist)$}
	\label{subsec:GW:alg:GW}

	Let $(T,\calb)$ be the tuple returned by $\GP(\pot,\tblist)$\;
	Let $\hatt$ be the tree returned by $\PP(T,\calb)$\;

	\Return{$\hatt$}\;
\end{algorithm}

One interesting property of $\GW(\pot,\tblist)$ is that if $\pot$ it too large, then no subset is ever processed, and the returned tree spans the whole set of vertices.
Recall that $\ecost_{E} = \sum_{e \in E} \ecost_e$. Next lemma states that $\ecost_E$ is sufficiently large.

\begin{mylemma}
	\label{subsec:GW:lemma:high_lambda_spans_V_G_r}
	Let $\hatt$ be the output of $\GW(\pot, \tblist)$. If $\pot > \ecost_{E}$, then ${V(\hatt) = V}$.
\end{mylemma}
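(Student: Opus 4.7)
The plan is to show that if $\pot > \ecost_E$, then no subset is ever added to $\calb$ during $\GP(\pot,\tblist)$. In that case the while-loop of Algorithm~\ref{subsec:PP:alg:PP} performs no iteration, so $\hatt = T$, and Corollary~\ref{subsec:GP:corollary:GP_output} immediately gives $V(\hatt) = V(T) = V$.

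The central step is an auxiliary bound: at every moment during the growth-phase, $\sum_{L \subseteq V} y_L \le \ecost_E$. I would derive this by summing the edge inequality from invariant~\ref{subsec:GP:lemma:GP_inv:y_respects} over all edges and swapping the order of summation, obtaining
\[
\ecost_E \;=\; \sum_{e \in E} \ecost_e \;\ge\; \sum_{e \in E} \sum_{L\,:\, e \in \delta(L)} y_L \;=\; \sum_{L \subseteq V} y_L \cdot |\delta(L)|.
\]
Then I would argue that $y_L$ can be positive only when $\emptyset \ne L \subsetneq V$: invariant~\ref{subsec:GP:lemma:GP_inv:L_is_laminar} forbids $\emptyset \in \call$, and $V$ is added to $\call$ only in the merging step that produces the unique maximal component, after which, at the very next iteration, $\cala = \{V\}$ and the algorithm halts before $y_V$ is ever incremented. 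For every other $L$ with $\emptyset \ne L \subsetneq V$, connectedness of $G$ gives $|\delta(L)| \ge 1$, so the displayed double-count yields $\sum_L y_L \le \ecost_E$.

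Combining this bound with the hypothesis, for every non-empty $L \subseteq V$,
\[
\vpen^\pot_L \;\ge\; \pot\,|L| \;\ge\; \pot \;>\; \ecost_E \;\ge\; \sum_{S \subseteq V} y_S \;\ge\; \sum_{S \subseteq L} y_S,
\]
so inequality~(\ref{sec:preliminaries:restriction_2}) is strict at every $L$. Hence no subset ever becomes tight during $\GP$, the branch in line~\ref{alg:mGP_process_subset} is never taken, and $\calb = \emptyset$ upon termination, completing the plan. The only delicate point is confirming $y_V = 0$ (since the edge double-count is vacuous for $L = V$, where $\delta(V) = \emptyset$); this is guaranteed by the termination test being performed at the start of each iteration, before any dual increment takes place.
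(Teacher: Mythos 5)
Your proposal is correct, and it takes a genuinely different route from the paper. The paper argues by contradiction: supposing some subset $L$ is the first to be processed, at iteration $\ell$, it observes that because no subset has yet been processed every vertex lies in exactly one active moat at each iteration $i\le\ell$, so the quantities $\Gamma_v = \sum_{S\ni v}y_S$ are all equal; it then picks a single boundary edge $e\in\delta(L)$ to get $\Gamma_u\le \ecost_e$ and bounds $\vpen^\pot_L$ by $|L|\,\ecost_e$. Your argument instead derives the uniform invariant $\sum_{L}y_L\le \ecost_E$ by summing the edge packing constraints $\sum_{L:\,e\in\delta(L)}y_L\le \ecost_e$ over all of $E$, then dropping the $|\delta(L)|\ge 1$ factor; the delicate point you flag (that $y_V=0$ and $y_\emptyset=0$ so every $L$ carrying dual is a proper nonempty subset and hence crossed by at least one edge) is indeed what makes this valid, and it is correctly handled by the early termination check. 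From there the per-subset slack $\vpen^\pot_L\ge\pot>\ecost_E\ge\sum_{S\subseteq L}y_S$ is immediate, strict, and holds at every moment of the execution, so $\calb$ stays empty without any case analysis. What this buys you: the paper's argument depends on the fragile fact that all $\Gamma_v$ are equal, a property tied specifically to the round before the first subset processing, whereas your potential bound is a monotone, global fact that is independent of processing order. It also avoids the factor $|L|$ that appears in the paper's final chain, which is an advantage worth noting.
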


\begin{proof}
	We claim that $\calb = \emptyset$.
	If this is the case, by Corollary~\ref{subsec:GP:corollary:GP_output}, we have that $\GP(\pot, \tblist)$ returns the tuple $(T,\emptyset)$, thus no subset is deleted in the pruning-phase, and we get $\hatt = T$ and $V(\hatt) = V$.

	Now, we show that $\calb = \emptyset$.
	Suppose, for a contradiction, that the first processed subset is $L$ and it is processed in iteration $\ell$.
	For a vertex $v$, denote by~$\Gamma_v$ the value of $\sum_{S: v \in S} y_S$ at the end of iteration $\ell$. Since no subset was processed before iteration $\ell$, each vertex is contained in exactly one active subset in each iteration $i \le \ell$.
	%
	It follows that, for any two vertices $u$~and~$v$, we have $\Gamma_u = \Gamma_v$.

	Since no subset containing the root~$r$ is ever processed (invariant~\ref{subsec:GP:lemma:GP_inv:root_never_tight}), $r \notin L$, and since $G$ is connected, there must exist an external edge $e$ with extremes $u \in L$ and $w \notin L$.
	Since $y$ respects~$\ecost$ (invariant~\ref{subsec:GP:lemma:GP_inv:y_respects}), and no $S \in \call$ contains both $u$ and $w$ at the end of iteration $\ell$, we have
	\[
	\Gamma_u + \Gamma_w
	=
	\sum_{S: e \in \delta(S)} y_S
	\le
	\ecost_e.
	\]
	Then, since $L$ becomes tight at iteration~$\ell$,
	\[
	\vpen_L^\pot
	=
	\sum_{S:S \subseteq L} y_S
	\le
	\sum_{v \in L} \Gamma_v
	=
	|L| \,  \Gamma_u
	\le
	|L| \ecost_e
	\le \ecost_E,
	\]
	which is a contradiction because $\vpen_L^\pot \ge \pot > \ecost_E$.
\end{proof}


\section{The threshold-tuple}
\label{sec:threshold-tuple}

We execute the modified Goemans-Williamson algorithm using potential zero and passing an empty tie-breaking list, i.e., we execute $\GW(0,\emptyset)$.
If the returned tree spans at least $k$ vertices, then this tree is a $2$-approximate solution, as stated in Lemma~\ref{appendix:bound_simple:lemma:2-approx}.
The proof is adapted from~Feofiloff~\etal~\cite{Feofiloff2010} and is given in Appendix~\ref{appendix:bound_simple}.
\begin{restatable}{mylemma}{appendixboundsimple}
	\label{appendix:bound_simple:lemma:2-approx}
	Let $\hatt$ be the tree returned by $\GW(0,\emptyset)$, and~$T^*$ be an optimal solution. Then,
	\begin{align*}
		\ecost_{E(\hatt)} + 2 \vpen_{V \setminus V(\hatt)} \leq 2 \left( \ecost_{E({T^*})} + \vpen_{V \setminus V(T^*)} \right).
	\end{align*}
\end{restatable}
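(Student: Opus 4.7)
The plan is to adapt the classical Goemans--Williamson primal-dual analysis for \pcst. With $\pot = 0$ we have $\vpen^\pot = \vpen$; let $\call, \calb, T, y$ be the outputs of $\GP(0, \emptyset)$, so that $y$ respects $\ecost$ and $\vpen$ by invariant~\ref{subsec:GP:lemma:GP_inv:y_respects}, and let $\hatt = \PP(T,\calb)$. The first step is to rewrite the LHS in terms of $y$: tightness of the edges of $\hatt$ (invariant~\ref{subsec:GP:lemma:GP_inv:edge_in_F_are_tight}) gives $\ecost_{E(\hatt)} = \sum_{L \in \call} |\delta_\hatt(L)|\, y_L$; and since the pruning-phase removes exactly the maximal subsets in $\calb$ disjoint from $V(\hatt)$, tightness of subsets in $\calb$ (invariant~\ref{subsec:GP:lemma:GP_inv:subsets_in_B_are_tight}) gives $\vpen_{V \setminus V(\hatt)} = \sum_{L \in \call,\, L \cap V(\hatt) = \emptyset} y_L$.

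The second step is the iteration-wise charging. Writing $y_L = \sum_{t\,:\,L \in \cala_t} \Delta_t$ in terms of the increments $\Delta_t$ and the active collections $\cala_t$ at each iteration of $\GP$, and calling $L$ \emph{alive} if $L \cap V(\hatt) \neq \emptyset$ and \emph{dead} otherwise, the LHS becomes $\sum_t \Delta_t \bigl(\sum_{L \in \cala_t,\, L \text{ alive}} |\delta_\hatt(L)| + 2\,|\{L \in \cala_t : L \text{ dead}\}|\bigr)$. In the quotient of $\hatt$ by the partition $\call_t^*$, every alive inactive super-node $L \in \calb_t$ has degree at least~$2$: degree~$1$ is forbidden by the pruning rule, and degree~$0$ would force $V(\hatt)$ to sit entirely inside $L$, contradicting invariant~\ref{subsec:GP:lemma:GP_inv:root_never_tight} since $r \in V(\hatt)$. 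Subtracting these contributions from the sum-of-degrees identity on the quotient tree implies that the per-iteration contribution is at most $2\Delta_t(|\cala_t| - 1)$, and since exactly one subset of $\cala_t$ contains $r$, summing over $t$ yields
\[
\ecost_{E(\hatt)} + 2\, \vpen_{V \setminus V(\hatt)} \leq 2 \sum_{L \in \call,\, r \notin L} y_L.
\]

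The final step applies Lemma~\ref{subsec:approx:lemma:opt_lower_bound} to the refined laminar collection $\call' = \call \setminus \{L \in \call : V(T^*) \subseteq L \subsetneq V\}$; in $\call'$ the minimal subset containing $V(T^*)$ is $V$, so the lemma with $\pot = 0$ gives $\sum_{L \in \call' \setminus \{V\}} y_L \leq \ecost_{E(T^*)} + \vpen_{V \setminus V(T^*)}$. Since every subset removed from $\call$ contains $V(T^*) \ni r$, we have $\{L \in \call : r \notin L\} \subseteq \call' \setminus \{V\}$, and combining with the bound from the second step proves the inequality. The main obstacle will be the degree argument in the second step, which requires combining the pruning rule with the invariant that no subset of $\calb$ contains the root in order to guarantee that alive inactive super-nodes have degree at least~$2$.
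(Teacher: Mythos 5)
Your argument is correct and is essentially the paper's own proof, just repackaged: the paper splits the same per-iteration charging into two auxiliary lemmas (one for $\ecost_{E(\hatt)}$, one for $\vpen_{V \setminus V(\hatt)}$) and then invokes Lemma~\ref{subsec:approx:lemma:opt_lower_bound} on the original $\call$, whereas you fuse the two charges into a single sum and invoke the lemma on the refined collection $\call'$ — a cosmetic difference that lands in the same place. Two small points worth tightening. First, the penalty step should be an inequality, not an equality: the maximal sets $B_1,\dots,B_m \in \calb$ disjoint from $V(\hatt)$ do partition $V\setminus V(\hatt)$ and tightness gives $\vpen_{V\setminus V(\hatt)} = \sum_i \sum_{L\in\call[B_i]} y_L$, but $\bigcup_i \call[B_i]$ can be a strict subcollection of $\{L\in\call : L\cap V(\hatt)=\emptyset\}$ (e.g., a non-processed $L\in\call\setminus\calb$ that strictly contains two of the $B_i$ and carries positive dual), so you only get "$\le$" — which is all you use. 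Second, the degree-sum identity you invoke presumes that the quotient of $\hatt$ by $\call_t^*$ is a tree; this needs the observation that $T$ is $\call$-connected (invariant~\ref{subsec:GP:lemma:GP_inv:edge_in_F_are_tight}) and $\hatt\subseteq T$, so that contracting each alive maximal set cannot create a cycle — a step the paper states explicitly and you should too.
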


In the remaining of this section, we assume that executing $\GW(0,\emptyset)$ returns a tree spanning less than $k$ vertices.
Observe that Lemma~\ref{subsec:GW:lemma:high_lambda_spans_V_G_r} implies that using potential greater than $\ecost_E$ leads to a tree with at least $k$ vertices for any tie-breaking list $\tblist$.
We would like to find $\pot$ and associated $\tblist$ such that the returned tree spans exactly $k$ vertices, but it might be the case that no such pair exists.
Instead, our goal will be finding a special tuple $(\pot,\tblist)$, called the \emph{threshold-tuple}, which will be defined below.

First, we need to introduce some notation.
Note that, in the $i$-th iteration of $\GP(\pot,\tblist)$, the edge or subset corresponding to $\tblist_i$ is not necessarily tight.
We say that a tie-breaking list $\tblist$ is \emph{respected} by potential $\pot$ if the sequence of edges and subsets of $V$ processed in the first $|\tblist|$ iterations of $\GP(\pot,\tblist)$  corresponds to $\tblist$.
Also, we denote by $\sublist$ the prefix of~$\tblist$ with size~${|\tblist| - 1}$.

\begin{mydefinition}
	Let $\hatt$ be the tree returned by $\GW(\pot,\tblist)$, and $\hatt'$ be the tree returned by $\GW(\pot,\sublist)$.
	We say that $(\pot,\tblist)$ is a \emph{threshold-tuple} if
	\begin{enumerate}[itemsep=0pt,topsep=-3pt, label=(\roman*),ref={\mbox{\rm $(\roman*)$}}]
		\item \label{subsec:threshold-tuple:theshold-tuple1} $\tblist$ is respected by~$\pot$; and
		\item \label{subsec:threshold-tuple:theshold-tuple2} $|V(\hatt)| \geq k > |V(\hatt')|$ or $|V(\hatt)| < k \leq |V(\hatt')|$.
	\end{enumerate}
\end{mydefinition}

Given a threshold-tuple, one may obtain a pair of trees, one with less than~$k$ vertices, and the other with at least~$k$ vertices.
These trees share many of their structures, and this will be used in Section~\ref{sec:find_solution} to construct a tree which spans exactly $k$ vertices and is a $2$-approximation.
In Subsection~\ref{subsec:threshold-tuple}, we study the properties of a threshold-tuple and, in Subsection~\ref{subsec:TS}, we show how it can be computed in polynomial time.


\subsection{Properties of a threshold-tuple}
\label{subsec:threshold-tuple}

We start by noticing that the executions of $\GP(\pot,\tblist)$ and $\GP(\pot,\sublist)$ are identical up to the beginning of iteration $|\tblist|$.

\begin{mylemma}
	\label{subsec:threshold-tuple:lemma:if_nu_is_respected_then_nu'_is_respected}
	If $\tblist$ is respected by $\pot$, then $\sublist$ is respected by $\pot$.
	Also, at the beginning of the iteration~$|\tblist|$ of $\GP(\pot,\tblist)$ and $\GP(\pot,\sublist)$, the variables \mbox{$F$, $\call$, $\calb$ and $y$} are identical.
\end{mylemma}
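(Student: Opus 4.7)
The plan is to prove both claims simultaneously by induction on the iteration index. Concretely, I will show that for every $i = 1, 2, \dots, |\tblist|-1$, the values of $F$, $\call$, $\calb$, $y$ at the beginning of iteration $i$ of $\GP(\pot,\tblist)$ coincide with those at the beginning of iteration $i$ of $\GP(\pot,\sublist)$, and that the event processed in iteration $i$ is the same in both executions, namely $\tblist_i = \sublist_i$. Applying this through iteration $|\tblist|-1$ then yields that the state at the beginning of iteration $|\tblist|$ is the same in both runs, and that $\sublist$ is respected by $\pot$.

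The base case is immediate, since both executions initialize $F$, $\call$, $\calb$, $y$ identically. For the inductive step, assume the states coincide at the beginning of some iteration $i \le |\tblist|-1$. Then the set $\cala$, the increment $\Delta$ (and hence the updated $y$), the set $\caly$ of tight external edges and the collection $\calz$ of tight active subsets are all the same in both executions. The key observation is the following: since $\tblist$ is respected by $\pot$, iteration $i$ of $\GP(\pot,\tblist)$ processes $\tblist_i$; inspecting the \mbox{uIf-uElseIf-Else} block of Algorithm~\ref{subsec:GP:alg:GP}, this forces $\tblist_i \in \caly \cup \calz$, because otherwise the algorithm would select the first edge of $\caly$ (or the first subset of $\calz$) instead of $\tblist_i$. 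Hence in $\GP(\pot,\tblist)$ the first branch of the block fires at iteration $i$ and sets $\obj \gets \tblist_i$.

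Now I turn to $\GP(\pot,\sublist)$ at the same iteration $i$. Since $|\sublist|=|\tblist|-1\ge i$ and $\sublist_i = \tblist_i$, and by induction $\caly$ and $\calz$ coincide, we have $\sublist_i \in \caly \cup \calz$, so the first branch also fires and sets $\obj \gets \sublist_i = \tblist_i$. Both executions therefore process the same event and perform the same update to $F$, $\call$ and $\calb$, closing the induction. Pushing the induction up to $i = |\tblist|-1$ gives that the first $|\sublist|$ processed events of $\GP(\pot,\sublist)$ are exactly $\sublist_1,\dots,\sublist_{|\sublist|}$, which is the definition of $\sublist$ being respected by $\pot$, and that $F$, $\call$, $\calb$, $y$ agree at the beginning of iteration $|\tblist|$.

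The only real subtlety, and the step I expect to flag as the main obstacle, is the derivation of $\tblist_i \in \caly \cup \calz$ from the hypothesis that $\tblist$ is respected. This is not a computation but a careful reading of the priority rules \ref{sec:GW:priority1}--\ref{sec:GW:priority3}: $\tblist_i$ can be the processed event in iteration $i$ only if the first branch actually selects it, since the other two branches pick the lexicographically first element of $\caly$ or $\calz$ without reference to $\tblist_i$. Once this is noted, the rest of the argument is a routine bookkeeping induction.
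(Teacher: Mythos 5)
Your proof is correct and rests on the same underlying observation as the paper's: the growth-phase is deterministic and reads only the $i$-th entry of its tie-breaking list in iteration $i$, so the executions $\GP(\pot,\tblist)$ and $\GP(\pot,\sublist)$ are literally identical through iteration $|\tblist|-1$, which gives both the state coincidence at the start of iteration $|\tblist|$ and that $\sublist$ is respected. The paper states this in two sentences where you unroll it into an explicit induction with some extra case analysis about which branch of the uIf/uElseIf/Else block fires; that analysis is harmless but not strictly needed, since determinism plus the identical prefix already forces the two executions to pick the same $\obj$ in every iteration $i \le |\tblist|-1$.
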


\begin{proof}
	The first statement is clear by definition.
	For the second statement, observe that up to the beginning of the iteration~$|\tblist|$, the execution depends only on the $|\tblist|-1$ items of the tie-breaking list.
\end{proof}

The event processed in the iteration $|\tblist|$ of the growth-phase plays an important role in distinguishing the outputs returned by executing $\GW(\pot,\tblist)$ and $\GW(\pot,\sublist)$.
Each such an event corresponds to an edge or a subset.
The next auxiliary result lists the possibilities when $(\pot,\tblist)$ is a threshold-tuple.

\begin{mylemma}
	\label{subsec:threshold-tuple:lemma:threshold_subobject_is_an_edge}
	Assume that $(\pot,\tblist)$ is a threshold-tuple. Let $\obj$ be the edge or subset processed in the $|\tblist|$-th iteration of $\GP(\pot,\tblist)$, and $\obj'$ be the edge or subset processed in $|\tblist|$-th iteration of $\GP(\pot,\sublist)$.
	Then $\obj \neq \obj'$, and $\obj'$ is an edge.
\end{mylemma}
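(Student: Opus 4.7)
The plan is to argue both parts by contradiction: in each case, I show that the failure of the claim forces $\GW(\pot,\tblist)$ and $\GW(\pot,\sublist)$ to return the same pruned tree, which violates condition~\ref{subsec:threshold-tuple:theshold-tuple2} of the threshold-tuple definition. The setup comes from Lemma~\ref{subsec:threshold-tuple:lemma:if_nu_is_respected_then_nu'_is_respected}: both $\GP(\pot,\tblist)$ and $\GP(\pot,\sublist)$ enter iteration $|\tblist|$ with identical $F$, $\call$, $\calb$, and $y$. Hence they compute the same value of $\Delta$ and, after applying it, the same sets $\caly$ of tight external edges and $\calz$ of tight active subsets in that iteration. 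Since $|\sublist| < |\tblist|$, priority rule \ref{sec:GW:priority1} never applies in $\GP(\pot,\sublist)$ at iteration $|\tblist|$, so $\obj'$ is the first edge of $\caly$ when $\caly \neq \emptyset$, and otherwise the first subset of $\calz$.

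For $\obj \neq \obj'$, I would assume $\obj = \obj'$. Then both executions process the very same event in iteration $|\tblist|$ and reach iteration $|\tblist|+1$ in identical states. For every $i > |\tblist|$, neither $\tblist$ nor $\sublist$ has a corresponding entry, so the event selected depends purely on state through priorities \ref{sec:GW:priority2}--\ref{sec:GW:priority3}. A straightforward induction on iterations then shows the two executions remain in sync, returning the same pair $(T,\calb)$; Corollary~\ref{subsec:PP:coro:PP_order} gives $\hatt = \hatt'$, contradicting the threshold-tuple condition.

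For $\obj'$ being an edge, I would suppose instead that $\obj'$ is a subset, which forces $\caly = \emptyset$ at iteration $|\tblist|$. Since $\obj \in \caly \cup \calz$, this means $\obj \in \calz$ is also a subset, and by the previous step $\obj \neq \obj'$, so $|\calz|\geq 2$. The core observation is that processing a subset alters only $\calb$ and leaves $y$, $F$, and $\call$ (hence $\caly$ and the family of tight subsets) intact. Thus, after $\GP(\pot,\tblist)$ processes $\obj$, the next iteration still sees $\caly=\emptyset$ and $\calz\setminus\{\obj\}\neq\emptyset$, so $\Delta_2=0$ hence $\Delta=0$, and the first subset of $\calz\setminus\{\obj\}$ is added to $\calb$. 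Iterating this argument, iterations $|\tblist|,\ldots,|\tblist|+|\calz|-1$ of $\GP(\pot,\tblist)$ augment $\calb$ by exactly the set $\calz$, and the same holds for $\GP(\pot,\sublist)$---the two orderings may differ, but the cumulative effect on $\calb$ is identical. At the start of iteration $|\tblist|+|\calz|$ the two executions are back in a common state and, as in the first step, proceed in lockstep to the end, again yielding $\hatt=\hatt'$ and a contradiction.

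The main obstacle is the bookkeeping in the second step: I need to verify carefully that throughout the cascade of subset processings, $\caly$ remains empty and $\calz$ shrinks by exactly the processed element each iteration, so that $\Delta=0$ is maintained until all of $\calz$ has been absorbed into $\calb$. This rests on the invariants of $\GP$ (Lemma~\ref{subsec:GP:lemma:GP_inv}) and on the very simple fact that $y$ is not touched when $\Delta=0$.
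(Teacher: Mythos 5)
Your proof is correct and takes essentially the same approach as the paper. The only structural difference is that instead of the paper's two-case split on whether $\obj$ is an edge or a subset, you negate the conclusion (assume $\obj'$ is a subset), observe that $\caly = \emptyset$ then forces $\obj$ to also be a subset, and run the same cascading/round argument to reach the contradiction -- a minor reorganization around the identical technical core (processing all of $\calz$ in $\Delta = 0$ steps commutes across orderings and forces the two executions to reconverge).
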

\begin{proof}
	If we had $\obj = \obj'$, then the output of $\GP(\pot,\sublist)$ would be identical to the output of $\GP(\pot,\tblist)$, and $\GW(\pot,\sublist)$ and $\GW(\pot,\tblist)$ would return identical trees. This is not possible, as $(\pot,\tblist)$ is a threshold-tuple, thus indeed $\obj \neq \obj'$.

	First, assume that $\obj$ is an edge.
	Since $\tblist$ is respected by $\pot$, $\obj = \tblist_{|\tblist|}$. Thus, at the beginning of the $|\tblist|$-th iteration of $\GP(\pot,\tblist)$, $\obj$ is a tight edge external in $\call^*$.
	By Lemma~\ref{subsec:threshold-tuple:lemma:if_nu_is_respected_then_nu'_is_respected}, at the beginning of the $|\tblist|$-th iteration of $\GP(\pot,\sublist)$, $\obj$ is also a tight edge external in $\call^*$.
	Observe that in this iteration of $\GP(\pot,\sublist)$, edges have the highest priorities, since the size of the considered tie-breaking list is $|\sublist| < |\tblist|$.
	Because there is at least one tight edge to be processed, this iteration processes an edge, thus $\obj'$ is an edge.

	Now, assume that $\obj$ is a subset. Suppose, for a contradiction, that $\obj'$ is a subset.
	Since a subset is processed in the $|\tblist|$-iteration of $\GP(\pot,\sublist)$, there are no tight edges external in $\call^*$ in the beginning of this iteration.
	Let $L_0,\dots,L_m$ be the collection of tight active subsets in the beginning of the $|\tblist|$-th iteration of $\GP(\pot,\sublist)$, which is the same for $\GP(\pot,\tblist)$ by  Lemma~\ref{subsec:threshold-tuple:lemma:if_nu_is_respected_then_nu'_is_respected}.
	Notice that processing a subset does not modify variables \mbox{$F$, $\call$ and $y$}, thus each subset~$L_i$ must be processed in both executions (although in different order) before any other edge or subset becomes tight.
	It follows that in the beginning of the $(|\tblist|+m)$-th iteration of both $\GP(\pot,\tblist)$ and $\GP(\pot,\sublist)$, variables \mbox{$F$, $\call$, $\calb$ and $y$} are identical.
	This implies that both executions have the same output. But, again, this is a contradiction because $(\pot,\tblist)$ is a threshold-tuple.
\end{proof}

In the the following, we use the notion of rounds of iterations.
Recall that during the execution of $\GP({\pot},\tblist)$, there might be iterations for which the increment $\Delta$ to variables of $y$ is set to zero, and thus vector $y$ remains unaltered.
We say that iterations $i$~and~$j$ are in the same \emph{round} if the value of vector~$y$ at the end of iteration $i$ equals the value of vector~$y$ at the end of iteration $j$.

The output of the growth-phase corresponds to a tree and a collection of processed subsets. Suppose that executing $\GP(\pot,\tblist)$ returns pair $(T, \calb)$, and executing $\GP(\pot,\sublist)$ returns pair $(T', \calb')$. While, for a threshold-tuple, these pairs must be different, we show that the difference is restricted to adding and removing an edge, or adding a subset.
The proof's arguments rely on the fact that both executions of the growth-phase are almost identical, and differ only in the round of iteration~$|\tblist|$.

Whether the trees or the collections will be different depends on the event processed at iteration~$|\tblist|$.
If $\GP(\pot,\tblist)$ processes an edge~$\obj$, then $\GP(\pot,\sublist)$ processes an edge~$\obj'$, and $T = T' + \obj - \obj'$.
We give an illustration in Figure~\ref{subsec:picking:fig:edge_case}.
The state at the beginning of iteration $|\tblist|$ is depicted in Subfigure~\ref{subsec:picking:fig:edge_case_round_state}, which is the same for both executions.
Each maximal subset is represented by a circle, and processed subsets correspond to filled circles.
Only edges which are tight and external are drawn, and only maximal subsets which are tight are labelled.
The priority order of each edge and subset is given by the corresponding label index.
Subfigure~\ref{subsec:picking:fig:edge_case_round_end_1} represents the end of the round for $\GP(\pot,\sublist)$, where the contours surrounding circles are the new formed subsets. In this execution, $\obj' = e_1$ is the first edge, and the processing order is $e_1, e_2, e_3, e_4, e_5, L_1$.
Subfigure~\ref{subsec:picking:fig:edge_case_round_end_2} represents the end of the round for $\GP(\pot,\tblist)$, for which $\obj = e_6$ is the last item of $\tblist$, and the processing order is $e_6,e_1,e_2,e_3,e_4, L_1$.

\begin{figure}
	\begin{figure}[H]
		\centering
		\begin{subfigure}[t]{0.5\textwidth}
			\center
			\scalebox{1}{

\tikzset{every picture/.style={line width=0.75pt}} 

\begin{tikzpicture}[x=0.75pt,y=0.75pt,yscale=-1,xscale=1]

\draw    (155.21,115) -- (155.21,165) ;

\draw    (155.21,115) -- (205.21,115) ;

\draw    (105.21,115) -- (155.21,115) ;

\draw    (205.21,115) -- (205.21,165) ;

\draw    (155.21,165) -- (205.21,165) ;

\draw    (275.21,105) -- (275.21,175) ;

\draw  [fill={rgb, 255:red, 255; green, 255; blue, 255 }  ,fill opacity=1 ] (150,165) .. controls (150,162.24) and (152.33,160) .. (155.21,160) .. controls (158.08,160) and (160.42,162.24) .. (160.42,165) .. controls (160.42,167.76) and (158.08,170) .. (155.21,170) .. controls (152.33,170) and (150,167.76) .. (150,165) -- cycle ;
\draw  [fill={rgb, 255:red, 255; green, 255; blue, 255 }  ,fill opacity=1 ] (150,115) .. controls (150,112.24) and (152.33,110) .. (155.21,110) .. controls (158.08,110) and (160.42,112.24) .. (160.42,115) .. controls (160.42,117.76) and (158.08,120) .. (155.21,120) .. controls (152.33,120) and (150,117.76) .. (150,115) -- cycle ;
\draw  [fill={rgb, 255:red, 255; green, 255; blue, 255 }  ,fill opacity=1 ] (200,165) .. controls (200,162.24) and (202.33,160) .. (205.21,160) .. controls (208.08,160) and (210.42,162.24) .. (210.42,165) .. controls (210.42,167.76) and (208.08,170) .. (205.21,170) .. controls (202.33,170) and (200,167.76) .. (200,165) -- cycle ;
\draw  [fill={rgb, 255:red, 0; green, 0; blue, 0 }  ,fill opacity=1 ] (200,115) .. controls (200,112.24) and (202.33,110) .. (205.21,110) .. controls (208.08,110) and (210.42,112.24) .. (210.42,115) .. controls (210.42,117.76) and (208.08,120) .. (205.21,120) .. controls (202.33,120) and (200,117.76) .. (200,115) -- cycle ;
\draw  [fill={rgb, 255:red, 255; green, 255; blue, 255 }  ,fill opacity=1 ] (100,115) .. controls (100,112.24) and (102.33,110) .. (105.21,110) .. controls (108.08,110) and (110.42,112.24) .. (110.42,115) .. controls (110.42,117.76) and (108.08,120) .. (105.21,120) .. controls (102.33,120) and (100,117.76) .. (100,115) -- cycle ;
\draw  [fill={rgb, 255:red, 0; green, 0; blue, 0 }  ,fill opacity=1 ] (270,105) .. controls (270,102.24) and (272.33,100) .. (275.21,100) .. controls (278.08,100) and (280.42,102.24) .. (280.42,105) .. controls (280.42,107.76) and (278.08,110) .. (275.21,110) .. controls (272.33,110) and (270,107.76) .. (270,105) -- cycle ;
\draw  [fill={rgb, 255:red, 255; green, 255; blue, 255 }  ,fill opacity=1 ] (270,175) .. controls (270,172.24) and (272.33,170) .. (275.21,170) .. controls (278.08,170) and (280.42,172.24) .. (280.42,175) .. controls (280.42,177.76) and (278.08,180) .. (275.21,180) .. controls (272.33,180) and (270,177.76) .. (270,175) -- cycle ;
\draw   (80,175) .. controls (80,172.24) and (82.24,170) .. (85,170) .. controls (87.76,170) and (90,172.24) .. (90,175) .. controls (90,177.76) and (87.76,180) .. (85,180) .. controls (82.24,180) and (80,177.76) .. (80,175) -- cycle ;

\draw (148,139.5) node [scale=0.9] [align=left] {$\displaystyle e_{1}$};
\draw (182,100.5) node [scale=0.9] [align=left] {$\displaystyle e_{2}$};
\draw (288,139.5) node [scale=0.9] [align=left] {$\displaystyle e_{3}$};
\draw (132,100.5) node [scale=0.9] [align=left] {$\displaystyle e_{4}$};
\draw (218,139.5) node [scale=0.9] [align=left] {$\displaystyle e_{5}$};
\draw (182,175.5) node [scale=0.9] [align=left] {$\displaystyle e_{6}$};
\draw (88,159.5) node [scale=0.9] [align=left] {$\displaystyle L_{1}$};

\end{tikzpicture}}
			\caption{State in the $|\tblist|$-th iteration.}
			\label{subsec:picking:fig:edge_case_round_state}
		\end{subfigure}\\
		\begin{subfigure}[t]{0.5\textwidth}
			\center
			\scalebox{1}{

\tikzset{every picture/.style={line width=0.75pt}} 

\begin{tikzpicture}[x=0.75pt,y=0.75pt,yscale=-1,xscale=1]

\draw    (167.21,115) -- (205.21,115) ;

\draw  [fill={rgb, 255:red, 0; green, 0; blue, 0 }  ,fill opacity=1 ] (200,115) .. controls (200,112.24) and (202.33,110) .. (205.21,110) .. controls (208.08,110) and (210.42,112.24) .. (210.42,115) .. controls (210.42,117.76) and (208.08,120) .. (205.21,120) .. controls (202.33,120) and (200,117.76) .. (200,115) -- cycle ;
\draw   (140,140) .. controls (140,117.91) and (146.72,100) .. (155,100) .. controls (163.28,100) and (170,117.91) .. (170,140) .. controls (170,162.09) and (163.28,180) .. (155,180) .. controls (146.72,180) and (140,162.09) .. (140,140) -- cycle ;
\draw    (155.21,115) -- (155.21,165) ;

\draw  [fill={rgb, 255:red, 255; green, 255; blue, 255 }  ,fill opacity=1 ] (150,165) .. controls (150,162.24) and (152.33,160) .. (155.21,160) .. controls (158.08,160) and (160.42,162.24) .. (160.42,165) .. controls (160.42,167.76) and (158.08,170) .. (155.21,170) .. controls (152.33,170) and (150,167.76) .. (150,165) -- cycle ;
\draw  [fill={rgb, 255:red, 255; green, 255; blue, 255 }  ,fill opacity=1 ] (150,115) .. controls (150,112.24) and (152.33,110) .. (155.21,110) .. controls (158.08,110) and (160.42,112.24) .. (160.42,115) .. controls (160.42,117.76) and (158.08,120) .. (155.21,120) .. controls (152.33,120) and (150,117.76) .. (150,115) -- cycle ;
\draw    (80,120) .. controls (89.5,57) and (233.5,60) .. (240,110) .. controls (246.5,160) and (194.5,204) .. (150,197) .. controls (105.5,190) and (73.5,151) .. (80,120) -- cycle ;

\draw    (105.21,115) -- (131.21,115) ;

\draw    (205.21,132) -- (205.21,165) ;

\draw    (275.21,105) -- (275.21,175) ;

\draw  [fill={rgb, 255:red, 255; green, 255; blue, 255 }  ,fill opacity=1 ] (200,165) .. controls (200,162.24) and (202.33,160) .. (205.21,160) .. controls (208.08,160) and (210.42,162.24) .. (210.42,165) .. controls (210.42,167.76) and (208.08,170) .. (205.21,170) .. controls (202.33,170) and (200,167.76) .. (200,165) -- cycle ;
\draw  [fill={rgb, 255:red, 255; green, 255; blue, 255 }  ,fill opacity=1 ] (100,115) .. controls (100,112.24) and (102.33,110) .. (105.21,110) .. controls (108.08,110) and (110.42,112.24) .. (110.42,115) .. controls (110.42,117.76) and (108.08,120) .. (105.21,120) .. controls (102.33,120) and (100,117.76) .. (100,115) -- cycle ;
\draw  [fill={rgb, 255:red, 0; green, 0; blue, 0 }  ,fill opacity=1 ] (270,105) .. controls (270,102.24) and (272.33,100) .. (275.21,100) .. controls (278.08,100) and (280.42,102.24) .. (280.42,105) .. controls (280.42,107.76) and (278.08,110) .. (275.21,110) .. controls (272.33,110) and (270,107.76) .. (270,105) -- cycle ;
\draw  [fill={rgb, 255:red, 255; green, 255; blue, 255 }  ,fill opacity=1 ] (270,175) .. controls (270,172.24) and (272.33,170) .. (275.21,170) .. controls (278.08,170) and (280.42,172.24) .. (280.42,175) .. controls (280.42,177.76) and (278.08,180) .. (275.21,180) .. controls (272.33,180) and (270,177.76) .. (270,175) -- cycle ;
\draw   (260,140) .. controls (260,112.39) and (266.72,90) .. (275,90) .. controls (283.28,90) and (290,112.39) .. (290,140) .. controls (290,167.61) and (283.28,190) .. (275,190) .. controls (266.72,190) and (260,167.61) .. (260,140) -- cycle ;
\draw    (155,185) .. controls (118.5,186.33) and (121.5,93.33) .. (150,90) .. controls (178.5,86.67) and (220.5,94.67) .. (220,115) .. controls (219.5,135.33) and (188.52,115.18) .. (180,130) .. controls (171.48,144.82) and (186.5,181.33) .. (155,185) -- cycle ;

\draw    (90,110) .. controls (101.5,91) and (134.29,81.36) .. (165.21,80) .. controls (196.13,78.64) and (240.92,99) .. (235.21,120) .. controls (229.5,141) and (204.5,123) .. (190,140) .. controls (175.5,157) and (199.92,190) .. (155.21,192) .. controls (110.5,194) and (77.29,128) .. (90,110) -- cycle ;

\draw  [fill={rgb, 255:red, 0; green, 0; blue, 0 }  ,fill opacity=1 ] (80,175) .. controls (80,172.24) and (82.33,170) .. (85.21,170) .. controls (88.08,170) and (90.42,172.24) .. (90.42,175) .. controls (90.42,177.76) and (88.08,180) .. (85.21,180) .. controls (82.33,180) and (80,177.76) .. (80,175) -- cycle ;

\end{tikzpicture}}
			\caption{State at end of round for $\obj' = e_1$.}
			\label{subsec:picking:fig:edge_case_round_end_1}
		\end{subfigure}%
		~
		\begin{subfigure}[t]{0.5\textwidth}
			\center
			\scalebox{1}{

\tikzset{every picture/.style={line width=0.75pt}} 

\begin{tikzpicture}[x=0.75pt,y=0.75pt,yscale=-1,xscale=1]

\draw [fill={rgb, 255:red, 255; green, 255; blue, 255 }  ,fill opacity=1 ]   (105.21,115) -- (127.21,115) ;

\draw    (155.21,115) -- (155.21,154) ;

\draw  [fill={rgb, 255:red, 0; green, 0; blue, 0 }  ,fill opacity=1 ] (200,115) .. controls (200,112.24) and (202.33,110) .. (205.21,110) .. controls (208.08,110) and (210.42,112.24) .. (210.42,115) .. controls (210.42,117.76) and (208.08,120) .. (205.21,120) .. controls (202.33,120) and (200,117.76) .. (200,115) -- cycle ;
\draw    (205.21,165) -- (155.21,165) ;

\draw  [fill={rgb, 255:red, 255; green, 255; blue, 255 }  ,fill opacity=1 ] (150,165) .. controls (150,162.24) and (152.33,160) .. (155.21,160) .. controls (158.08,160) and (160.42,162.24) .. (160.42,165) .. controls (160.42,167.76) and (158.08,170) .. (155.21,170) .. controls (152.33,170) and (150,167.76) .. (150,165) -- cycle ;
\draw  [fill={rgb, 255:red, 255; green, 255; blue, 255 }  ,fill opacity=1 ] (150,115) .. controls (150,112.24) and (152.33,110) .. (155.21,110) .. controls (158.08,110) and (160.42,112.24) .. (160.42,115) .. controls (160.42,117.76) and (158.08,120) .. (155.21,120) .. controls (152.33,120) and (150,117.76) .. (150,115) -- cycle ;
\draw    (275.21,105) -- (275.21,175) ;

\draw  [fill={rgb, 255:red, 255; green, 255; blue, 255 }  ,fill opacity=1 ] (200,165) .. controls (200,162.24) and (202.33,160) .. (205.21,160) .. controls (208.08,160) and (210.42,162.24) .. (210.42,165) .. controls (210.42,167.76) and (208.08,170) .. (205.21,170) .. controls (202.33,170) and (200,167.76) .. (200,165) -- cycle ;
\draw  [fill={rgb, 255:red, 255; green, 255; blue, 255 }  ,fill opacity=1 ] (100,115) .. controls (100,112.24) and (102.33,110) .. (105.21,110) .. controls (108.08,110) and (110.42,112.24) .. (110.42,115) .. controls (110.42,117.76) and (108.08,120) .. (105.21,120) .. controls (102.33,120) and (100,117.76) .. (100,115) -- cycle ;
\draw  [fill={rgb, 255:red, 0; green, 0; blue, 0 }  ,fill opacity=1 ] (270,105) .. controls (270,102.24) and (272.33,100) .. (275.21,100) .. controls (278.08,100) and (280.42,102.24) .. (280.42,105) .. controls (280.42,107.76) and (278.08,110) .. (275.21,110) .. controls (272.33,110) and (270,107.76) .. (270,105) -- cycle ;
\draw  [fill={rgb, 255:red, 255; green, 255; blue, 255 }  ,fill opacity=1 ] (270,175) .. controls (270,172.24) and (272.33,170) .. (275.21,170) .. controls (278.08,170) and (280.42,172.24) .. (280.42,175) .. controls (280.42,177.76) and (278.08,180) .. (275.21,180) .. controls (272.33,180) and (270,177.76) .. (270,175) -- cycle ;
\draw   (260,140) .. controls (260,112.39) and (266.72,90) .. (275,90) .. controls (283.28,90) and (290,112.39) .. (290,140) .. controls (290,167.61) and (283.28,190) .. (275,190) .. controls (266.72,190) and (260,167.61) .. (260,140) -- cycle ;
\draw   (140,165) .. controls (140,156.72) and (157.91,150) .. (180,150) .. controls (202.09,150) and (220,156.72) .. (220,165) .. controls (220,173.28) and (202.09,180) .. (180,180) .. controls (157.91,180) and (140,173.28) .. (140,165) -- cycle ;
\draw    (155,105) .. controls (170.5,104.33) and (167.5,125.33) .. (172,136) .. controls (176.5,146.67) and (231.5,144.33) .. (228,167) .. controls (224.5,189.67) and (132.5,195.67) .. (136,161) .. controls (139.5,126.33) and (140.5,106.33) .. (155,105) -- cycle ;

\draw    (205.21,115) -- (168.21,115) ;

\draw    (90,120) .. controls (103.3,78.2) and (208.3,74.2) .. (230,110) .. controls (251.7,145.8) and (250.5,176.67) .. (230,190) .. controls (209.5,203.33) and (141.3,202.2) .. (124,190) .. controls (106.7,177.8) and (82.3,146.2) .. (90,120) -- cycle ;

\draw  [fill={rgb, 255:red, 0; green, 0; blue, 0 }  ,fill opacity=1 ] (80,175) .. controls (80,172.24) and (82.24,170) .. (85,170) .. controls (87.76,170) and (90,172.24) .. (90,175) .. controls (90,177.76) and (87.76,180) .. (85,180) .. controls (82.24,180) and (80,177.76) .. (80,175) -- cycle ;
\draw    (130,110) .. controls (146.3,90.2) and (202.3,94.2) .. (220,110) .. controls (237.7,125.8) and (241.3,158.2) .. (230,180) .. controls (218.7,201.8) and (141.3,194.2) .. (130,180) .. controls (118.7,165.8) and (120.3,120.2) .. (130,110) -- cycle ;

\end{tikzpicture}}
			\caption{State at end of round for $\obj = e_6$.}
			\label{subsec:picking:fig:edge_case_round_end_2}
		\end{subfigure}
		\caption{Executions of $\GP(\pot,\sublist)$ and $\GP(\pot,\tblist)$ in the edge case.}
		\label{subsec:picking:fig:edge_case}
	\end{figure}

	\medskip

	\begin{figure}[H]
		\centering
		\begin{subfigure}[t]{0.5\textwidth}
			\center
			\scalebox{1}{

\tikzset{every picture/.style={line width=0.75pt}} 

\begin{tikzpicture}[x=0.75pt,y=0.75pt,yscale=-1,xscale=1]

\draw    (155.21,175) -- (205.42,175) ;

\draw    (105.21,175) -- (154.79,175) ;

\draw    (155.21,115) -- (205.21,115) ;

\draw    (105.21,115) -- (155.21,115) ;

\draw  [fill={rgb, 255:red, 255; green, 255; blue, 255 }  ,fill opacity=1 ] (150,115) .. controls (150,112.24) and (152.33,110) .. (155.21,110) .. controls (158.08,110) and (160.42,112.24) .. (160.42,115) .. controls (160.42,117.76) and (158.08,120) .. (155.21,120) .. controls (152.33,120) and (150,117.76) .. (150,115) -- cycle ;
\draw  [fill={rgb, 255:red, 0; green, 0; blue, 0 }  ,fill opacity=1 ] (200,115) .. controls (200,112.24) and (202.33,110) .. (205.21,110) .. controls (208.08,110) and (210.42,112.24) .. (210.42,115) .. controls (210.42,117.76) and (208.08,120) .. (205.21,120) .. controls (202.33,120) and (200,117.76) .. (200,115) -- cycle ;
\draw  [fill={rgb, 255:red, 255; green, 255; blue, 255 }  ,fill opacity=1 ] (100,115) .. controls (100,112.24) and (102.33,110) .. (105.21,110) .. controls (108.08,110) and (110.42,112.24) .. (110.42,115) .. controls (110.42,117.76) and (108.08,120) .. (105.21,120) .. controls (102.33,120) and (100,117.76) .. (100,115) -- cycle ;
\draw  [fill={rgb, 255:red, 0; green, 0; blue, 0 }  ,fill opacity=1 ] (150,175) .. controls (150,172.24) and (152.33,170) .. (155.21,170) .. controls (158.08,170) and (160.42,172.24) .. (160.42,175) .. controls (160.42,177.76) and (158.08,180) .. (155.21,180) .. controls (152.33,180) and (150,177.76) .. (150,175) -- cycle ;
\draw  [fill={rgb, 255:red, 255; green, 255; blue, 255 }  ,fill opacity=1 ] (100,175) .. controls (100,172.24) and (102.33,170) .. (105.21,170) .. controls (108.08,170) and (110.42,172.24) .. (110.42,175) .. controls (110.42,177.76) and (108.08,180) .. (105.21,180) .. controls (102.33,180) and (100,177.76) .. (100,175) -- cycle ;
\draw  [fill={rgb, 255:red, 255; green, 255; blue, 255 }  ,fill opacity=1 ] (200,175) .. controls (200,172.24) and (202.24,170) .. (205,170) .. controls (207.76,170) and (210,172.24) .. (210,175) .. controls (210,177.76) and (207.76,180) .. (205,180) .. controls (202.24,180) and (200,177.76) .. (200,175) -- cycle ;
\draw  [fill={rgb, 255:red, 255; green, 255; blue, 255 }  ,fill opacity=1 ] (249.58,115) .. controls (249.58,112.24) and (251.92,110) .. (254.79,110) .. controls (257.67,110) and (260,112.24) .. (260,115) .. controls (260,117.76) and (257.67,120) .. (254.79,120) .. controls (251.92,120) and (249.58,117.76) .. (249.58,115) -- cycle ;
\draw  [fill={rgb, 255:red, 255; green, 255; blue, 255 }  ,fill opacity=1 ] (250,175) .. controls (250,172.24) and (252.33,170) .. (255.21,170) .. controls (258.08,170) and (260.42,172.24) .. (260.42,175) .. controls (260.42,177.76) and (258.08,180) .. (255.21,180) .. controls (252.33,180) and (250,177.76) .. (250,175) -- cycle ;

\draw (182,100.5) node [scale=0.9] [align=left] {$\displaystyle e_{1}$};
\draw (132,100.5) node [scale=0.9] [align=left] {$\displaystyle e_{4}$};
\draw (105.5,190.5) node [scale=0.9] [align=left] {$\displaystyle L_{1}$};
\draw (182,160.5) node [scale=0.9] [align=left] {$\displaystyle e_{3}$};
\draw (132,160.5) node [scale=0.9] [align=left] {$\displaystyle e_{2}$};
\draw (105.5,130.5) node [scale=0.9] [align=left] {$\displaystyle L_{3}$};
\draw (250.5,128.5) node [scale=0.9] [align=left] {$\displaystyle L_{2}$};

\end{tikzpicture}}
			\caption{State in the $|\tblist|$-th iteration.}
			\label{subsec:picking:fig:subset_case_round_state}
		\end{subfigure}\\
		\vspace*{10pt}
		\begin{subfigure}[t]{0.5\textwidth}
			\center
			\scalebox{1}{

\tikzset{every picture/.style={line width=0.75pt}} 

\begin{tikzpicture}[x=0.75pt,y=0.75pt,yscale=-1,xscale=1]

\draw    (165,175) -- (205.42,175) ;

\draw    (105.21,175) -- (154.79,175) ;

\draw    (155.21,115) -- (205.21,115) ;

\draw    (105.21,115) -- (145.21,115) ;

\draw  [fill={rgb, 255:red, 255; green, 255; blue, 255 }  ,fill opacity=1 ] (150,115) .. controls (150,112.24) and (152.33,110) .. (155.21,110) .. controls (158.08,110) and (160.42,112.24) .. (160.42,115) .. controls (160.42,117.76) and (158.08,120) .. (155.21,120) .. controls (152.33,120) and (150,117.76) .. (150,115) -- cycle ;
\draw  [fill={rgb, 255:red, 0; green, 0; blue, 0 }  ,fill opacity=1 ] (200,115) .. controls (200,112.24) and (202.33,110) .. (205.21,110) .. controls (208.08,110) and (210.42,112.24) .. (210.42,115) .. controls (210.42,117.76) and (208.08,120) .. (205.21,120) .. controls (202.33,120) and (200,117.76) .. (200,115) -- cycle ;
\draw  [fill={rgb, 255:red, 255; green, 255; blue, 255 }  ,fill opacity=1 ] (100,115) .. controls (100,112.24) and (102.33,110) .. (105.21,110) .. controls (108.08,110) and (110.42,112.24) .. (110.42,115) .. controls (110.42,117.76) and (108.08,120) .. (105.21,120) .. controls (102.33,120) and (100,117.76) .. (100,115) -- cycle ;
\draw  [fill={rgb, 255:red, 0; green, 0; blue, 0 }  ,fill opacity=1 ] (150,175) .. controls (150,172.24) and (152.33,170) .. (155.21,170) .. controls (158.08,170) and (160.42,172.24) .. (160.42,175) .. controls (160.42,177.76) and (158.08,180) .. (155.21,180) .. controls (152.33,180) and (150,177.76) .. (150,175) -- cycle ;
\draw  [fill={rgb, 255:red, 255; green, 255; blue, 255 }  ,fill opacity=1 ] (100,175) .. controls (100,172.24) and (102.33,170) .. (105.21,170) .. controls (108.08,170) and (110.42,172.24) .. (110.42,175) .. controls (110.42,177.76) and (108.08,180) .. (105.21,180) .. controls (102.33,180) and (100,177.76) .. (100,175) -- cycle ;
\draw  [fill={rgb, 255:red, 255; green, 255; blue, 255 }  ,fill opacity=1 ] (200,175) .. controls (200,172.24) and (202.24,170) .. (205,170) .. controls (207.76,170) and (210,172.24) .. (210,175) .. controls (210,177.76) and (207.76,180) .. (205,180) .. controls (202.24,180) and (200,177.76) .. (200,175) -- cycle ;
\draw  [fill={rgb, 255:red, 0; green, 0; blue, 0 }  ,fill opacity=1 ] (249.58,115) .. controls (249.58,112.24) and (251.92,110) .. (254.79,110) .. controls (257.67,110) and (260,112.24) .. (260,115) .. controls (260,117.76) and (257.67,120) .. (254.79,120) .. controls (251.92,120) and (249.58,117.76) .. (249.58,115) -- cycle ;
\draw  [fill={rgb, 255:red, 255; green, 255; blue, 255 }  ,fill opacity=1 ] (250,175) .. controls (250,172.24) and (252.33,170) .. (255.21,170) .. controls (258.08,170) and (260.42,172.24) .. (260.42,175) .. controls (260.42,177.76) and (258.08,180) .. (255.21,180) .. controls (252.33,180) and (250,177.76) .. (250,175) -- cycle ;
\draw   (145.21,115) .. controls (145.21,106.72) and (160.88,100) .. (180.21,100) .. controls (199.54,100) and (215.21,106.72) .. (215.21,115) .. controls (215.21,123.28) and (199.54,130) .. (180.21,130) .. controls (160.88,130) and (145.21,123.28) .. (145.21,115) -- cycle ;
\draw   (98,115) .. controls (98,101.19) and (124.86,90) .. (158,90) .. controls (191.14,90) and (218,101.19) .. (218,115) .. controls (218,128.81) and (191.14,140) .. (158,140) .. controls (124.86,140) and (98,128.81) .. (98,115) -- cycle ;
\draw   (95,175) .. controls (95,166.72) and (110.67,160) .. (130,160) .. controls (149.33,160) and (165,166.72) .. (165,175) .. controls (165,183.28) and (149.33,190) .. (130,190) .. controls (110.67,190) and (95,183.28) .. (95,175) -- cycle ;
\draw   (92,175) .. controls (92,161.19) and (118.86,150) .. (152,150) .. controls (185.14,150) and (212,161.19) .. (212,175) .. controls (212,188.81) and (185.14,200) .. (152,200) .. controls (118.86,200) and (92,188.81) .. (92,175) -- cycle ;

\end{tikzpicture}}
			\caption{State at end of round for $\obj' = e_1$.}
			\label{subsec:picking:fig:subset_case_round_end_1}
		\end{subfigure}%
		~
		\begin{subfigure}[t]{0.5\textwidth}
			\center
			\scalebox{1}{

\tikzset{every picture/.style={line width=0.75pt}} 

\begin{tikzpicture}[x=0.75pt,y=0.75pt,yscale=-1,xscale=1]

\draw    (165,175) -- (205.42,175) ;

\draw    (105.21,175) -- (154.79,175) ;

\draw    (155.21,115) -- (205.21,115) ;

\draw    (105.21,115) -- (145.21,115) ;

\draw  [fill={rgb, 255:red, 255; green, 255; blue, 255 }  ,fill opacity=1 ] (150,115) .. controls (150,112.24) and (152.33,110) .. (155.21,110) .. controls (158.08,110) and (160.42,112.24) .. (160.42,115) .. controls (160.42,117.76) and (158.08,120) .. (155.21,120) .. controls (152.33,120) and (150,117.76) .. (150,115) -- cycle ;
\draw  [fill={rgb, 255:red, 0; green, 0; blue, 0 }  ,fill opacity=1 ] (200,115) .. controls (200,112.24) and (202.33,110) .. (205.21,110) .. controls (208.08,110) and (210.42,112.24) .. (210.42,115) .. controls (210.42,117.76) and (208.08,120) .. (205.21,120) .. controls (202.33,120) and (200,117.76) .. (200,115) -- cycle ;
\draw  [fill={rgb, 255:red, 0; green, 0; blue, 0 }  ,fill opacity=1 ] (100,115) .. controls (100,112.24) and (102.33,110) .. (105.21,110) .. controls (108.08,110) and (110.42,112.24) .. (110.42,115) .. controls (110.42,117.76) and (108.08,120) .. (105.21,120) .. controls (102.33,120) and (100,117.76) .. (100,115) -- cycle ;
\draw  [fill={rgb, 255:red, 0; green, 0; blue, 0 }  ,fill opacity=1 ] (150,175) .. controls (150,172.24) and (152.33,170) .. (155.21,170) .. controls (158.08,170) and (160.42,172.24) .. (160.42,175) .. controls (160.42,177.76) and (158.08,180) .. (155.21,180) .. controls (152.33,180) and (150,177.76) .. (150,175) -- cycle ;
\draw  [fill={rgb, 255:red, 255; green, 255; blue, 255 }  ,fill opacity=1 ] (100,175) .. controls (100,172.24) and (102.33,170) .. (105.21,170) .. controls (108.08,170) and (110.42,172.24) .. (110.42,175) .. controls (110.42,177.76) and (108.08,180) .. (105.21,180) .. controls (102.33,180) and (100,177.76) .. (100,175) -- cycle ;
\draw  [fill={rgb, 255:red, 255; green, 255; blue, 255 }  ,fill opacity=1 ] (200,175) .. controls (200,172.24) and (202.24,170) .. (205,170) .. controls (207.76,170) and (210,172.24) .. (210,175) .. controls (210,177.76) and (207.76,180) .. (205,180) .. controls (202.24,180) and (200,177.76) .. (200,175) -- cycle ;
\draw  [fill={rgb, 255:red, 0; green, 0; blue, 0 }  ,fill opacity=1 ] (249.58,115) .. controls (249.58,112.24) and (251.92,110) .. (254.79,110) .. controls (257.67,110) and (260,112.24) .. (260,115) .. controls (260,117.76) and (257.67,120) .. (254.79,120) .. controls (251.92,120) and (249.58,117.76) .. (249.58,115) -- cycle ;
\draw  [fill={rgb, 255:red, 255; green, 255; blue, 255 }  ,fill opacity=1 ] (250,175) .. controls (250,172.24) and (252.33,170) .. (255.21,170) .. controls (258.08,170) and (260.42,172.24) .. (260.42,175) .. controls (260.42,177.76) and (258.08,180) .. (255.21,180) .. controls (252.33,180) and (250,177.76) .. (250,175) -- cycle ;
\draw   (145.21,115) .. controls (145.21,106.72) and (160.88,100) .. (180.21,100) .. controls (199.54,100) and (215.21,106.72) .. (215.21,115) .. controls (215.21,123.28) and (199.54,130) .. (180.21,130) .. controls (160.88,130) and (145.21,123.28) .. (145.21,115) -- cycle ;
\draw   (98,115) .. controls (98,101.19) and (124.86,90) .. (158,90) .. controls (191.14,90) and (218,101.19) .. (218,115) .. controls (218,128.81) and (191.14,140) .. (158,140) .. controls (124.86,140) and (98,128.81) .. (98,115) -- cycle ;
\draw   (95,175) .. controls (95,166.72) and (110.67,160) .. (130,160) .. controls (149.33,160) and (165,166.72) .. (165,175) .. controls (165,183.28) and (149.33,190) .. (130,190) .. controls (110.67,190) and (95,183.28) .. (95,175) -- cycle ;
\draw   (92,175) .. controls (92,161.19) and (118.86,150) .. (152,150) .. controls (185.14,150) and (212,161.19) .. (212,175) .. controls (212,188.81) and (185.14,200) .. (152,200) .. controls (118.86,200) and (92,188.81) .. (92,175) -- cycle ;

\end{tikzpicture}}
			\caption{State at end of round for  $\obj = L_3$.}
			\label{subsec:picking:fig:subset_case_round_end_2}
		\end{subfigure}
		\caption{Executions of $\GP(\pot,\sublist)$ and $\GP(\pot,\tblist)$ in the subset case.}
		\label{subsec:picking:fig:subset_case}
	\end{figure}
\end{figure}

When $\GP(\pot,\tblist)$ processes a subset~$\obj$ at iteration $|\tblist|$,  we have  ${\calb = \calb' \cup \{\obj\}}$.
We give an illustration in Figure~\ref{subsec:picking:fig:subset_case}, where we use the same convention as before.
Subfigure~\ref{subsec:picking:fig:subset_case_round_state} depicts
the state at the beginning of iteration $|\tblist|$.
Subfigure~\ref{subsec:picking:fig:subset_case_round_end_1} represents the end of the round for $\GP(\pot,\sublist)$, for which $\obj' = e_1$ is the first edge, and the processing order is $e_1, e_2, e_3, e_4, L_2$.
Subfigure~\ref{subsec:picking:fig:subset_case_round_end_2} represents the end of the round for $\GP(\pot,\tblist)$, for which $\obj = L_3$ is the last item of $\tblist$, and the processing order is $L_3,e_1,e_2,e_3,e_4, L_2$.
Notice that $L_1$ was not processed in neither execution because, after $e_2$ had been processed, $L_1$ stopped being maximal.
The subset $L_3$ was processed by $\GP(\pot,\tblist)$, but not by $\GP(\pot,\sublist)$, because $L_3$ is processed first in the execution of $\GP(\pot,\tblist)$, whereas it stopped being maximal in the execution of $\GP(\pot,\sublist)$ after $e_4$ had been processed.

The following lemma summarizes the main properties of threshold-tuples.

\begin{mylemma}
	\label{subsec:threshold-tuple:lemma:main_lemma}
	Assume that $(\pot,\tblist)$ is a threshold-tuple, and let $\obj = \tblist_{|\tblist|}$. Also, let $T$, $\calb$ and $y$ be the output computed by $\GP(\pot,\tblist)$, and let $T'$, $\calb'$ and $y'$ be the output computed by $\GP(\pot,\sublist)$.
	Then, $y = y'$ and
	\begin{enumerate}[itemsep=0pt,topsep=0pt, label=(\roman*),ref={\mbox{\rm $(\roman*)$}}]
		\setlength{\itemsep}{0pt}

		\item \label{subsec:threshold-tuple:lemma:main_lemma:2} if $\obj$ is an edge, then $\calb = \calb'$, $\obj \notin E(T')$, and $T \subseteq T' + \obj$;

		\item \label{subsec:threshold-tuple:lemma:main_lemma:3} if $\obj$ is a subset, then $T = T'$, $\obj \notin \calb'$, and $\calb = \calb' \cup \{\obj\}$.
	\end{enumerate}
\end{mylemma}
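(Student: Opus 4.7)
The strategy is to compare the two executions from the moment they first diverge. By Lemma~\ref{subsec:threshold-tuple:lemma:if_nu_is_respected_then_nu'_is_respected}, the variables $F$, $\call$, $\calb$, and $y$ coincide at the beginning of iteration~$|\tblist|$; call this common state $(F_0, \call_0, \calb_0, y_0)$. Since $\tblist$ is respected by $\pot$, the event $\obj$ is already tight at that iteration, so $\Delta = 0$ and both executions enter a round in which $y$ is frozen. The plan is to prove that, at the end of this round, both executions arrive at the same $\call^*$ and the same $y$, while their $\calb$'s either agree or differ exactly by $\{\obj\}$ as claimed. Once that is established, subsequent iterations of the growth-phase depend only on $(\call^*, \calb, y)$, so they proceed identically in both executions, and the conclusions propagate to termination.

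In the edge case, I view the tight external edges at the start of the round as a multigraph $H^*$ on $\call_0^*$. Both executions then run a Kruskal-like greedy on $H^*$, with $\GP(\pot,\tblist)$ forced to select $\obj$ first and $\GP(\pot,\sublist)$ starting from the lex-first edge $\obj'$ (the edge provided by Lemma~\ref{subsec:threshold-tuple:lemma:threshold_subobject_is_an_edge}). Writing $B$ for the set of edges selected by $\GP(\pot,\sublist)$ during the round, I plan to show by induction on the position~$j$ in the lex order that after $\GP(\pot,\tblist)$ has processed $\obj$ and the first~$j$ edges in lex order, the set of selected edges is either $(B \cap \{e_1, \ldots, e_j\}) \cup \{\obj\}$, as long as this remains a forest, or $(B \cap \{e_1, \ldots, e_j\}) - e^\star + \obj$, once the fundamental circuit of $\obj$ relative to $B$ has closed, where $e^\star$ is the edge of that circuit displaced by $\obj$. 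Carrying the induction to the end of the round yields $T \subseteq T' + \obj$. The claim $\obj \notin E(T')$ follows because otherwise both executions would produce the same spanning forest of $H^*$, which in turn would force $T = T'$ and $\calb = \calb'$ and contradict the threshold-tuple assumption. Finally, since any two spanning forests of $H^*$ induce the same partition of $V$, the collection $\call^*$ agrees in both executions at the end of the round, so the set of remaining tight maximal subsets processed afterwards is the same and $\calb = \calb'$.

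In the subset case, processing $\obj$ modifies none of $F$, $\call$, or $y$, so after $\GP(\pot,\tblist)$ processes $\obj$ at iteration $|\tblist|$ it immediately moves on to the lex-first tight external edge, which is precisely the $\obj'$ selected by $\GP(\pot,\sublist)$ at its iteration~$|\tblist|$. From that point on, both executions process the same sequence of tight external edges, so $T = T'$. The key observation is that $\obj$ must become non-maximal during the edge-processing of the round: otherwise $\GP(\pot,\sublist)$ would process $\obj$ later as a tight active subset, yielding $\calb' = \calb$ and again contradicting the threshold-tuple assumption. Once $\obj \notin \call^*$ it is invisible to the collection~$\cala$, so both executions process the same remaining tight active maximal subsets, giving $\calb = \calb' \cup \{\obj\}$ with $\obj \notin \calb'$. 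In both cases, the identity $y = y'$ holds because the subsets that appear in only one of the two laminar collections are all created during the round (where $\Delta = 0$ keeps their $y_L$-values at~$0$), whereas every other subset has an identical growth history in the two executions.

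The main obstacle I anticipate is the matroid-exchange induction in the edge case: one must state the two-case invariant precisely, identify the exact moment at which the fundamental circuit of $\obj$ closes, and verify that the ``still-a-forest'' regime transitions to the ``one-edge-displaced'' regime at that moment and persists thereafter.
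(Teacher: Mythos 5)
Your proposal is correct and follows essentially the same route as the paper: compare the two executions from iteration~$|\tblist|$, show that the round ends with the same $(\call^*, y)$ while $\calb$ agrees (edge case) or differs by $\{\obj\}$ (subset case), then observe that the executions coincide afterward; your Kruskal/matroid-exchange packaging of the edge case is the same one-edge-swap argument that the paper spells out directly by tracking the processed edges and proving $e_i = e_{i-1}'$. One minor slip that does not affect the argument: $\Delta$ need not be~$0$ at iteration~$|\tblist|$ itself, since $\obj$ may become tight only after that iteration's increase --- both executions still apply the same increase, and $y$ is frozen only from that point onward.
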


\begin{proof}
	Note that $\obj$ is the edge or subset processed in the $|\tblist|$-th iteration of $\GP(\pot,\tblist)$, because $\tblist$ is respected by $\pot$. Let $\obj'$ be the edge or subset processed in the $|\tblist|$-th iteration of $\GP(\pot,\sublist)$.
	By Lemma~\ref{subsec:threshold-tuple:lemma:threshold_subobject_is_an_edge}, $\obj \neq \obj'$, and $\obj'$ is an edge.
	We consider two cases, depending on the type of $\obj$.

	\medskip

	\textbf{Case 1:} Assume that both $\obj$ and $\obj'$ are edges.

	Let $\caly$ be the set of tight external edges considered to be processed in the $|\tblist|$-th iteration of $\GP(\pot,\tblist)$. Similarly, define $\caly'$ as the analogous set corresponding to $\GP(\pot,\sublist)$.
	Since $\obj$ and $\obj'$ are processed edges, $\obj \in \caly$, and $\obj' \in \caly'$. By Lemma~\ref{subsec:threshold-tuple:lemma:if_nu_is_respected_then_nu'_is_respected}, we have $\caly = \caly'$, and thus $\obj, \obj' \in \caly$.
	Because edges are processed with priority higher than subsets, each edge of $\caly$ is processed or becomes internal in the same round of iteration $|\tblist|$, and before any active tight subset is processed.

	Since $\obj \in \caly$, at some iteration $|\tblist| + \ell$, for some~$\ell \ge 0$, the execution $\GP(\pot,\sublist)$ must process edge $\obj$ or some other edge of $\caly$ which is parallel to~$\obj$.
	For some $i \ge 0$, let~$e_i'$ be the edge of $\caly$ processed by $\GP(\pot,\sublist)$ at iteration $|\tblist| + i$. Therefore, $e_0' = \obj'$, and $e_{\ell}'$ is parallel to $\obj$ at iteration $|\tblist| + \ell$.
	Similarly, let $e_i$ be the edge of $\caly$ processed by $\GP(\pot,\tblist)$ at iteration $|\tblist| + i$, and observe that~${e_0 = \obj}$.

	We claim that for each $1 \le i \le \ell$, $e_i = e_{i-1}'$. To see this, note that at the beginning of iteration $|\tblist|+i$ of $\GP(\pot,\tblist)$, edge $e_{i-1}'$ is tight and external.
	Since $e_{i-1}'$ was processed in iteration $|\tblist|+i-1$ of $\GP(\pot,\sublist)$, it has priority higher than any other tight edge of $\caly$ which is still external.
	Thus, $e_{i-1}'$ is selected to be processed, and indeed $e_i = e_{i-1}'$.

	At the end of iteration $|\tblist|+\ell$, the forest~$F$ computed by $\GP(\pot,\tblist)$ and the forest~$F'$ computed by $\GP(\pot,\sublist)$ are such that $F - \obj' = F' - \obj$. Moreover, $\obj$ and $\obj'$ are in the same connected components of both $F$~and~$F'$.
	Thus, the value of $y$, the set of external edges, and the collection of active subsets at the end of iteration are the same in both executions. This implies that the sequence of processed edges and subsets in succeeding iterations are the same.
	Therefore, the output of $\GP(\pot,\tblist)$ and $\GP(\pot,\sublist)$ are such that $y = y'$, $\calb = \calb'$, and $T = T' + \obj - \obj'$.
	This shows the lemma when $\obj$ is an edge.

	\textbf{Case 2:} Assume that $\obj$ is a subset and $\obj'$ is an edge.

	Let $\calx$ be the collection of active tight subsets considered to be processed in the $|\tblist|$-th iteration of $\GP(\pot,\tblist)$. Similarly, define $\calx'$ as the analogous collection corresponding to $\GP(\pot,\sublist)$.
	Since $\obj$ is a processed subset, $\obj \in \calx$. By Lemma~\ref{subsec:threshold-tuple:lemma:if_nu_is_respected_then_nu'_is_respected}, we have $\calx = \calx'$.

	Suppose that, after processing edge $\obj'$, $\GP(\pot, \sublist)$ processes other $\ell$ edges in the same round of iterations.
	Since edges have higher priority, no subset is processed before every tight external edge is processed or becomes internal.
	Let $e_0', e_1', \dots, e_{\ell}'$ be the edges processed in iteration $|\tblist|, |\tblist|+1, \dots, |\tblist|+\ell$ of $\GP(\pot,\sublist)$, respectively.
	Note that $e_0' = \obj'$.

	By Lemma~\ref{subsec:threshold-tuple:lemma:if_nu_is_respected_then_nu'_is_respected}, each $e_i'$ is also a tight external edge in iteration $|\tblist|$ of $\GP(\pot, \tblist)$.
	Processing a subset does not change the set of tight external edges, thus, after processing subset $\obj$, $\GP(\pot, \tblist)$ must process the sequence of edges $e_0', e_1', \dots, e_{\ell}'$, such that $e_i'$ is processed at iteration $|\tblist| + i + 1$ of $\GP(\pot, \tblist)$. This implies that the set of edges processed by $\GP(\pot, \sublist)$  up to iteration $|\tblist| + \ell$ equals the set of edges processed by $\GP(\pot, \tblist)$ at up to iteration $|\tblist| + \ell + 1$.

	As a consequence, any tight subset of $\calx$ that remains maximal at the end of iteration $|\tblist| + \ell$ of $\GP(\pot, \sublist)$ is also tight and maximal at the end of iteration $|\tblist| + \ell+1$ of $\GP(\pot, \tblist)$, and vice-versa. After processing edges, any such subset which is still unprocessed is processed.
	Therefore, at the end of the rounds corresponding to $\GP(\pot, \sublist)$ and $\GP(\pot, \tblist)$, the value of $y$, the set of external edges, and the collection of active subsets are the same in both executions.
	As in the previous case, this implies that the sequence of processed edges and subsets in succeeding iterations are the same.

	Now observe that both executions process the same set of edges, thus $T = T'$.
	Also, any subset processed by $\GP(\pot, \sublist)$ is processed by $\GP(\pot, \tblist)$.
	Conversely, if a subset processed by $\GP(\pot, \tblist)$ is not processed by $\GP(\pot, \sublist)$, then this subset must be~$\obj$.
	Thus, $\calb = \calb' \cup \{\obj\}$.
	To show that $\obj \notin \calb'$, note that if $\calb = \calb'$, then $\GW(\pot, \tblist)$ and $\GW(\pot, \sublist)$ would return identical trees, which is not possible since $(\pot, \tblist)$ is a threshold-tuple.
	This completes the lemma.
\end{proof}


\subsection{Computing a threshold-tuple}
\label{subsec:TS}

Before describing the algorithm to compute a threshold-tuple $(\pot,\tblist)$, we need some definitions and corresponding auxiliary lemmas.

\subsubsection{Increase-function}

In the following, suppose that we are given a fixed tie-breaking list $\tblist$ and a real interval $[a,b]$ such that, for every $\pot \in [a,b]$, $\tblist$~is respected by~$\pot$.
Then, there is a fixed pair $(\call, \calb)$ such that, for every $\pot \in [a,b]$, the laminar collection and the collection of processed subsets computed by $\GP(\pot, \tblist)$ at the end of iteration~$|\tblist|$ correspond to $\call$~and~$\calb$, respectively.
Let ${\cala = \call^* \setminus \calb}$, and note that $\cala$ is the collection of active subsets at the beginning of iteration~${|\tblist|+1}$.
Define the \emph{object-collection} $\actObj$ corresponding to $\tblist$ and $[a,b]$ as the collection that contains the active subsets, the external edges with at least one active endpoint, and the tight external edges with no active endpoints at the beginning of iteration $|\tblist|+1$.
Observe that, if $\GP(\pot, \tblist)$ processes an edge or a subset~$\obj$ at iteration $|\tblist|+1$, then $\obj \in \actObj$.
Thus, $\actObj$ contains all subsets and edges that might be processed at iteration $|\tblist|+1$ for some $\pot \in [a,b]$.

At the beginning of iteration $|\tblist|+1$ of $\GP(\pot, \tblist)$, the variable $y_S$ of every active subset $S \in \cala$ is increased by the largest value $\Delta \ge 0$, for which no inequality corresponding to an edge or a subset is violated.
To compute $\Delta$, one first finds, for each edge or subset $\obj \in \actObj$, the value $\Delta_\obj$ to increase each variable $y_S$ such that $\obj$ becomes tight.
Since the considered tie-breaking list~$\tblist$ and interval~$[a,b]$ are fixed, the value $\Delta_\obj$ depends only on $\pot$.
The \emph{increase-function} of $\obj$ corresponding to $\tblist$ and $[a,b]$ is the function $\dfunc_\obj : [a,b] \rightarrow \mathbb{Q}_{\geq 0}$ such that $\dfunc_\obj(\pot) = \Delta_\obj$.

\begin{mylemma}
	\label{subsec:TS:lemma:dfunc_is_poly}
	For every $\obj \in \actObj$, the increase-function of $\obj$ is linear in $\pot$.
\end{mylemma}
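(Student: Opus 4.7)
The strategy is to prove by induction on the iteration index $i$ that, at the start of every iteration $i \le |\tblist|+1$ of $\GP(\pot,\tblist)$, each entry $y_S$ is an affine function of $\pot$ on the interval $[a,b]$ (i.e.\ of the form $\alpha_S \pot + \beta_S$ with $\alpha_S, \beta_S$ constant on $[a,b]$). Once this is established, the lemma follows at once by writing $\dfunc_\obj(\pot)$ as a closed-form expression in the $y_S$ and (in the subset case) in $\pot$ itself.

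For the base case $i=1$ the claim is trivial, as $y \equiv 0$. For the inductive step, fix $i \le |\tblist|$ and assume the claim at the start of iteration~$i$. Because $\tblist$ is respected by every $\pot \in [a,b]$, the edge or subset $\obj_i$ processed in iteration~$i$, as well as the laminar collection $\call^*$ and the set $\cala$ of active maximal subsets at the start of that iteration, do not depend on $\pot$; consequently the uniform increment $\Delta$ added to each $y_L$ with $L \in \cala$ equals the specific value that first makes $\obj_i$ tight. If $\obj_i$ is an external edge with endpoints $L_1, L_2 \in \call^*$, then $\Delta = (\ecost_{\obj_i} - \sum_{L : \obj_i \in \delta(L)} y_L)/m$, where $m \in \{1,2\}$ is the (constant) number of endpoints of $\obj_i$ belonging to $\cala$. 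If $\obj_i$ is an active subset $L$, then by laminarity and maximality $L$ is the unique active subset contained in $L$, so its tightness condition reads $\sum_{S \subseteq L} y_S + \Delta = \vpen^\pot_L$, whence $\Delta = \vpen_L + \pot|L| - \sum_{S \subseteq L} y_S$. In either case $\Delta$ is affine in $\pot$ by the inductive hypothesis, and hence the updated values $y_L + \Delta$ are still affine, closing the induction.

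Applying the invariant at iteration $|\tblist|+1$, every $y_S$ is affine in $\pot$. For $\obj \in \actObj$, the increase-function $\dfunc_\obj(\pot)$ takes one of three forms: $(\ecost_\obj - \sum_{L : \obj \in \delta(L)} y_L)/m$ when $\obj$ is an external edge with $m \ge 1$ active endpoints, $\vpen_L + \pot|L| - \sum_{S \subseteq L} y_S$ when $\obj$ is an active subset $L$, and $\dfunc_\obj \equiv 0$ when $\obj$ is a tight external edge with no active endpoint (no increase is needed, since $\obj$ is already tight). All three expressions are linear in $\pot$.

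The only point requiring care is that the combinatorial data entering these formulas---the laminar collection $\call$, the subcollection $\calb$ of processed subsets, the active set $\cala$, and the endpoint count $m$---be constant on $[a,b]$, so that the denominators and indexing sets in the formulas do not vary with $\pot$. This is precisely what is guaranteed by the hypothesis that $\tblist$ is respected by every $\pot \in [a,b]$, together with the observation made at the opening of the subsection that the pair $(\call,\calb)$ produced by iterations $1,\dots,|\tblist|$ is fixed across the interval.
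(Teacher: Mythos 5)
Your proof is correct and follows essentially the same argument as the paper: both proceed by induction over iterations of the growth-phase to show the relevant quantities are linear in~$\pot$, relying on the fact that $\tblist$ being respected on $[a,b]$ fixes the combinatorial data ($\call$, $\calb$, the active collection, endpoint counts), so only the numerical values vary with~$\pot$. The paper phrases the induction directly in terms of the increase-functions $\dfunc_\ell$ of the entries of $\tblist$ and then deduces linearity of $y_S$ via $y_S = \sum_{i \in \Lambda_S}\dfunc_i(\pot)$, whereas you make the affineness of $y_S$ the inductive invariant; this is an equivalent, arguably slightly cleaner, bookkeeping choice. One small inconsistency: in the inductive step you assert $m \in \{1,2\}$ for the number of active endpoints, silently excluding the possibility that $\obj_i$ is an already-tight external edge with $m=0$ (division by zero in your formula); you do correctly treat that case at the very end when computing $\dfunc_\obj$, so it should be included in the inductive step too, where it just gives $\Delta = 0$.
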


\begin{proof}
	For some integer $\ell$, with $0 \le \ell \le |\tblist|$, let $T_\ell$ be the sublist of $\tblist$ defined as $T_\ell = (\tblist_1, \tblist_2, \dots, \tblist_\ell)$.
	Observe that the tie-breaking list $T_\ell$ is respected by $\pot$ for every~${\pot \in [a,b]}$, then there is an object-collection $\actObj_\ell$ corresponding to $T_\ell$ and $[a,b]$.
	Also, for each index~${1 \le \ell \le |\tblist|}$, the entry~$\tblist_\ell$ of the tie-breaking list is processed at iteration~$\ell$ of $\GP(\pot,\tblist)$, and thus~$\tblist_\ell \in \actObj_{\ell-1}$.
	Note that ${\actObj_{|\tblist|} = \actObj}$, thus, by defining $\tblist_{|\tblist|+1} = \obj$, we have $\tblist_\ell \in \actObj_{\ell-1}$ for index~${\ell =|\tblist|+1}$ as well.

	Let~$\dfunc_\ell(\pot)$ be the increase-function of $\tblist_\ell$ corresponding to $T_{\ell-1}$ and $[a,b]$.
	We will show by induction that, for each $1 \le \ell \leq |\tblist| + 1$, function~$\dfunc_\ell(\pot)$ is linear in $\pot$.
	The lemma will follow by taking $\ell = |\tblist| + 1$.

	First, consider the case $\ell = 1$.
	If $\tblist_1$ is a subset, then it must be a singleton~$\{v\}$ for some vertex $v$, because each maximal subset is a singleton in the first iteration of the growth-phase. Since at this moment, $y = 0$, the increase to turn $\tblist_1$ tight is $\dfunc_1(\pot) = \vpen_v^\pot = \vpen_v + \pot$.
	Similarly, if $\tblist_1$ is an edge $e$, it must have two active endpoints at the first iteration, and thus $\dfunc_1(\pot) = \ecost_e / 2$.

	Now, assume that $\dfunc_i(\pot)$ is linear in $\pot$ for each $i < \ell$.
	Consider some subset~$S$, and let $\Lambda_S$ denote the set of indices $i$ such that $i < \ell$ and $S$ was an active subset at the beginning of iteration~$i$.
	It follows that, at the beginning of iteration~$\ell$, $y_S = \sum_{i \in \Lambda_S} \dfunc_i(\pot)$, which is linear in~$\pot$, since each $\dfunc_i$ is linear in~$\pot$, and $\Lambda_S$ does not depend on $\pot$.

	Let $\cala$ be the collection of active subsets at the beginning of iteration~$\ell$.
	If $\tblist_\ell$ is an external edge $e$ with no active endpoints, then it is tight, thus $\dfunc_\ell(\pot) = 0$.
	In the case that $\tblist_\ell$ is an edge $e$ with at least one active endpoint, the left-hand side of inequality~\eqref{sec:preliminaries:restriction_2} corresponding to $e$ is $\sum_{S: e \in \delta(S)} y_S$, that is a linear function, say~$f(\pot)$.
	It follows that the increase of $y_S$ for active subsets~$S$ which is necessary to turn~$e$ tight is
	\[
		\dfunc_\ell(\pot) = \frac{\ecost_e - f(\pot)}{|\{S \in \cala : e \in \delta(S)|}.
	\]
	In both cases, $\dfunc_\ell(\pot)$ is linear in~$\pot$.

	Similarly, if $\tblist_\ell$ is a subset $L$, then the left-hand side of inequality~\eqref{sec:preliminaries:restriction_1} corresponding to $L$ is $\sum_{S:S \subseteq L} y_S$, that is a function which is linear in $\pot$.
	Let this function be $f(\pot)$, then,
	\[
		\dfunc_\ell(\pot) = \vpen_L^\pot - f(\pot) = \vpen_L + \pot|L| - f(\pot),
	\]
	which is linear in~$\pot$.
\end{proof}

\subsubsection{Diverging potential}

For each potential $\pot \in [a,b]$, the element of the object-collection $\actObj$ corresponding to $\tblist$ and $[a,b]$ that is processed at iteration $|\tblist| + 1$ of the growth-phase is determined by the increase-function which evaluates to the smallest value.
Suppose that executing $\GP(a, \tblist)$ processes some edge or subset~$\obj_1$ at iteration $|\tblist|+1$, while executing $\GP(b, \tblist)$ processes a distinct edge or subset~$\obj_2$ at iteration $|\tblist|+1$.
In this situation, $\obj_1$ is selected in the former execution because ${\dfunc_{\obj_1}(a) \le \dfunc_{\obj'}(a)}$ for any ${\obj' \in \actObj}$,  while $\obj_2$ is selected in the latter because ${\dfunc_{\obj_2}(b) \le \dfunc_{\obj'}(b)}$ for any ${\obj' \in \actObj}$.
Since increase-functions are linear, this implies that there must be some potential $p \in (a,b)$ such that, at iteration~${|\tblist|+1}$, $\GP(\pot, \tblist)$ processes~$\obj_1$ for $\pot < p$, and processes a distinct element in $\actObj$ for $\pot > p$.

Formally, we say that $p \in (a,b)$ is a \emph{diverging potential} if there are edges or subsets ${\obj, \obj' \in \actObj}$ with distinct increase-functions $\dfunc_{\obj}, \dfunc_{\obj'}$ such that: (i)~${\dfunc_{\obj}(p) = \dfunc_{\obj'}(p)}$; and (ii)~execution $\GP(p, \tblist)$ processes $\obj$ at iteration $|\tblist|+1$.
Observe that, since increase-functions are linear, the number of diverging potentials is finite and they can be computed in polynomial time using standard line-intersection algorithms.

The definition implies that, for each pair of consecutive diverging potentials, there is an edge or a subset which is tight at iteration $|\tblist|+1$ when the growth-phase is executed with either diverging potential.
This is formalized next.

\begin{mylemma}
	\label{subsec:TS:lemma:diverging_pot}
	Let $p_1 < p_2 < \cdots < p_m$ be the sequence of diverging potentials in~$(a,b)$, and let $p_0 = a$ and $p_{m+1} = b$.
	Also, let $\pot_0 \in (p_j, p_{j+1})$ for some $0 \le j \le m$.
	If $\obj \in \actObj$ is processed at iteration $|\tblist|+1$ of $\GP(\pot_0, \tblist)$, then $\obj$ is tight at the end of iteration $|\tblist|+1$ of $\GP(\pot, \tblist)$ for every $\pot \in [p_j,p_{j+1}]$.
\end{mylemma}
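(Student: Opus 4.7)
My goal is to show that, for every $\pot \in [p_j, p_{j+1}]$, the increase-function of $\obj$ is dominated by every other: $\dfunc_\obj(\pot) \le \dfunc_{\obj'}(\pot)$ for all $\obj' \in \actObj$. Once this is in hand, the increment $\Delta$ chosen in iteration $|\tblist|+1$ of $\GP(\pot,\tblist)$ equals $\dfunc_\obj(\pot)$, and therefore $\obj$ becomes tight at the end of that iteration by the very definition of the increase-function.

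The first step is to establish the stronger claim that $\obj$ is in fact the element processed by $\GP(\pot,\tblist)$ at iteration $|\tblist|+1$ for \emph{every} $\pot$ in the \emph{open} interval $(p_j, p_{j+1})$. I would argue by contradiction: if the processed element were not constant on $(p_j, p_{j+1})$, I would locate a ``transition point'' $q$ in this interval at which the processed element changes. Using the linearity of every increase-function provided by Lemma~\ref{subsec:TS:lemma:dfunc_is_poly}, continuity forces a tie $\dfunc_{\obj_1}(q) = \dfunc_{\obj_2}(q)$ at the minimum value, with (at least) one of $\obj_1, \obj_2$ being the element processed at $q$. That is exactly the content of being a diverging potential, contradicting that $(p_j, p_{j+1})$ contains none. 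Given this constancy, $\dfunc_\obj(\pot) = \min_{\obj' \in \actObj} \dfunc_{\obj'}(\pot)$ holds throughout $(p_j, p_{j+1})$, and extending the inequality to the endpoints $p_j$ and $p_{j+1}$ is immediate from the continuity of linear functions by taking one-sided limits from within the open interval.

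The main obstacle is carrying out the ``transition point'' argument cleanly while respecting the exact wording of the definition of a diverging potential, which requires the two witnessing elements to have \emph{distinct} increase-functions. To handle this, I would first show that if two elements of $\actObj$ have identical increase-functions as linear maps on $[a,b]$, then the lexicographic secondary rule of the algorithm selects between them in a way that does not depend on~$\pot$; so any genuine change in the processed element across $(p_j, p_{j+1})$ must be mediated by a pair of genuinely distinct linear increase-functions crossing at some interior point. Once that is noted, taking $q$ to be the infimum of potentials in $(\pot_0, \pot_1)$ (or the supremum in the symmetric case) at which the processed element differs from~$\obj$ yields, via one-sided limits, a tie between $\dfunc_\obj$ and the increase-function of the new processed element, supplying exactly the witness pair required to mark $q$ as a diverging potential.
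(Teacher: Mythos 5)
Your proposal is correct and follows essentially the same route as the paper's: linearity of the increase-functions together with the absence of diverging potentials in $(p_j,p_{j+1})$ forces $\dfunc_\obj$ to remain pointwise minimal on the open interval, and continuity extends this to the endpoints, whence $\obj$ is tight after the increase $\Delta = \dfunc_\obj(\pot)$. The only organizational difference is that you prove the (slightly stronger) constancy of the processed element on $(p_j,p_{j+1})$, carefully ruling out crossings mediated by identical increase-functions, whereas the paper argues directly that $\dfunc_\obj$ cannot cross any competitor's increase-function inside the interval; both reductions rest on the same observation that a change of minimizer there would produce a diverging potential.
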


\begin{proof}
	Because $\obj$ has been processed, $\dfunc_\obj(\pot_0) \le \dfunc_{\obj'}(\pot_0)$ for any ${\obj' \in \actObj}$ with distinct increase-function.
	Since $p_j$ and $p_{j+1}$ are consecutive potentials, functions $\dfunc_{\obj}$ and $\dfunc_{\obj'}$ do not intersect for any $\pot \in (p_j,p_{j+1})$. Thus, by the continuity of the increase-functions, ${\dfunc_\obj(\pot) \le \dfunc_{\obj'}(\pot)}$ for each $\pot \in [p_j,p_{j+1}]$.
	It follows that, at iteration $|\tblist|+1$ of $\GP(\pot, \tblist)$, the increase of the variables is $\dfunc_\obj(\pot)$, and~$\obj$ is tight at the end of the iteration.
\end{proof}

\subsubsection{Threshold-tuple search algorithm}

We describe the \emph{threshold-tuple search} algorithm, denoted by $\TS$.
A listing is given in Algorithm~\ref{subsec:TS:alg:TS}.
To find a threshold-tuple, the algorithm constructs a tie-breaking list $\tblist$ iteratively. It maintains the following invariants:~(i)~$\tblist$~is respected by~$\pot$ for any~${\pot \in [a,b]}$; (ii) $\GW(a, \tblist)$ returns a tree spanning less than $k$ vertices,  and $\GW(b, \tblist)$ returns a tree spanning at least $k$ vertices.

Initialize the variables by making $\tblist = \emptyset$, $a = 0$ and $b = \ecost_E + 1$,
and start the iteration process.
At each iteration~$i$, compute the sequence of diverging potentials $p_1 < \cdots < p_m$ in the range $(a,b)$, and let $p_0 = a$ and $p_{m+1} = b$.
Since $\GW(a, \tblist)$ returns less than $k$ vertices and $\GW(b, \tblist)$ returns at least $k$ vertices, there must be some index~$j$ for which $\GW(p_j, \tblist)$ returns less than $k$ vertices, and $\GW(p_{j+1}, \tblist)$ returns at least $k$ vertices.

Discover the edge or subset $\obj$ which is processed at iteration~$i$ of $\GP(\pot_0,\tblist)$ for some arbitrary $\pot_0 \in (p_j ,p_{j+1})$.
Then, extend the tie-breaking list by appending $\obj$ at the end of $\tblist$. Note that, by Lemma~\ref{subsec:TS:lemma:diverging_pot}, the extended tie-breaking list $\tblist$ is respected by~$\pot$ for every $\pot \in [p_j ,p_{j+1}]$, then the first invariant is maintained by making $a = p_j$ and $b = p_{j+1}$.

Following, check if executing $\GW(p_j, \tblist)$ using the updated tie-breaking list returns at least $k$ vertices. If this is so, since $\GW(p_j, \sublist)$ returns less than $k$ vertices, $(p_j, \tblist)$ is a threshold-tuple, and the algorithm stops.
Analogously, if $\GW(p_{j+1}, \tblist)$ returns less than $k$ vertices, $(p_{j+1}, \tblist)$ is a threshold-tuple.
If neither is the case, then the second invariant is maintained, and the process is repeated.

Next, we show that $\TS$ indeed finds a threshold-tuple.

\begin{algorithm}[htp]
	\DontPrintSemicolon
	\caption{$\TS$}
	\label{subsec:TS:alg:TS}

	Initialize $a \gets 0$, $b \gets \ecost_{E} + 1$, and $\tblist \gets \emptyset$\;
	\For{\textnormal{$i \gets 1, 2, \cdots$}}
	{
		Let $p_1 < \cdots < p_m$ be the diverging potentials w.r.t. $\tblist$ in  $[a,b]$\;
		Let $p_0 \gets a$ and $p_{m+1} \gets b$\;
		Find index $j$ such that: \newline
			$\GW(p_j, \tblist)$ returns less than $k$ vertices, and  \newline
			$\GW(p_{j+1}, \tblist)$ returns at least $k$ vertices

		\nonl\;

		Let $\pot_0 = (p_j + p_{j+1})/2$\;
		Let $\obj$ be the $i$-th processed edge or subset of $\GP(\pot_0,\tblist)$\label{sec:TS:alg:TS_i-th_object}\;
		Append $\obj$ to the end of~$\tblist$\;
		Update $a \gets p_{j}$ and $b \gets p_{j+1}$\;

		\nonl\;

		\uIf(\label{sec:TS:alg:TS_if2}){\textnormal{$\GW(p_j, \tblist)$ returns at least $k$ vertices}}
			{\Return{$(p_j,\tblist)$}}

		\ElseIf(\label{sec:TS:alg:TS_if1}){\textnormal{$\GW(p_{j+1}, \tblist)$ returns less than $k$ vertices}}
			{\Return{$(p_{j+1},\tblist)$}}
	}
\end{algorithm}

\begin{mylemma}
	\label{subsec:TS:lemma:TS_inv}
	If executing $\GW(0, \emptyset)$ returns a tree spanning less than $k$ vertices, then $\TS$ returns a threshold-tuple in polynomial time.
\end{mylemma}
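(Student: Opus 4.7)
The plan is to verify two loop invariants throughout $\TS$: (I1)~the current $\tblist$ is respected by every $\pot \in [a, b]$, and (I2)~$\GW(a, \tblist)$ returns a tree spanning fewer than $k$ vertices while $\GW(b, \tblist)$ returns one spanning at least $k$ vertices. At initialization, $\tblist = \emptyset$ makes (I1) trivial, (I2) holds at $a = 0$ by hypothesis, and at $b = \ecost_E + 1$ by Lemma~\ref{subsec:GW:lemma:high_lambda_spans_V_G_r}, which guarantees $V(\hatt) = V$.

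For the inductive step, assume both invariants hold at the start of iteration~$i$. By Lemma~\ref{subsec:TS:lemma:dfunc_is_poly}, all increase-functions are linear in~$\pot$, so the diverging potentials $p_1 < \cdots < p_m$ in $(a,b)$ form a finite, polynomially computable set. By (I2) applied to $p_0 = a, p_1, \dots, p_m, p_{m+1} = b$, a discrete intermediate value argument yields the index~$j$ required by the algorithm. Taking $\pot_0 = (p_j + p_{j+1})/2$ and letting $\obj$ be the edge or subset processed at iteration~$i$ of $\GP(\pot_0, \tblist)$, Lemma~\ref{subsec:TS:lemma:diverging_pot} ensures $\obj$ is tight at the end of iteration~$i$ of $\GP(\pot, \tblist)$ for every $\pot \in [p_j, p_{j+1}]$. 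Because (I1) already fixes the first $i - 1$ processed events across this entire subinterval, the state entering iteration~$i$ (in particular $\call$, $\calb$, and the membership of $\obj$ in $\caly \cup \calz$) does not depend on~$\pot$, so the top-priority rule of Algorithm~\ref{subsec:GP:alg:GP} selects $\obj$. Hence the extended list is respected by every $\pot \in [p_j, p_{j+1}]$, and (I1) is preserved after $a \gets p_j, b \gets p_{j+1}$. If either early-return test fires, then since $\sublist$ equals the previous $\tblist$ and the prior (I2) supplies the complementary span, the returned pair satisfies both clauses of the threshold-tuple definition; otherwise the two tests failing is exactly the new (I2).

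For polynomial termination, each iteration of $\TS$ increments $|\tblist|$ by one. Since (I1) forces $\GP(\pot, \tblist)$ to execute at least $|\tblist|$ iterations for any $\pot \in [a, b]$, Lemma~\ref{subsec:GP:lemma:GP_iterates_at_most_3V-3} yields $|\tblist| \le 3|V| - 3$. Were $|\tblist| = 3|V| - 3$ reached without termination, both $\GP(a, \tblist)$ and $\GP(b, \tblist)$ would process exactly the sequence dictated by $\tblist$ and then stop, producing identical $(T, \calb)$ and hence identical $\hatt$, contradicting (I2). Therefore $\TS$ halts within $3|V| - 3$ iterations, each of polynomial cost (line intersections of a polynomial number of linear functions, plus $\calo(1)$ calls to $\GP$ and $\PP$).

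The main obstacle, inside the inductive step, is justifying that $\obj$ is genuinely selected at iteration~$i$ for every $\pot \in [p_j, p_{j+1}]$, and in particular still belongs to $\caly \cup \calz$ rather than having become internal, or absorbed into a larger union, or already included in $\calb$. This reduces to the structural fact noted above: since the old $\tblist$ is respected throughout $[p_j, p_{j+1}]$, the processed sequence through iteration $i - 1$ is identical across the subinterval, so the laminar collection $\call$ and the processed subfamily $\calb$ at the start of iteration~$i$ are independent of~$\pot$, and tightness of $\obj$ (Lemma~\ref{subsec:TS:lemma:diverging_pot}) then suffices to place it in $\caly \cup \calz$.
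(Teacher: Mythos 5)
Your proof is correct and follows essentially the same approach as the paper's: the same two loop invariants, the same base case via Lemma~\ref{subsec:GW:lemma:high_lambda_spans_V_G_r}, and the same termination-by-contradiction argument against the $3|V|-3$ bound of Lemma~\ref{subsec:GP:lemma:GP_iterates_at_most_3V-3}. You simply spell out the inductive step (via Lemmas~\ref{subsec:TS:lemma:dfunc_is_poly} and~\ref{subsec:TS:lemma:diverging_pot} and the priority rule) that the paper compresses into ``by the description of the algorithm.''
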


\begin{proof}
	We argue that the algorithm indeed mantains the invariants: (i)~$\tblist$~is respected by~$\pot$ for any~${\pot \in [a,b]}$; (ii)~$\GW(a, \tblist)$ returns a tree spanning less than $k$ vertices,  and $\GW(b, \tblist)$ returns a tree spanning at least $k$ vertices.
	Observe that an empty tie-breaking list is respected by~$\pot$ for every potential~$\pot$. Also, $\GW(0, \emptyset)$ returns a tree spanning less than $k$ vertices by assumption, and $\GW(\ecost_E+1, \emptyset)$ returns a tree spanning at least $k$ vertices by Lemma~\ref{subsec:GW:lemma:high_lambda_spans_V_G_r}.
	Thus, at the beginning of the iteration process, both invariants hold.
	By the description of the algorithm, if the invariants hold at the beginning of an iteration, then they hold at the end of the iteration as well.

	Observe that, if the algorithm stops, then it returns a threshold-tuple. We will show that $\TS$ stops after at most $3|V|-3$ iterations.
	Suppose not. Then, at the beginning of iteration $3|V|-2$, the tie-breaking list has size $|\tblist| = 3|V|-3$.
	By Lemma~\ref{subsec:GP:lemma:GP_iterates_at_most_3V-3}, the growth-phase executes at most $3|V|-3$ iterations. It follows that the sequence of edges and subsets processed by $\GP(a,\tblist)$ and $\GP(b,\tblist)$ are the same, because we know that $\tblist$ is respected by $\pot \in [a,b]$ from invariant~(i).
	This implies that $\GP(a,\tblist)$ and $\GP(b,\tblist)$ have the same output, and thus $\GW(a,\tblist)$ and $\GW(b,\tblist)$ return identical trees, which contradicts invariant~(ii).
\end{proof}


\section{Finding a solution with a threshold-tuple}
\label{sec:find_solution}

Assume that we are given a threshold-tuple $(\pot,\tblist)$.
Then, by executing $\GW(\pot,\tblist)$ and $\GW(\pot,\sublist)$, we obtain two trees, one which spans less than $k$ vertices, and the other, at least $k$ vertices.
Let $\hatt_\subminus$ and $\hatt_\subplus$ be the trees
with less than $k$ vertices, and at least $k$ vertices, respectively,
and let $(T_\subminus,\calb_\subminus)$ and $(T_\subplus,\calb_\subplus)$ denote the corresponding pairs computed in the growth-phase.

Denote by $\call_\subminus$ and $\call_\subplus$ the laminar collections computed in the growth-phase.
While these laminar collections might be (slightly) different, Lemma~\ref{subsec:threshold-tuple:lemma:main_lemma} states that $\GP(\pot,\tblist)$ and $\GP(\pot,\sublist)$ compute identical vectors~$y$.
Moreover each edge of $T_\subminus \cup T_\subplus$ and
each subset of $\calb_\subminus \cup \calb_\subplus$ is tight.
The objective of this section is to find a tree $T$ from $T_\subminus \cup T_\subplus$ spanning at least $k$ vertices.

Recall that that $\call_\subminus(S)$ is the collection of subsets which contain some but not all vertices of $S$, and let $T^*$ be an optimal solution.
Also, suppose that~$V$ is the minimal subset in~$\call_\subminus$ containing $V(T^*)$; we will relax this assumption later.
To bound the cost of this optimal solution, one can use Lemma~\ref{subsec:approx:lemma:opt_lower_bound}, which gives the lower bound
\[
\left(
\sum_{L \in \call_\subminus(V)} y_L - \pot|V \setminus V(T^*)|
\right)
\leq \ecost_{E(T^*)} + \vpen_{V \setminus V(T^*)}.
\]

To bound the cost of $T$, we need to prove an inequality analogous to the one given by Goemans and Williamson's analysis~\cite{Goemans1995}, i.e., we want that
\[
\ecost_{E(T)} + 2 \vpen_{V \setminus V(T)} ~\leq~ 2
\left(
\sum_{L \in \call_\subminus(V)} y_L
-  \pot|V \setminus V(T)|
\right).
\]

Therefore, to obtain a $2$-approximation, it is sufficient to find a tree such that $|V(T)| \le  |V(T^*)|$.
But, since any feasible solution must span at least $k$ vertices, this means that our goal is to compute a tree $T$ that spans exactly $k$ vertices.
Such a tree is constructed by the \emph{picking-vertices} algorithm.
This algorithm follows some of the ideas due to Garg~\cite{Garg2005}.

We now explain our assumption that $V$ is the minimal subset containing~$V(T^*)$.
Recall that ${V \in \call_\subminus}$, thus there are two subsets $L_1, L_2 \in \call_\subminus$ such that ${V = L_1 \cup L_2}$, because $\call_\subminus$ is binary laminar.
If $V$ is not the minimal subset in $\call_\subminus$ containing $V(T^*)$, then either $L_1$ or $L_2$ contains $V(T^*)$.
But, because~$T^*$ contains the root~$r$, we can decide which one of
$L_1$ or $L_2$ contains~$V(T^*)$.
Therefore, either our assumption holds, or we can safely reduce the instance's size.


\subsection{Picking \texorpdfstring{$k$}{} vertices}
\label{subsec:picking}

Let $\obj$ be the edge or subset processed at iteration~$|\tblist|$ of $\GP(\pot, \tblist)$.
Assume for the moment that $\obj$ is a subset.
Then, Lemma~\ref{subsec:threshold-tuple:lemma:main_lemma} implies that ${T_\subminus = T_\subplus}$ and collections $\calb_\subminus$ and $\calb_\subplus$ differ in exactly one subset.
Since~$\hatt_\subminus$ spans fewer vertices than $\hatt_\subplus$, collection~$\calb_\subminus$ contains more subsets than collection~$\calb_\subplus$, by the monotonicity of the pruning operation.
Therefore, in this case, ${\calb_\subminus = \calb_\subplus \cup \{\obj\}}$.

We would like to obtain a tree from $T_\subplus$ spanning exactly~$k$ vertices.
To use Goemans and Williamson's analysis, this tree needs to be pruned with a collection of tight subsets.
If we prune $T_\subplus$ using $\calb_\subplus$, the resulting tree $\hatt_\subplus$ would have too many vertices, but if we add subset $\obj$ to the pruning collection, then the pruning algorithm would delete a sequence of tight subsets, until finding the tree~$\hatt_\subminus$, as illustrated in Figure~\ref{subsec:PV:fig:compute-path_subset_case}.
We will soon show that these subsets form a path in~$\hatt_\subminus$.

\begin{figure}[htb!]
	\center
	\scalebox{1}{

\tikzset{every picture/.style={line width=0.75pt}} 

\begin{tikzpicture}[x=0.75pt,y=0.75pt,yscale=-1,xscale=1]

\draw    (180,270) -- (223.75,253.75) ;

\draw    (150,270) -- (114.5,257.5) ;

\draw    (343.75,253.75) -- (294.5,270.5) ;

\draw    (234.5,257.5) -- (283.75,273.75) ;

\draw    (354.5,256.5) -- (427.25,273.75) ;

\draw  [fill={rgb, 255:red, 255; green, 255; blue, 255 }  ,fill opacity=1 ] (384.5,273.75) .. controls (384.5,260.63) and (403.64,250) .. (427.25,250) .. controls (450.86,250) and (470,260.63) .. (470,273.75) .. controls (470,286.87) and (450.86,297.5) .. (427.25,297.5) .. controls (403.64,297.5) and (384.5,286.87) .. (384.5,273.75) -- cycle ;
\draw  [color={rgb, 255:red, 128; green, 128; blue, 128 }  ,draw opacity=1 ][fill={rgb, 255:red, 128; green, 128; blue, 128 }  ,fill opacity=1 ] (330,253.75) .. controls (330,246.16) and (336.16,240) .. (343.75,240) .. controls (351.34,240) and (357.5,246.16) .. (357.5,253.75) .. controls (357.5,261.34) and (351.34,267.5) .. (343.75,267.5) .. controls (336.16,267.5) and (330,261.34) .. (330,253.75) -- cycle ;
\draw  [color={rgb, 255:red, 128; green, 128; blue, 128 }  ,draw opacity=1 ][fill={rgb, 255:red, 128; green, 128; blue, 128 }  ,fill opacity=1 ] (270,273.75) .. controls (270,266.16) and (276.16,260) .. (283.75,260) .. controls (291.34,260) and (297.5,266.16) .. (297.5,273.75) .. controls (297.5,281.34) and (291.34,287.5) .. (283.75,287.5) .. controls (276.16,287.5) and (270,281.34) .. (270,273.75) -- cycle ;
\draw  [color={rgb, 255:red, 128; green, 128; blue, 128 }  ,draw opacity=1 ][fill={rgb, 255:red, 128; green, 128; blue, 128 }  ,fill opacity=1 ] (210,253.75) .. controls (210,246.16) and (216.16,240) .. (223.75,240) .. controls (231.34,240) and (237.5,246.16) .. (237.5,253.75) .. controls (237.5,261.34) and (231.34,267.5) .. (223.75,267.5) .. controls (216.16,267.5) and (210,261.34) .. (210,253.75) -- cycle ;
\draw  [dash pattern={on 0.84pt off 2.51pt}]  (150,270) -- (180,270) ;

\draw  [color={rgb, 255:red, 128; green, 128; blue, 128 }  ,draw opacity=1 ][fill={rgb, 255:red, 128; green, 128; blue, 128 }  ,fill opacity=1 ] (30,273.75) .. controls (30,266.16) and (36.16,260) .. (43.75,260) .. controls (51.34,260) and (57.5,266.16) .. (57.5,273.75) .. controls (57.5,281.34) and (51.34,287.5) .. (43.75,287.5) .. controls (36.16,287.5) and (30,281.34) .. (30,273.75) -- cycle ;
\draw    (103.75,253.75) -- (57,269) ;

\draw  [color={rgb, 255:red, 128; green, 128; blue, 128 }  ,draw opacity=1 ][fill={rgb, 255:red, 128; green, 128; blue, 128 }  ,fill opacity=1 ] (90,253.75) .. controls (90,246.16) and (96.16,240) .. (103.75,240) .. controls (111.34,240) and (117.5,246.16) .. (117.5,253.75) .. controls (117.5,261.34) and (111.34,267.5) .. (103.75,267.5) .. controls (96.16,267.5) and (90,261.34) .. (90,253.75) -- cycle ;

\draw (344,230) node [scale=0.9,color={rgb, 255:red, 0; green, 0; blue, 0 }  ,opacity=1 ] [align=left] {$\displaystyle D_{\ell}$};
\draw (286,250) node [scale=0.9,color={rgb, 255:red, 0; green, 0; blue, 0 }  ,opacity=1 ] [align=left] {$\displaystyle D_{\ell-1}$};
\draw (226,230) node [scale=0.9,color={rgb, 255:red, 0; green, 0; blue, 0 }  ,opacity=1 ] [align=left] {$\displaystyle D_{\ell-2}$};
\draw (427.25,273.75) node [scale=1.44] [align=left] {$\displaystyle \hatt_\subminus$};
\draw (104.5,230) node [scale=0.9,color={rgb, 255:red, 0; green, 0; blue, 0 }  ,opacity=1 ] [align=left] {$\displaystyle D_{2}$};
\draw (43,250) node [scale=0.9,color={rgb, 255:red, 0; green, 0; blue, 0 }  ,opacity=1 ] [align=left] {$\displaystyle D_{1}$};
\draw (426.5,238) node [scale=0.9,color={rgb, 255:red, 0; green, 0; blue, 0 }  ,opacity=1 ] [align=left] {$\displaystyle D_{\ell+1}$};
\draw (215.5,203.5) node [scale=1.44] [align=left] {$\displaystyle \hatt_\subplus$};

\end{tikzpicture}}
	\caption{Sequence of deleted subsets when $\obj$ is a subset.}
	\label{subsec:PV:fig:compute-path_subset_case}
\end{figure}
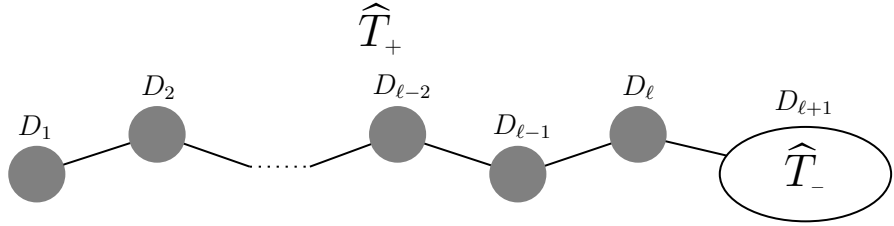

\begin{mydefinition}
Consider a laminar collection of subsets~$\calb$, and let $H$ be a connected graph. A sequence of subsets $D_1,D_2, \dots,D_{\ell+1}$ which partition~$V(H)$ is called a \emph{\mbox{subset path} of~$H$ processed with~$\calb$} if $H[D_{\ell+1}]$ is pruned with $\calb$, and, for each $1\le i \le \ell$,
\begin{enumerate}[itemsep=-2pt,topsep=-6pt,label=(\roman*)]
	\item $H$ has an edge connecting a vertex $v_i \in D_i$ to $D_{i+1}$;
	\item there is $B_i \in \calb$ such that $D_i \subseteq B_i$ and $B_i \cap D_j = \emptyset$, for every ${j > i}$;
	\item $H[D_i \cup \dots \cup D_{\ell+1}]$ is connected and pruned with $\{B \in \calb : v_i \notin B\}$.
\end{enumerate}
\end{mydefinition}

Figure~\ref{subsec:PV:fig:compute-path_subset_case2} repeats Figure~\ref{subsec:PV:fig:compute-path_subset_case}, but representing the corresponding processed subsets and vertices from the definition of a subset path.
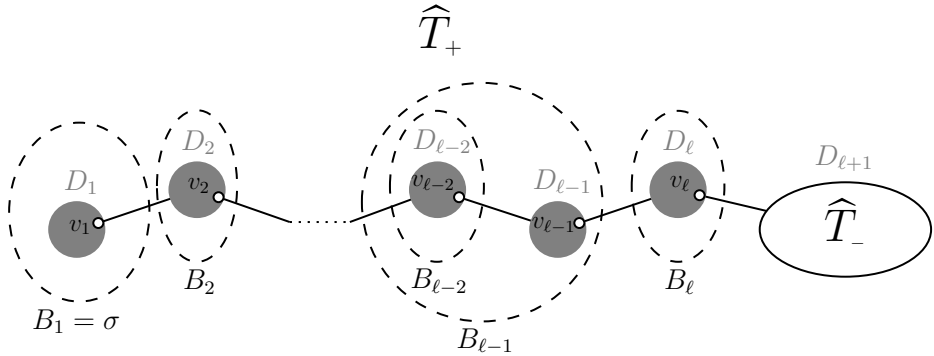
\begin{figure}[htb!]
	\center
	\scalebox{1}{

\tikzset{every picture/.style={line width=0.75pt}} 

\begin{tikzpicture}[x=0.75pt,y=0.75pt,yscale=-1,xscale=1]

\draw    (180,270) -- (223.75,253.75) ;

\draw    (150,270) -- (114.5,257.5) ;

\draw    (343.75,253.75) -- (294.5,270.5) ;

\draw    (234.5,257.5) -- (283.75,273.75) ;

\draw    (354.5,256.5) -- (427.25,273.75) ;

\draw  [fill={rgb, 255:red, 255; green, 255; blue, 255 }  ,fill opacity=1 ] (384.5,273.75) .. controls (384.5,260.63) and (403.64,250) .. (427.25,250) .. controls (450.86,250) and (470,260.63) .. (470,273.75) .. controls (470,286.87) and (450.86,297.5) .. (427.25,297.5) .. controls (403.64,297.5) and (384.5,286.87) .. (384.5,273.75) -- cycle ;
\draw  [color={rgb, 255:red, 128; green, 128; blue, 128 }  ,draw opacity=1 ][fill={rgb, 255:red, 128; green, 128; blue, 128 }  ,fill opacity=1 ] (330,253.75) .. controls (330,246.16) and (336.16,240) .. (343.75,240) .. controls (351.34,240) and (357.5,246.16) .. (357.5,253.75) .. controls (357.5,261.34) and (351.34,267.5) .. (343.75,267.5) .. controls (336.16,267.5) and (330,261.34) .. (330,253.75) -- cycle ;
\draw  [color={rgb, 255:red, 128; green, 128; blue, 128 }  ,draw opacity=1 ][fill={rgb, 255:red, 128; green, 128; blue, 128 }  ,fill opacity=1 ] (270,273.75) .. controls (270,266.16) and (276.16,260) .. (283.75,260) .. controls (291.34,260) and (297.5,266.16) .. (297.5,273.75) .. controls (297.5,281.34) and (291.34,287.5) .. (283.75,287.5) .. controls (276.16,287.5) and (270,281.34) .. (270,273.75) -- cycle ;
\draw  [color={rgb, 255:red, 128; green, 128; blue, 128 }  ,draw opacity=1 ][fill={rgb, 255:red, 128; green, 128; blue, 128 }  ,fill opacity=1 ] (210,253.75) .. controls (210,246.16) and (216.16,240) .. (223.75,240) .. controls (231.34,240) and (237.5,246.16) .. (237.5,253.75) .. controls (237.5,261.34) and (231.34,267.5) .. (223.75,267.5) .. controls (216.16,267.5) and (210,261.34) .. (210,253.75) -- cycle ;
\draw  [dash pattern={on 0.84pt off 2.51pt}]  (150,270) -- (180,270) ;

\draw  [color={rgb, 255:red, 128; green, 128; blue, 128 }  ,draw opacity=1 ][fill={rgb, 255:red, 128; green, 128; blue, 128 }  ,fill opacity=1 ] (30,273.75) .. controls (30,266.16) and (36.16,260) .. (43.75,260) .. controls (51.34,260) and (57.5,266.16) .. (57.5,273.75) .. controls (57.5,281.34) and (51.34,287.5) .. (43.75,287.5) .. controls (36.16,287.5) and (30,281.34) .. (30,273.75) -- cycle ;
\draw  [dash pattern={on 4.5pt off 4.5pt}] (10,265) .. controls (10,240.15) and (25.67,220) .. (45,220) .. controls (64.33,220) and (80,240.15) .. (80,265) .. controls (80,289.85) and (64.33,310) .. (45,310) .. controls (25.67,310) and (10,289.85) .. (10,265) -- cycle ;
\draw  [dash pattern={on 4.5pt off 4.5pt}] (83.75,251.88) .. controls (83.75,230.82) and (92.7,213.75) .. (103.75,213.75) .. controls (114.8,213.75) and (123.75,230.82) .. (123.75,251.88) .. controls (123.75,272.93) and (114.8,290) .. (103.75,290) .. controls (92.7,290) and (83.75,272.93) .. (83.75,251.88) -- cycle ;
\draw  [dash pattern={on 4.5pt off 4.5pt}] (200,252.03) .. controls (200,231.06) and (210.49,214.06) .. (223.44,214.06) .. controls (236.38,214.06) and (246.88,231.06) .. (246.88,252.03) .. controls (246.88,273) and (236.38,290) .. (223.44,290) .. controls (210.49,290) and (200,273) .. (200,252.03) -- cycle ;
\draw  [dash pattern={on 4.5pt off 4.5pt}] (320.63,252.03) .. controls (320.63,231.06) and (330.98,214.06) .. (343.75,214.06) .. controls (356.52,214.06) and (366.88,231.06) .. (366.88,252.03) .. controls (366.88,273) and (356.52,290) .. (343.75,290) .. controls (330.98,290) and (320.63,273) .. (320.63,252.03) -- cycle ;
\draw  [dash pattern={on 4.5pt off 4.5pt}] (186,260) .. controls (186,226.86) and (212.86,200) .. (246,200) .. controls (279.14,200) and (306,226.86) .. (306,260) .. controls (306,293.14) and (279.14,320) .. (246,320) .. controls (212.86,320) and (186,293.14) .. (186,260) -- cycle ;
\draw    (103.75,253.75) -- (54.5,270.5) ;

\draw  [color={rgb, 255:red, 0; green, 0; blue, 0 }  ,draw opacity=1 ][fill={rgb, 255:red, 255; green, 255; blue, 255 }  ,fill opacity=1 ] (52,270.5) .. controls (52,269.12) and (53.12,268) .. (54.5,268) .. controls (55.88,268) and (57,269.12) .. (57,270.5) .. controls (57,271.88) and (55.88,273) .. (54.5,273) .. controls (53.12,273) and (52,271.88) .. (52,270.5) -- cycle ;
\draw  [color={rgb, 255:red, 128; green, 128; blue, 128 }  ,draw opacity=1 ][fill={rgb, 255:red, 128; green, 128; blue, 128 }  ,fill opacity=1 ] (90,253.75) .. controls (90,246.16) and (96.16,240) .. (103.75,240) .. controls (111.34,240) and (117.5,246.16) .. (117.5,253.75) .. controls (117.5,261.34) and (111.34,267.5) .. (103.75,267.5) .. controls (96.16,267.5) and (90,261.34) .. (90,253.75) -- cycle ;
\draw  [color={rgb, 255:red, 0; green, 0; blue, 0 }  ,draw opacity=1 ][fill={rgb, 255:red, 255; green, 255; blue, 255 }  ,fill opacity=1 ] (112,257.5) .. controls (112,256.12) and (113.12,255) .. (114.5,255) .. controls (115.88,255) and (117,256.12) .. (117,257.5) .. controls (117,258.88) and (115.88,260) .. (114.5,260) .. controls (113.12,260) and (112,258.88) .. (112,257.5) -- cycle ;
\draw  [color={rgb, 255:red, 0; green, 0; blue, 0 }  ,draw opacity=1 ][fill={rgb, 255:red, 255; green, 255; blue, 255 }  ,fill opacity=1 ] (232,257.5) .. controls (232,256.12) and (233.12,255) .. (234.5,255) .. controls (235.88,255) and (237,256.12) .. (237,257.5) .. controls (237,258.88) and (235.88,260) .. (234.5,260) .. controls (233.12,260) and (232,258.88) .. (232,257.5) -- cycle ;
\draw  [color={rgb, 255:red, 0; green, 0; blue, 0 }  ,draw opacity=1 ][fill={rgb, 255:red, 255; green, 255; blue, 255 }  ,fill opacity=1 ] (292,270.5) .. controls (292,269.12) and (293.12,268) .. (294.5,268) .. controls (295.88,268) and (297,269.12) .. (297,270.5) .. controls (297,271.88) and (295.88,273) .. (294.5,273) .. controls (293.12,273) and (292,271.88) .. (292,270.5) -- cycle ;
\draw  [color={rgb, 255:red, 0; green, 0; blue, 0 }  ,draw opacity=1 ][fill={rgb, 255:red, 255; green, 255; blue, 255 }  ,fill opacity=1 ] (352,256.5) .. controls (352,255.12) and (353.12,254) .. (354.5,254) .. controls (355.88,254) and (357,255.12) .. (357,256.5) .. controls (357,257.88) and (355.88,259) .. (354.5,259) .. controls (353.12,259) and (352,257.88) .. (352,256.5) -- cycle ;

\draw (344,230) node [scale=0.9,color={rgb, 255:red, 128; green, 128; blue, 128 }  ,opacity=1 ] [align=left] {$\displaystyle D_{\ell}$};
\draw (286,250) node [scale=0.9,color={rgb, 255:red, 128; green, 128; blue, 128 }  ,opacity=1 ] [align=left] {$\displaystyle D_{\ell-1}$};
\draw (226,230) node [scale=0.9,color={rgb, 255:red, 128; green, 128; blue, 128 }  ,opacity=1 ] [align=left] {$\displaystyle D_{\ell-2}$};
\draw (427.25,273.75) node [scale=1.44] [align=left] {$\displaystyle \hatt_\subminus$};
\draw (104.5,230) node [scale=0.9,color={rgb, 255:red, 128; green, 128; blue, 128 }  ,opacity=1 ] [align=left] {$\displaystyle D_{2}$};
\draw (46,250) node [scale=0.9,color={rgb, 255:red, 128; green, 128; blue, 128 }  ,opacity=1 ] [align=left] {$\displaystyle D_{1}$};
\draw (426.5,238) node [scale=0.9,color={rgb, 255:red, 128; green, 128; blue, 128 }  ,opacity=1 ] [align=left] {$\displaystyle D_{\ell+1}$};
\draw (43.5,320) node [scale=0.9] [align=left] {$\displaystyle B_{1} =\obj$};
\draw (105,300) node [scale=0.9] [align=left] {$\displaystyle B_{2}$};
\draw (224.5,300) node [scale=0.9] [align=left] {$\displaystyle B_{\ell-2}$};
\draw (344.5,300) node [scale=0.9] [align=left] {$\displaystyle B_{\ell}$};
\draw (247.5,330) node [scale=0.9] [align=left] {$\displaystyle B_{\ell-1}$};
\draw (45.5,272.5) node [scale=0.8] [align=left] {$\displaystyle v_{1}$};
\draw (104.75,251.75) node [scale=0.8] [align=left] {$\displaystyle v_{2}$};
\draw (221.5,250.5) node [scale=0.8] [align=left] {$\displaystyle v_{\ell-2}$};
\draw (281.75,271.75) node [scale=0.8] [align=left] {$\displaystyle v_{\ell-1}$};
\draw (344.75,250.75) node [scale=0.8] [align=left] {$\displaystyle v_{\ell}$};
\draw (225.5,176.5) node [scale=1.44] [align=left] {$\displaystyle \hatt_\subplus$};

\end{tikzpicture}}
	\caption{A subset path $D_1,\dots,D_{l+1}$ when $\obj$ is a subset.}
	\label{subsec:PV:fig:compute-path_subset_case2}
\end{figure}
\smallskip

Next lemma states that, under certain conditions, there exists a subset path, which can be computed in polynomial time.

\begin{mylemma}
	\label{subsec:PV:lemma:compute-path}
	If $H$ is $\calb$-connected and pruned with~${\{ B \in \calb : v \notin B \}}$ for some ${v \in V(H)}$, then a subset path of $H$ processed with~$\calb$ can be computed in polynomial time.
\end{mylemma}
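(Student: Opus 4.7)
The plan is to construct the subset path iteratively by repeatedly peeling off a minimal degree-one subset of $\calb$ that contains the current special vertex. Initialize $H^{(1)}=H$ and $w_1=v$; for $i\geq 1$, if $H^{(i)}$ is already pruned with $\calb$, set $\ell=i-1$ and $D_{\ell+1}=V(H^{(i)})$ and stop. Otherwise maintain the invariant that $H^{(i)}$ is connected, $\calb$-connected, and pruned with $\{B\in\calb:w_i\notin B\}$. The invariant forces every $B\in\calb$ with $|\delta_{H^{(i)}}(B)|=1$ to contain $w_i$, so by laminarity these $B$'s form a chain; let $B_i$ be the minimal one in this chain and $(x_i,y_i)$ its unique cut edge with $x_i\in B_i$ and $y_i\notin B_i$. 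Set $v_i:=x_i$, $w_{i+1}:=y_i$, $D_i:=V(H^{(i)})\cap B_i$, and $H^{(i+1)}:=H^{(i)}-D_i$. Since $w_i\in D_i$, each $D_i$ is nonempty, so the loop terminates in at most $|V(H)|$ rounds, with each step clearly polynomial.

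Conditions (i) and (ii) follow directly from the construction: $v_i=x_i\in V(H^{(i)})\cap B_i=D_i$, and the edge $(x_i,y_i)\in E(H)$ joins $v_i$ to $y_i\in D_{i+1}$ (either because $w_{i+1}\in B_{i+1}$ by the next step's construction when $i<\ell$, or because $D_{\ell+1}=V(H^{(\ell+1)})$ when $i=\ell$); and $D_i\subseteq B_i$ trivially, while $V(H^{(i+1)})\cap B_i=\emptyset$ yields $B_i\cap D_j=\emptyset$ for every $j>i$. For condition (iii), observe that $H[D_i\cup\cdots\cup D_{\ell+1}]=H^{(i)}$, which is connected by the invariant; furthermore any $B\in\calb$ with $|\delta_{H^{(i)}}(B)|=1$ contains $w_i$, hence contains $B_i$ by laminarity and minimality of $B_i$, and so contains $v_i=x_i$, yielding the pruning property with respect to $\{B:v_i\notin B\}$.

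The main obstacle is preserving the invariant from $H^{(i)}$ to $H^{(i+1)}=H^{(i)}-D_i$, especially the pruning property with respect to $\{B:w_{i+1}\notin B\}$. Connectedness of $H^{(i+1)}$ is easy: $|\delta_{H^{(i)}}(D_i)|=|\delta_{H^{(i)}}(B_i)|=1$ and $H^{(i)}[D_i]$ is connected by $\calb$-connectedness of $H^{(i)}$ applied to $B_i$, so $D_i$ attaches as a pendant block. Preservation of $\calb$-connectedness is a routine laminarity case split on $B$ versus $B_i$, using that any edge within $H^{(i)}[V(H^{(i)})\cap B]$ crossing $V(H^{(i)})\cap B_i$ must lie in $\delta_{H^{(i)}}(B_i)=\{(x_i,y_i)\}$. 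The delicate case is pruning: for $B\supsetneq B_i$ with $y_i\notin B$, a short edge count gives $|\delta_{H^{(i+1)}}(B)|=|\delta_{H^{(i)}}(B)|-1$ (the only edge removed from $\delta_{H^{(i)}}(B)$ is $(x_i,y_i)$), so a violation would require $|\delta_{H^{(i)}}(B)|=2$. To exclude this, apply $\calb$-connectedness of $H^{(i)}$ to $B$: any edge joining $V(H^{(i)})\cap B_i$ and $V(H^{(i)})\cap(B\setminus B_i)$ must lie in $\delta_{H^{(i)}}(B_i)=\{(x_i,y_i)\}$, but $y_i\notin B$ means this edge is not internal to $B$, so there are no such edges; connectivity of $H^{(i)}[V(H^{(i)})\cap B]$ then forces $V(H^{(i)})\cap(B\setminus B_i)=\emptyset$, whence $\delta_{H^{(i)}}(B)=\delta_{H^{(i)}}(B_i)$ has size $1$, contradicting $|\delta_{H^{(i)}}(B)|=2$.
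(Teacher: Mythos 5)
Your proof is correct, but it follows a genuinely different route from the paper's. Both proofs build the subset path by repeatedly deleting an inclusion-minimal degree-one subset $B_i$ of $\calb$, and both must establish the same core claim: at every stage, every degree-one member of $\calb$ contains $B_i$ (and hence contains the distinguished vertex). The paper proves this by a minimal-counterexample argument whose crux is the case where $B$ is \emph{disjoint} from $B_i$: it invokes $\calb$-connectedness of the original graph $H$ to exhibit a cycle in $H$, and then argues the pruning rule would have had to cut that cycle at an earlier iteration. You instead push a stronger forward invariant --- connectedness, $\calb$-connectedness, \emph{and} the pruning property hold for every intermediate graph $H^{(i)}$ --- which reverses which case is hard. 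The paper's hard case (disjointness) becomes immediate under your invariant, since $w_i\notin B$ already gives $|\delta_{H^{(i)}}(B)|\neq 1$ and the cut $\delta(B)$ is unaffected by removing $D_i$. Your hard case is $B\supsetneq B_i$ with $y_i\notin B$, which you settle by applying $\calb$-connectedness to $H^{(i)}$ itself: the unique edge leaving $B_i$ also leaves $B$, so connectivity of $H^{(i)}[V(H^{(i)})\cap B]$ forces $V(H^{(i)})\cap B = D_i$, whence $B$ vanishes from $H^{(i+1)}$. The extra cost is verifying that $\calb$-connectedness persists under pruning (routine), and the gain is a shorter, more local argument that dispenses with the paper's cycle construction and auxiliary partition.
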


\begin{proof}
	If $H$ is already pruned with $\calb$, then $V(H)$ is a subset path of $H$ processed with $\calb$.
	Thus, we can assume that executing $\PP(H,\calb)$ processes $\ell \ge 1$ iterations.
	Let~$H_i$ be the graph being pruned at the beginning of the $i$-th iteration, such that $H_1 = H$ and $H_{\ell+1}$ is the graph returned by the algorithm.
	Also, let $B_i \in \calb$ be the subset chosen to be deleted at this iteration.
	In each iteration of this execution, we choose some subset~$B_i \in \calb$ with $|\delta_{H_i}(B_i)| = 1$ which is inclusion-wise minimal.
	%
	Let ${D_{\ell+1} = V(H_{\ell+1})}$ and ${D_i = V(H_i) \cap B_i}$ for each ${1 \le i \le \ell}$.
	Note that each $D_i$ is connected by invariant~\ref{subsec:PP:lemma:PP_inv:2}.
	We will show that the sequence $D_1, D_2, \dots, D_{\ell+1}$ is a subset path of $H$ processed with $\calb$.

	For each $1 \le i \le \ell$, by the choice of $B_i$, there is an edge connecting a vertex $v_i \in D_i$ to some vertex  in $V(H_{i+1})$.
	We claim that, if ${|\delta_{H_i}(B)| = 1}$ for some $B \in \calb$, then $v_i \in B$.
	Since $H_1$ is pruned with ${\{ B \in \calb : v \notin B \}}$, any subset $B\in \calb$ with ${|\delta_{H_1}(B)| = 1}$ contains $v$. Then $B_1\subseteq B$, because $B_1$ is a inclusion-wise minimal subset with ${|\delta_{H_1}(B)| = 1}$. Thus, indeed, $v_1 \in B$.

	Now suppose that the claim is false, and let $i \ge 2$ be minimum such that ${|\delta_{H_i}(B)| = 1}$ for some ${B \in \calb}$ with $v_i \notin B$.
	If $B \cap B_i \ne \emptyset$, then either $B\subseteq B_i$ or $B_i \subseteq B$.
	But $B_i$ is an inclusion-wise minimal subset with degree one on~$H_i$, thus $B_i \subseteq B$, which contradicts $v_i \notin B$.
	Thus, assume that $B$~and~$B_i$ are disjoint.

	Let ${D = B \cap V(H_i)}$ and $R = V(H_i) \setminus (B \cup B_i)$, and observe that the three sets \mbox{$D_i, D, R$} partition the vertices of~$H_i$.
	Because both $D_i$ and $D$ have degree one on $H_i$, there is no edge between them, as otherwise $H_i$ would be disconnected.
	Thus, there is one edge from~$D_i$~to~$R$, and one edge from~$D$~to~$R$.

	If the unique edge leaving $D_{i-1}$ in $H_{i-1}$ connects $D_{i-1}$ to $R$, then~$D_i$ and~$D$ would have degree one on $H_{i-1}$,
	Because $B_i$ and $B$ are disjoint, and $\calb$ is laminar, only one of them may have a vertex of $D_{i-1}$.
	This implies that the other subset does not have $v_{i-1}$ and has degree one on $H_{i-1}$.
	This is a contradiction, since the claim holds for $i-1$.
	Thus, assume there is no edge between $D_{i-1}$ and $R$.

	Then, the unique edge leaving $D_{i-1}$ in $H_{i-1}$ has an extreme in either $D_i$ or $D$. Assume that this edge connects $D_{i-1}$ to $D_i$, as the other case is symmetrical.
	In this case, there is no edge between $D$ and $D_{i-1}$, and thus the degree of $B$ on $H_{i-1}$ is one.
	This implies that $v_{i-1} \in B$, and it follows that $B_{i-1} \cap B \ne \emptyset$. Therefore, $B_{i-1} \subseteq B$.

	Notice that $H$ is $B$-connected.
	Thus, we have two distinct edges between two connected graphs, $H[B]$ and $H[D_i \cup R]$, forming a cycle in $H$, say $C$.
	Since $C$ has vertices in both $D_{i-1}$ and $H_i$, at some iteration $j < i$, some, but not all vertices of $C$ were removed for the first time.
	But this implies that $|\delta_{H_j}(B_j)| \ge 2$, which is a contradiction to the choice of $B_j$.
	Hence, the claim~holds.

	We conclude that, for each ${1\le i \le \ell}$, $H_i$ is pruned with~${\{B \in \calb : v_i \notin B\}}$, and there is an edge from $v_i \in D_i$ to $D_{i+1}$. Then, $D_1, D_2, \dots, D_{\ell+1}$ is indeed a subset path of $H$ processed with $\calb$. This completes the lemma.
\end{proof}

\subsubsection{Finding a subset path}

In the remaining of this section, we consider ${H = T_\subminus \cup T_\subplus}$, i.e., we let $H$ be the graph with  with vertex set $V(T_\subminus) \cup V(T_\subplus)$ and edge set $E(T_\subminus) \cup E(T_\subplus)$, and let $\hath$ be the graph output by $\PP(H, \calb_\subplus)$.
By the monotonicity of the pruning operation, $\hath$ contains $\hatt_\subplus$, then $\hath$ spans at least $k$ vertices.
The goal is to find a subset path of $\hath$ leading to $\hatt_\subminus$.

Observe that, when $\obj$ is a subset, $\hath = \hatt_\subplus$, then  $\hath$ is pruned with ${\calb_\subminus \setminus \{\obj\}}$. If $v$ is a vertex in $\obj$, then $\hath$ is pruned with ${\{B \in \calb_\subminus : v \notin B \}}$. Therefore, Lemma~\ref{subsec:PV:lemma:compute-path} directly implies the following.

\begin{mylemma}
	\label{subsec:PV:lemma:compute-path_subset_case}
	If $\obj$ is a subset, then a subset path of $\hath$ processed with $\calb_\subminus$ can be computed in polynomial time.
\end{mylemma}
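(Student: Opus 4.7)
The plan is to invoke Lemma~\ref{subsec:PV:lemma:compute-path} with $H := \hath$, $\calb := \calb_\subminus$, and some vertex $v \in \obj \cap V(\hath)$. I would first specialize the setting using Lemma~\ref{subsec:threshold-tuple:lemma:main_lemma}\ref{subsec:threshold-tuple:lemma:main_lemma:3}: since $\obj$ is a subset, $T_\subminus = T_\subplus$ and, by the monotonicity remark at the beginning of Section~\ref{subsec:picking}, $\calb_\subminus = \calb_\subplus \cup \{\obj\}$, so $H = T_\subminus \cup T_\subplus = T_\subplus$ and $\hath = \PP(T_\subplus, \calb_\subplus) = \hatt_\subplus$. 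To choose $v$, note that $\hatt_\subminus \subseteq \hath \subseteq T_\subplus$, so Corollary~\ref{subsec:PP:coro:PP_subgraph_pruning} gives $\PP(\hath, \calb_\subminus) = \hatt_\subminus$. Since $|V(\hath)| > |V(\hatt_\subminus)|$ and $\hath$ is already pruned with $\calb_\subplus = \calb_\subminus \setminus \{\obj\}$, the first iteration of this pruning must delete $\obj$, forcing $|\delta_{\hath}(\obj)|=1$ and hence $\obj \cap V(\hath) \neq \emptyset$.

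The main obstacle is verifying that $\hath$ is $\calb_\subminus$-connected. I would first argue that $\call_\subminus = \call_\subplus$: processing a subset in the growth-phase does not modify $\call$, and Case~2 of the proof of Lemma~\ref{subsec:threshold-tuple:lemma:main_lemma} shows that after iteration $|\tblist|$ the two executions process the same sequence of edges. Consequently $\calb_\subminus \subseteq \call_\subplus$, and invariant~\ref{subsec:GP:lemma:GP_inv:edge_in_F_are_tight} yields that $T_\subplus$ is $\calb_\subminus$-connected. It then suffices to show that deleting a subset $B' \in \calb_\subplus$ of degree~$1$ from a $\calb_\subminus$-connected graph $H'$ preserves $\calb_\subminus$-connectedness; this follows from a short laminar case analysis on the relationship between $B'$ and each $B \in \calb_\subminus$, where the only nontrivial case is $B' \subsetneq B$, handled by observing that in $H'[B]$, which is connected, $B'$ is attached to $B \setminus B'$ by at most one edge, so any path in $H'[B]$ between two vertices of $B \setminus B'$ must avoid $B'$ entirely, leaving $H'[B \setminus B']$ connected. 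Iterating through the deletions in $\PP(T_\subplus,\calb_\subplus)$ then yields that $\hath$ itself is $\calb_\subminus$-connected.

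Finally, since $v \in \obj$, the collection $\{B \in \calb_\subminus : v \notin B\}$ excludes $\obj$ and is contained in $\calb_\subplus$; as $\hath$ is pruned with $\calb_\subplus$ by construction, it is also pruned with this smaller collection. All hypotheses of Lemma~\ref{subsec:PV:lemma:compute-path} are thus satisfied, and a direct application produces the desired subset path of $\hath$ processed with $\calb_\subminus$ in polynomial time.
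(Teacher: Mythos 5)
Your plan matches the paper's: reduce to Lemma~\ref{subsec:PV:lemma:compute-path} with $H = \hath$ and $\calb = \calb_\subminus$, observing that $\hath = \hatt_\subplus$ is already pruned with $\calb_\subplus = \calb_\subminus \setminus \{\obj\}$ and hence with $\{B \in \calb_\subminus : v \notin B\}$ for any $v \in \obj$. The paper's own justification is a one-liner that leaves the hypotheses ``$\hath$ is $\calb_\subminus$-connected'' and ``$v \in V(\hath)$'' implicit; you correctly identify and verify both, which makes the argument more complete. Two of your auxiliary steps are more elaborate than necessary, though. You do not need to establish $\call_\subminus = \call_\subplus$ by dissecting Case~2 of the proof of Lemma~\ref{subsec:threshold-tuple:lemma:main_lemma}: since $T_\subplus = T_\subminus$ and $\calb_\subminus \subseteq \call_\subminus$ by invariant~\ref{subsec:GP:lemma:GP_inv:subsets_in_B_are_tight}, invariant~\ref{subsec:GP:lemma:GP_inv:edge_in_F_are_tight} applied to the execution returning $(T_\subminus, \calb_\subminus)$ already gives that $H = T_\subminus$ is $\calb_\subminus$-connected. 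Likewise, your iterated laminar case analysis for why pruning preserves $\calb_\subminus$-connectedness, while correct, is short-circuited by the tree structure: $\hath$ is a connected subgraph of the tree $T_\subminus$, so for any $B \in \calb_\subminus$ and $u, v \in V(\hath) \cap B$ the unique $u$--$v$ path in $T_\subminus$ lies inside both $T_\subminus[V(T_\subminus) \cap B]$ (connected, since $T_\subminus$ is $B$-connected) and $\hath$, whence $\hath[V(\hath) \cap B]$ is connected.
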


Now, assume that $\obj$ is an edge.
For this case, Lemma~\ref{subsec:threshold-tuple:lemma:main_lemma} implies that $\calb_\subplus = \calb_\subminus$. Moreover, $T_\subplus$ and $T_\subminus$ span the same set of vertices, and the difference between them is exactly one edge, thus there exists a unique edge $e_\subplus \in E(T_\subplus)$ such that $e_\subplus \notin E(T_\subminus)$.

Unlike before, we cannot obtain a subset path from $\hatt_\subplus$ that leads to~$\hatt_\subminus$, because removing $e_\subplus$ from $\hatt_\subplus$ disconnects the graph.
Instead, we use the fact that~$H$ has exactly one cycle, because $H = T_\subminus + e_\subplus$. Let $C$ be this cycle, and notice that $C$ contains $e_\subplus$.
We begin by showing that $\hath$ also contains the cycle~$C$.

\begin{mylemma}
	If $\obj$ is an edge, then $\hath$ contains the cycle $C$.
\end{mylemma}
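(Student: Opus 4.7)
The plan is to apply Lemma~\ref{subsec:PP:lemma:structure} with the connected subgraph $D = C$ and the host graph $H$. Since $C \subseteq H$, it suffices to show that $C$ is connected (immediate, as it is a cycle) and that $C$ is pruned with $\calb_\subplus$, i.e., that $|\delta_C(B)| \neq 1$ for every $B \in \calb_\subplus$.

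Fix $B \in \calb_\subplus$. By Lemma~\ref{subsec:threshold-tuple:lemma:main_lemma}\ref{subsec:threshold-tuple:lemma:main_lemma:2}, $\calb_\subplus = \calb_\subminus$, and by invariant~\ref{subsec:GP:lemma:GP_inv:subsets_in_B_are_tight} applied to the execution producing $T_\subminus$, we have $B \in \call_\subminus$. Invariant~\ref{subsec:GP:lemma:GP_inv:edge_in_F_are_tight} together with Corollary~\ref{subsec:GP:corollary:GP_output} then ensures that $T_\subminus[B]$ is a connected subtree spanning $B$. Let $u, v$ be the endpoints of $e_\subplus$, so that $C$ consists of $e_\subplus$ together with the unique $u$-$v$ path $P$ in $T_\subminus$. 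Because paths in the tree $T_\subminus$ are unique and $T_\subminus[B]$ is a subtree, the intersection $V(P) \cap B$ is a (possibly empty) contiguous subpath of $P$; this is the only subtle point, but it follows directly from uniqueness of paths in a tree.

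The case analysis on how many endpoints of $e_\subplus$ lie in $B$ now completes the argument. If both $u, v \in B$, then the whole path $P$ lies inside $B$ (otherwise, two distinct $u$-$v$ paths would exist in $T_\subminus$, contradicting that it is a tree), so $V(C) \subseteq B$ and $|\delta_C(B)| = 0$. If exactly one of $u, v$ lies in $B$, the edge $e_\subplus$ contributes one crossing, while the contiguous subpath $V(P) \cap B$ contributes exactly one further crossing (its unique edge leaving $B$), giving $|\delta_C(B)| = 2$. If neither lies in $B$, then $e_\subplus \notin \delta(B)$ and $V(P) \cap B$ is either empty or an interior subpath, which contributes $0$ or $2$ edges to $\delta_C(B)$. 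In every case $|\delta_C(B)| \in \{0,2\}$, so $C$ is pruned with $\calb_\subplus$.

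Having verified both hypotheses of Lemma~\ref{subsec:PP:lemma:structure}, we conclude $C \subseteq \hath$. The main (mild) obstacle is the tree-path-in-subtree observation; everything else is routine bookkeeping from the invariants of the growth-phase and the main structural lemma on threshold-tuples.
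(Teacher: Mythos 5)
Your proof is correct, but it takes a genuinely different route from the paper's. The paper first shows that $e_\subplus \in \hatt_\subplus$ (by a contradiction via Lemma~\ref{subsec:PP:lemma:structure}: otherwise $\hatt_\subplus \subseteq T_\subminus$, hence $\hatt_\subplus \subseteq \hatt_\subminus$, contradicting the vertex counts), then uses monotonicity to get $e_\subplus \in \hath$, and finally argues by contradiction on the first pruning step that would remove a vertex of $C$ (such a step would delete a set whose cut in the current graph contains at least two edges of $C$, impossible since only degree-one sets are deleted). You instead verify directly that $C$ itself is pruned with $\calb_\subplus$, by observing that each $B \in \calb_\subplus = \calb_\subminus$ lies in $\call_\subminus$, so $T_\subminus[B]$ is a connected subtree by invariant~\ref{subsec:GP:lemma:GP_inv:edge_in_F_are_tight}, and hence the intersection of $B$ with the unique $u$--$v$ path in $T_\subminus$ is a contiguous subpath; a short case analysis on which endpoints of $e_\subplus$ lie in $B$ then yields $|\delta_C(B)| \in \{0,2\}$ in every case. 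Once $C$ is known to be connected, contained in $H$, and pruned with $\calb_\subplus$, a single application of Lemma~\ref{subsec:PP:lemma:structure} gives $C \subseteq \hath$. Your approach is cleaner in that it applies Lemma~\ref{subsec:PP:lemma:structure} only once and does not need to reason about $\hatt_\subplus$ or about the order of pruning steps, at the price of explicitly invoking the (standard, but worth stating) fact that the intersection of a tree path with the vertex set of a connected subtree is contiguous. The paper's proof avoids that structural observation and instead leverages the already-established invariants about $\hatt_\subplus$ and the pruning dynamics.
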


\begin{proof}
	First observe that $\hatt_\subplus$ contains $e_\subplus$.
	If not, then we would have $\hatt_\subplus \subseteq T_\subminus$ and, since $\hatt_\subplus$ is pruned with $\calb_\subplus = \calb_\subminus$, this would imply $\hatt_\subplus \subseteq \hatt_\subminus$ by Lemma~\ref{subsec:PP:lemma:structure}.
	This is not possible because $|\hatt_\subminus| < |\hatt_\subplus|$. Thus, indeed $\hatt_\subplus$ contains $e_\subplus$.

	Because $T_\subplus \subseteq H$, we have $\hatt_\subplus \subseteq \hath$ by Lemma~\ref{subsec:PP:lemma:structure}.
	It follows that $\hath$ contains~$e_\subplus$, and thus $\hath$ contains at least some vertices of $C$.
	Suppose that~$\hath$ contains some, but not all vertices of $C$. Then, the pruning algorithm removed some vertices of $C$, and in the first time that this happened, the selected subset would have degree at least $2$, because the corresponding cut-set would have at least two edges from $C$.
	This is not possible, because only subsets with degree one are selected.
	Then, no vertex of $C$ was removed, and indeed $\hath$ contains $C$.
\end{proof}

To construct a subset path of $\hath$ when $\obj$ is an edge, we show that there is an edge $e$ of the cycle, such that pruning $\hath - e$ leads to $\hatt_\subminus$ by a sequence of deleted subsets that form a path.

\begin{mylemma}
	\label{subsec:PV:lemma:compute-path_edge_case}
	If $\obj$ is an edge, then there is an edge $e \in E(C)$ such that a subset path $D_1, D_2, \dots, D_{\ell+1}$ of $\hath - e$ processed with $\calb_\subminus$ can be computed in polynomial time.
	Moreover, $e_\subplus$ connects subsets $D_i$ and $D_{i+1}$ for some $i\ge 1$, and $e$ connects $D_1$ and $D_j$ for some $j \ge i+1$.
\end{mylemma}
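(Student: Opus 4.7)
The plan is to select a cycle edge $e \in E(C) \setminus \{e_\subplus\}$ for which $\PP(\hath - e, \calb_\subminus) = \hatt_\subminus$, and then invoke Lemma~\ref{subsec:PV:lemma:compute-path} on $\hath - e$ to produce the subset path. First I would show $\PP(\hath - e_\subplus, \calb_\subminus) = \hatt_\subminus$: since $\hatt_\subminus \subseteq \hath$ by Lemma~\ref{subsec:PP:lemma:structure} (as $\hatt_\subminus$ is pruned with $\calb_\subminus = \calb_\subplus$ and $\hatt_\subminus \subseteq T_\subminus \subseteq H$) and $e_\subplus \notin E(\hatt_\subminus)$ (because $\hatt_\subminus \subseteq T_\subminus = H - e_\subplus$), the chain $\hatt_\subminus \subseteq \hath - e_\subplus \subseteq T_\subminus$ yields the identity via Corollary~\ref{subsec:PP:coro:PP_subgraph_pruning}. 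As $|V(\hath)| > |V(\hatt_\subminus)|$, at least one subset is deleted in this pruning.

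Next, let $B^*$ be the first subset deleted by this process; then $|\delta_\hath(B^*)| = 2$ with $e_\subplus \in \delta_\hath(B^*)$. Because $\hath$'s only cycle is $C$ and any nontrivial crossing of $V(C)$ by $B^*$ contributes at least two cycle edges to its boundary, the intersection $B^* \cap V(C)$ is a consecutive arc and both boundary edges of $B^*$ lie on $C$. Set $e$ to be the boundary edge of $B^*$ other than $e_\subplus$; then $e \in E(C) \setminus \{e_\subplus\}$. Since both $e$ and $e_\subplus$ have exactly one endpoint in $B^*$, pruning $B^*$ absorbs both: $(\hath - e) - B^* = \hath - B^* = (\hath - e_\subplus) - B^*$. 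By Corollary~\ref{subsec:PP:coro:PP_order}, we conclude $\PP(\hath - e, \calb_\subminus) = \hatt_\subminus$.

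To extract the subset path, I would apply Lemma~\ref{subsec:PV:lemma:compute-path} to $\hath - e$ with $v \in B^*$ taken as the endpoint of $e_\subplus$ in $B^*$. The $\calb_\subminus$-connectedness of $\hath - e$ follows because removing a single cycle edge preserves induced connectivity via the remaining arcs of $C$. For the pruned-with hypothesis, any problematic $B' \in \calb_\subminus$ with $v \notin B'$ and $|\delta_{\hath - e}(B')| = 1$ would satisfy $|\delta_\hath(B')| = 2$ with $e \in \delta_\hath(B')$, corresponding to a strictly smaller arc of $C$ that, after possibly refining $B^*$ to be inclusion-wise minimal in the relevant subfamily, contradicts the minimality of $B^*$.

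For the moreover claim, the construction in the proof of Lemma~\ref{subsec:PV:lemma:compute-path} gives $D_1 = V(\hath) \cap B^*$; the unique edge leaving $D_1$ in $\hath - e$ is $e_\subplus$, so $v_1 \in D_1$ is its endpoint in $B^*$ and its other endpoint lies in $D_2$, establishing that $e_\subplus$ connects $D_1$ and $D_2$ with $i = 1$. The removed edge $e$, whose endpoint in $B^*$ lies in $D_1$ and whose other endpoint lies outside $B^*$, joins $D_1$ to a subset $D_j$ reached later by the pruning, yielding $j \ge 2 = i+1$. The main obstacle is verifying the pruned-with hypothesis through the refinement of $B^*$ and the minimality argument on the arc structure, which demands careful interaction between the laminar family $\calb_\subminus$ and the cycle $C$ inside $\hath$.
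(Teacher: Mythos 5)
The setup you give is sound: $\PP(\hath - e_\subplus, \calb_\subminus) = \hatt_\subminus$ (via Corollary~\ref{subsec:PP:coro:PP_subgraph_pruning}), the first deleted subset $B^*$ satisfies $\delta_\hath(B^*) = \{e, e_\subplus\} \subseteq E(C)$, and $\PP(\hath - e, \calb_\subminus) = \hatt_\subminus$ by order-invariance. But the key hypothesis of Lemma~\ref{subsec:PV:lemma:compute-path} -- that $\hath - e$ is pruned with $\{B \in \calb_\subminus : v \notin B\}$ for $v$ the endpoint of $e_\subplus$ in $B^*$ -- genuinely fails for your choice of $e$: a problematic $B'$ with $|\delta_{\hath - e}(B')| = 1$ and $v \notin B'$ need not lie inside $B^*$, it may be the arc of $C$ on the \emph{other} side of $e$, disjoint from $B^*$. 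Concretely, let $\hath$ be the cycle $C$ on vertices $a_1, a_2, a_3, a_4, a_5$ with a pendant $r$ attached to $a_5$, with $\{a_1\},\dots,\{a_4\} \in \calb_\subminus$ and $e_\subplus$ the edge $a_1a_2$. If the first subset processed by $\PP(\hath-e_\subplus,\calb_\subminus)$ is $B^*=\{a_2\}$, then $v=a_2$ and your $e$ is the edge $a_2a_3$; but $\{a_3\}$ has degree one in $\hath - e$ and does not contain $a_2$, so the required pruned-with condition fails, and since $\{a_3\}\not\subseteq B^*$, no ``refinement of $B^*$'' can rule it out. The claim that $\hath - e$ is $\calb_\subminus$-connected is similarly left unargued and is not automatic (an $L\in\calb_\subminus$ containing both endpoints of $e$ need not contain the remaining arc of $C$ connecting them).

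The paper chooses $e$ in a fundamentally different way that avoids this. Writing $v'$ for the other endpoint of $e_\subplus$, it first prunes $\hath - e_\subplus$ with the \emph{restricted} collection $\{B\in\calb_\subminus : v'\notin B\}$ to obtain an intermediate graph $\hath_1$, then walks the root-to-$v$ path and takes $u$ to be its last surviving vertex and $u'$ the next one, setting $e=uu'$; it then proves $\hath - e$ is pruned with $\{B : u'\notin B\}$ -- keyed to $u'$, not to $v$. Finally, it applies Lemma~\ref{subsec:PV:lemma:compute-path} \emph{twice} (first from $u'$ with $\{B\in\calb_\subminus : v'\notin B\}$, then from $v'$ with the full $\calb_\subminus$) and concatenates the two subset paths. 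Your single application starting at $v$ cannot absorb the subsets sitting between $v'$ and $u'$, which is exactly what the second application handles and is also why the lemma's ``moreover'' clause allows a general $i\ge 1$ rather than the $i=1$ your picture forces.
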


\begin{proof}
	Define ${H_1 = \hath - e_\subplus}$, which is connected because $\hath$ contains $C$, and $e_\subplus$ is in $C$.
	Since~$\hath$ is pruned with~$\calb_\subminus$, the first processed subset must contain exactly one extreme of~$e_\subplus$.
	Let~$v$ be such an extreme, and~$v'$ be the other extreme of~$e_\subplus$.

	Now, define ${\calb_1 = \{B \in \calb_\subminus : v' \notin B\}}$, and obtain a graph $\hath_1$ by executing $\PP(H_1, \calb_1)$.
	In Figure~\ref{subsec:PV:fig:compute-path_edge_case1}, the grey circles denotes the sets deleted when pruning~$H_1$.
	Because no subset in $\calb_1$ contains $v'$, and $\hath_1$ is connected, $\hath_1$~contains a path $K_{v'}$ starting in the root~$r$ and ending in~$v'$.
	Consider the unique path $K_v$ in $H_1$ starting in the root~$r$ and ending and~$v$, and let~$u$ be the last vertex in this path which is in $V(\hath_1)$, i.e., $u$ is the last vertex of $K_v$ which was not deleted by the pruning algorithm.
	Note that $u$ must be a vertex of the cycle, since $K_v$ contains at least one vertex of $K_{v'}$ which is also in $C$.

	\begin{figure}[htb!]
		\center
		\hspace*{-10pt}
		\scalebox{1}{\input{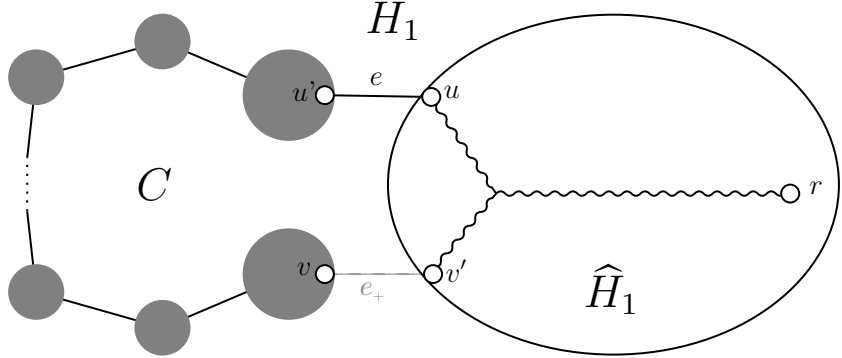}}
		\caption{Pruning $H_1$ using $\calb_1$.}
		\label{subsec:PV:fig:compute-path_edge_case1}
	\end{figure}

	Let $u'$ be the vertex that follows $u$ in $K_{v}$, and observe that $C$ has an edge~$e$ connecting $u$ to $u'$.
	Define ${H_2 = \hath - e}$, and ${\calb_2 = \{B \in \calb_\subminus : u' \notin B\}}$.
	We~claim that $H_2$ is pruned with~$\calb_2$.
	Suppose not, and let $B$ be a subset in~$\calb_2$ with ${|\delta_{H_2}(B)| = 1}$.
	Since $\hath$ is pruned with $\calb$, we know that $B$ contains exactly one extreme of $e$, thus $u \in B$, and $|\delta_{\hath_1}(B)|\ge 1$.
	Since $\hath_1$ does not contain $u'$, it does not contain~$e$ either, thus $\hath_1 \subseteq H_2$.
	This implies $|\delta_{\hath_1}(B)| \le |\delta_{H_2}(B)| = 1$, and thus $|\delta_{\hath_1}(B)| =1$.
	Because $\hath_1$ is pruned with~$\calb_1$, this implies that $B \notin \calb_1$, then $v' \in B$.
	Since $H_2$ has two edge-disjoint paths, one from $v'$ to the root $r$, and other from $v'$ to $u'$, it follows that $|\delta_{H_2}(B)|\ge 2$, which is a contradiction.
	Therefore, $H_2$ is pruned with~$\calb_2$.

	We will obtain a subset path of $H_2$ processed with $\calb_\subminus$ in two steps.
	~First, since $H_2$ is pruned with~$\{B \in \calb_\subminus : u', v' \notin B\} \subseteq \calb_2$, we can use Lemma~\ref{subsec:PV:lemma:compute-path}, and obtain a subset path $D_1, D_2, \dots, D_{i+1}$ of~$H_2$ such that $u' \in D_1$, and $H_2[D_{i+1}]$ is pruned with $\{B \in \calb_\subminus : v' \notin B\}$.
	~Second, using Lemma~\ref{subsec:PV:lemma:compute-path} again, we obtain a subset path $D'_1, D'_2, \dots, D'_{m+1}$ of~$H_2[D_{i+1}]$ such that $v' \in D'_1$, and $H_2[D'_{m+1}]$ is pruned with~$\calb_\subminus$.
	Then $D_1, \dots, D_{i}, D'_1, \dots,D'_m, D'_{m+1}$ is a subset path of $\hath-e$ processed with $\calb_\subminus$.

	Note that $H_2$ has exactly one edge between $D_i$ and $D_{i+1}$, thus this edge must be $e_\subplus$, and since $v' \in D'_1$, $e_\subplus$ must connect $D_m$ to $D'_1$. Also, observe that $e$ connects $D_1$ to some $D'_j$ for $j \ge 1$, forming a cycle with vertices in $D_1, \dots, D_i, D'_1, \dots, D'_j$.
	See Figure~\ref{subsec:PV:fig:compute-path_edge_case2}.
	\begin{figure}[htb!]
		\center
		\hspace*{-10pt}
		\scalebox{1}{\input{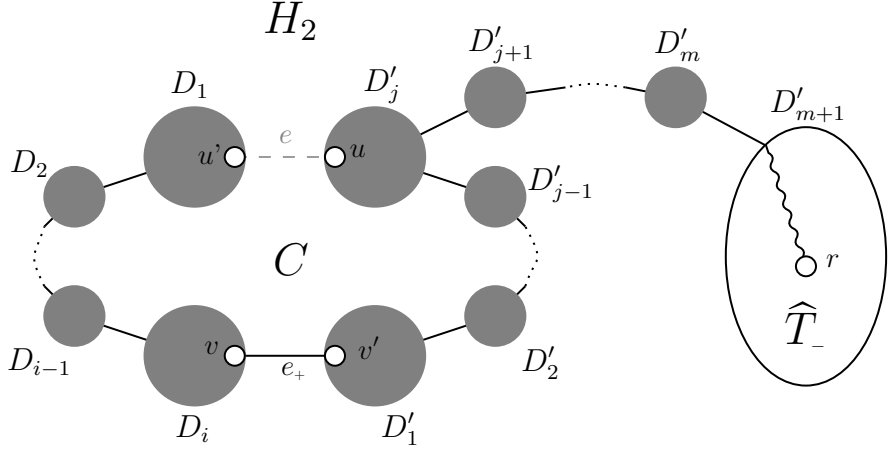}}
		\caption{A subset path $D_1, \dots, D_{i}, D'_1, \dots,D'_m, D'_{m+1}$.}
		\label{subsec:PV:fig:compute-path_edge_case2}
	\end{figure}
\end{proof}

\subsubsection{Augmenting the path}

Assume that we already computed a subset path~$D_1, D_2, \dots, D_{\ell+1}$ of~$\hath$ processed with $\calb_\subminus$.
Recall that $\hath$ contains $\hatt_\subplus$ as subgraph, and thus spans at least~$k$ vertices.
Also, observe that $D_{\ell+1}$ corresponds to the vertices of~$\hatt_\subminus$, and thus spans less than $k$ vertices.
Then, there exists an index~$t$ such that the subsets in $D_t, D_{t+1}, \dots, D_{\ell+1}$ cover at least $k$ vertices, but $D_{t+1}, \dots, D_{\ell+1}$ cover less than~$k$ vertices.

If $D_{t+1}, \dots, D_{\ell+1}$ cover exactly ${k - m}$ vertices, then we would like augment this sequence by iteratively picking subsets from $D_t$ which add up to $m$ vertices.
The goal is to find a sequence of subsets $P_1, P_2, \dots, P_s$ such that: each subset $P_i$ induces a connected subgraph in $\hath$; $P_1$ is connected to $D_{t+1}$; adjacent subsets are connected by an edge; and $|P_s| = 1$.

This can be done as follows.
Suppose that we already have computed a sequence $P_1, P_2, \dots, P_{i-1}$ for some $i \ge 1$, and want to pick $m$ vertices in $S \cap D_t$ for some subset $S \in \call_\subminus$ containing at least $m$ vertices. Also, suppose there is an edge connecting $P_{i-1}$ to some vertex~${v \in S \cap D_t}$.
To initialize the process, let $P_0 = D_{t+1}$ and $S = B_t$, where $B_t$ is the subset in $\calb_\subminus$ corresponding to~$D_t$ in the subset path.
Note that there is an edge connecting $D_{t+1}$ to some~$v \in D_t$.

If $m = 1$, then define $P_i = \{v\}$, and we are done.
Otherwise, we have $m \ge 2$, thus $S$ contains at least two vertices. Since $\call_\subminus$ is binary laminar, this implies that there are disjoint subsets $S_1$ and $S_2$ with $S = S_1 \cup S_2$, and such that $v \in S_1$.

If $|S_1 \cap D_t| \ge m$, then just make $S = S_1$, and repeat the process. This does not change the assumptions, except that it makes $S$ smaller.
Otherwise, ${|S_1 \cap D_t| < m}$, but this implies ${|S_2 \cap D_t| \ge 1}$ because we have ${|S \cap D_t| \ge m}$.
It follows that $\hath$ spans vertices in both $S_1$ and $S_2$, and, since $\hath$ is $S$-connected, there must be an edge connecting a vertex $v_1 \in S_1$ to a vertex $v_2 \in S_2$.
In this case, we define $P_i = S_1 \cap D_t$, update the variables by making $m = m - |P_i|$, $S = S_2$ and $v = v_2$, and repeat the process for $i + 1$. Note that $P_i$ induces a connected subgraph in $\hath$ because $\hath$ is $S_1$-connected.

Figure~\ref{subsec:PV:fig:second-phase} exemplifies this process.
In this figure, solid contours denote the subsets $S \in \call_\subminus$ considered in the process.

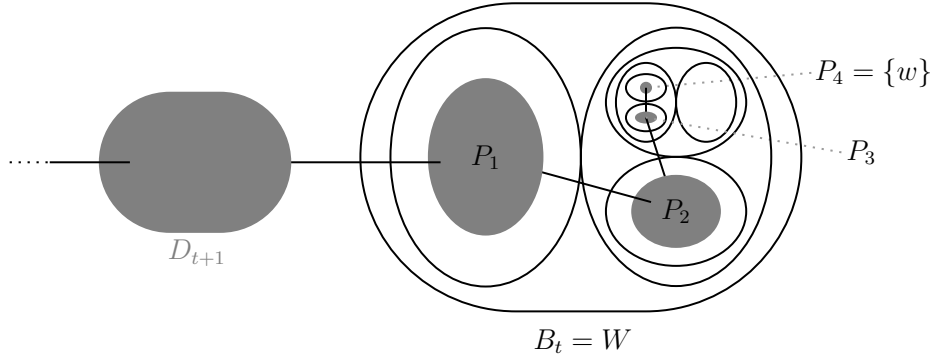
\begin{figure}[!htbp]
	\center
	\scalebox{1}{

\tikzset{every picture/.style={line width=0.75pt}} 

\begin{tikzpicture}[x=0.75pt,y=0.75pt,yscale=-1,xscale=1]

\draw [color={rgb, 255:red, 155; green, 155; blue, 155 }  ,draw opacity=1 ] [dash pattern={on 0.84pt off 2.51pt}]  (525,125) -- (442.5,132.5) ;

\draw  [color={rgb, 255:red, 128; green, 128; blue, 128 }  ,draw opacity=1 ][fill={rgb, 255:red, 128; green, 128; blue, 128 }  ,fill opacity=1 ] (440,132.5) .. controls (440,131.12) and (441.12,130) .. (442.5,130) .. controls (443.88,130) and (445,131.12) .. (445,132.5) .. controls (445,133.88) and (443.88,135) .. (442.5,135) .. controls (441.12,135) and (440,133.88) .. (440,132.5) -- cycle ;
\draw   (432.5,132.5) .. controls (432.5,128.7) and (436.98,125.63) .. (442.5,125.63) .. controls (448.02,125.63) and (452.5,128.7) .. (452.5,132.5) .. controls (452.5,136.3) and (448.02,139.38) .. (442.5,139.38) .. controls (436.98,139.38) and (432.5,136.3) .. (432.5,132.5) -- cycle ;
\draw   (432.5,147.5) .. controls (432.5,143.7) and (436.98,140.63) .. (442.5,140.63) .. controls (448.02,140.63) and (452.5,143.7) .. (452.5,147.5) .. controls (452.5,151.3) and (448.02,154.38) .. (442.5,154.38) .. controls (436.98,154.38) and (432.5,151.3) .. (432.5,147.5) -- cycle ;
\draw    (442.5,132.5) -- (442.5,147.5) ;

\draw  [color={rgb, 255:red, 128; green, 128; blue, 128 }  ,draw opacity=1 ][fill={rgb, 255:red, 128; green, 128; blue, 128 }  ,fill opacity=1 ] (437.5,147.5) .. controls (437.5,146.12) and (439.74,145) .. (442.5,145) .. controls (445.26,145) and (447.5,146.12) .. (447.5,147.5) .. controls (447.5,148.88) and (445.26,150) .. (442.5,150) .. controls (439.74,150) and (437.5,148.88) .. (437.5,147.5) -- cycle ;
\draw   (427.5,139.84) .. controls (427.5,128.88) and (434.22,120) .. (442.5,120) .. controls (450.78,120) and (457.5,128.88) .. (457.5,139.84) .. controls (457.5,150.8) and (450.78,159.69) .. (442.5,159.69) .. controls (434.22,159.69) and (427.5,150.8) .. (427.5,139.84) -- cycle ;
\draw   (457.5,139.84) .. controls (457.5,128.88) and (464.22,120) .. (472.5,120) .. controls (480.78,120) and (487.5,128.88) .. (487.5,139.84) .. controls (487.5,150.8) and (480.78,159.69) .. (472.5,159.69) .. controls (464.22,159.69) and (457.5,150.8) .. (457.5,139.84) -- cycle ;
\draw   (422.5,139.84) .. controls (422.5,124.7) and (438.17,112.42) .. (457.5,112.42) .. controls (476.83,112.42) and (492.5,124.7) .. (492.5,139.84) .. controls (492.5,154.99) and (476.83,167.27) .. (457.5,167.27) .. controls (438.17,167.27) and (422.5,154.99) .. (422.5,139.84) -- cycle ;
\draw   (422.5,194.69) .. controls (422.5,179.54) and (438.17,167.27) .. (457.5,167.27) .. controls (476.83,167.27) and (492.5,179.54) .. (492.5,194.69) .. controls (492.5,209.83) and (476.83,222.11) .. (457.5,222.11) .. controls (438.17,222.11) and (422.5,209.83) .. (422.5,194.69) -- cycle ;
\draw    (442.5,147.5) -- (457.5,194.69) ;

\draw  [color={rgb, 255:red, 128; green, 128; blue, 128 }  ,draw opacity=1 ][fill={rgb, 255:red, 128; green, 128; blue, 128 }  ,fill opacity=1 ] (435.63,194.69) .. controls (435.63,184.82) and (445.42,176.82) .. (457.5,176.82) .. controls (469.58,176.82) and (479.38,184.82) .. (479.38,194.69) .. controls (479.38,204.56) and (469.58,212.56) .. (457.5,212.56) .. controls (445.42,212.56) and (435.63,204.56) .. (435.63,194.69) -- cycle ;
\draw    (362.5,167.27) -- (445,190) ;

\draw  [color={rgb, 255:red, 128; green, 128; blue, 128 }  ,draw opacity=1 ][fill={rgb, 255:red, 128; green, 128; blue, 128 }  ,fill opacity=1 ] (334.17,167.27) .. controls (334.17,145.68) and (346.85,128.17) .. (362.5,128.17) .. controls (378.15,128.17) and (390.83,145.68) .. (390.83,167.27) .. controls (390.83,188.86) and (378.15,206.36) .. (362.5,206.36) .. controls (346.85,206.36) and (334.17,188.86) .. (334.17,167.27) -- cycle ;
\draw  [color={rgb, 255:red, 0; green, 0; blue, 0 }  ,draw opacity=1 ] (300,167.5) .. controls (300,124.7) and (334.7,90) .. (377.5,90) -- (442.5,90) .. controls (485.3,90) and (520,124.7) .. (520,167.5) -- (520,167.5) .. controls (520,210.3) and (485.3,245) .. (442.5,245) -- (377.5,245) .. controls (334.7,245) and (300,210.3) .. (300,167.5) -- cycle ;
\draw [color={rgb, 255:red, 155; green, 155; blue, 155 }  ,draw opacity=1 ] [dash pattern={on 0.84pt off 2.51pt}]  (540,165) -- (442.5,147.5) ;

\draw    (245,170) -- (340,170) ;

\draw  [color={rgb, 255:red, 128; green, 128; blue, 128 }  ,draw opacity=1 ][fill={rgb, 255:red, 128; green, 128; blue, 128 }  ,fill opacity=1 ] (170,170) .. controls (170,150.67) and (185.67,135) .. (205,135) -- (230,135) .. controls (249.33,135) and (265,150.67) .. (265,170) -- (265,170) .. controls (265,189.33) and (249.33,205) .. (230,205) -- (205,205) .. controls (185.67,205) and (170,189.33) .. (170,170) -- cycle ;
\draw    (145,170) -- (185,170) ;

\draw  [dash pattern={on 0.84pt off 2.51pt}]  (125,170) -- (145,170) ;

\draw   (410,167.27) .. controls (410,131.37) and (431.27,102.27) .. (457.5,102.27) .. controls (483.73,102.27) and (505,131.37) .. (505,167.27) .. controls (505,203.16) and (483.73,232.27) .. (457.5,232.27) .. controls (431.27,232.27) and (410,203.16) .. (410,167.27) -- cycle ;
\draw   (315,167.5) .. controls (315,131.6) and (336.27,102.5) .. (362.5,102.5) .. controls (388.73,102.5) and (410,131.6) .. (410,167.5) .. controls (410,203.4) and (388.73,232.5) .. (362.5,232.5) .. controls (336.27,232.5) and (315,203.4) .. (315,167.5) -- cycle ;

\draw (411,260) node [scale=0.9] [align=left] {$\displaystyle B_{t} =W$};
\draw (362.5,167.27) node [scale=0.9,color={rgb, 255:red, 0; green, 0; blue, 0 }  ,opacity=1 ] [align=left] {$\displaystyle P_{1}$};
\draw (457.5,194.69) node [scale=0.9,color={rgb, 255:red, 0; green, 0; blue, 0 }  ,opacity=1 ] [align=left] {$\displaystyle P_{2}$};
\draw (556,125) node [scale=0.9,color={rgb, 255:red, 0; green, 0; blue, 0 }  ,opacity=1 ] [align=left] {$\displaystyle P_{4} =\{w\}$};
\draw (550,165) node [scale=0.9,color={rgb, 255:red, 0; green, 0; blue, 0 }  ,opacity=1 ] [align=left] {$\displaystyle P_{3}$};
\draw (218.5,215.5) node [scale=0.9,color={rgb, 255:red, 128; green, 128; blue, 128 }  ,opacity=1 ] [align=left] {$\displaystyle D_{t+1}$};

\end{tikzpicture}}
	\caption{A sequence $\dots, D_{t+1}, P_1, P_2, P_3, P_4$.}
	\label{subsec:PV:fig:second-phase}
\end{figure}

The whole process is summarized as the \emph{picking-vertices} algorithm, denoted by $\PV(\pot,\tblist)$. A listing is given in Algorithm~\ref{subsec:PV:alg:PV}.
First, obtain a subset path $D_1, D_2, \dots, D_{\ell+1}$ using Lemmas~\ref{subsec:PV:lemma:compute-path_subset_case} or \ref{subsec:PV:lemma:compute-path_edge_case}, depending on whether $\obj$ is a subset, or an edge.
Then, find the largest $t$ such that $D_t, D_{t+1}, \dots, D_{\ell+1}$ cover at least $k$ vertices, and find a sequence $P_1, P_2, \dots, P_s$ from $D_t$ using the picking process.

Now, the sequence $P_s, \dots, P_1, D_{t+1}, \dots, D_{\ell+1}$ cover exactly~$k$ vertices, and is such that each subset induces a connected subgraph in $\hath$, and adjacent subsets are connected by an edge.
Thus, it induces a tree $T$ spanning exactly~$k$ vertices and containing the root~$r$, which is the output of the algorithm.

\begin{algorithm}[p]
	\DontPrintSemicolon
	\caption{$\PV(\pot,\tblist)$}
	\label{subsec:PV:alg:PV}

	\uIf{\textnormal{$\GW(\pot,\tblist)$ returns at least $k$ vertices}} {
		Let $(\tblist_\subplus, \tblist_\subminus) \gets (\tblist, \sublist)$
	}
	\Else{
		Let $(\tblist_\subplus, \tblist_\subminus) \gets (\sublist, \tblist)$
	}

	Let $(T_\subminus,\calb_\subminus) \gets \GP(\pot,\tblist_\subminus)$
	and $\call_\subminus$ be the computed laminar collection\;

	Let $(T_\subplus,\calb_\subplus) \gets \GP(\pot,\tblist_\subplus)$\;

	Let $H \gets T_\subminus \cup T_\subplus$ and $\hath \gets \PP(H,\calb_\subplus)$\;

	\nonl\;

	\tcp*[h]{finding a subset path}\;
	\uIf{\textnormal{$\tblist_{|\tblist|}$ is a subset}}{
		Obtain sequence $D_1,D_2,\dots,D_\ell$ using the algorithm of Lemma~\ref{subsec:PV:lemma:compute-path_subset_case}
	}
	\Else
	{
		Obtain sequence $D_1,D_2,\dots,D_\ell$ using the algorithm of Lemma~\ref{subsec:PV:lemma:compute-path_edge_case}
	}
	Find largest $t$ such that $D_t,D_{t+1}, \dots,D_{\ell+1}$ cover at least $k$ vertices

	\nonl\;
	\tcp*[h]{picking-vertices}\;
	Let $P_0 \gets D_{t+1}$, $i \gets 1$ and $m \gets k - \sum_{i=t+1}^{\ell+1} |D_i|$\;
	Find a neighbor $v$ of $P_0$ in $D_t$\;
	Find $B_t \in \calb_\subminus$ such that $D_t \subseteq B_t$ and let $S \gets B_t$\;
	\While{\textnormal{$m \ne 0$}}
	{
		\uIf{\textnormal{$m = 1$}}
		{
			Define $P_i \gets \{v\}$ and $s \gets i$\;
			Let $m \gets 0$ and stop the loop\;
		}
		\Else{
			Find $S_1, S_2 \in \call_\subminus$ with $S = S_1 \cup S_2$ and $v \in S_1$\;
			\uIf{\textnormal{$|S_1 \cap D_t| \ge m$}}
			{
				Let $S \gets S_1$\;
			}\Else{
				Find an edge with extremes $v_1 \in S_1$ and $v_2 \in S_2$\;
				Let $P_i \gets S_1 \cap D_t$, $m \gets m - |P_i|$\;
				Let $S \gets S_2$ and $v \gets v_2$\;
				Let $i \gets i + 1$\;
			}
		}
	}
	\nonl\;
	Let $T$ be the tree induced by sequence $P_s, \dots, P_1, D_{t+1}, \dots, D_{\ell+1}$\;
	\Return{T}
\end{algorithm}

In what follows, denote by $W$ the subset $B_t \in \calb_\subminus$ corresponding to $D_t$ in the subset path such that $D_t \subseteq W$ and $W \cap D_j = \emptyset$ for every $j >  t$. Recall that $\call[W]$ is the collection of subsets in $\call$ which are subsets of $W$.
Also, let~$w$ be the last vertex selected by the picking process, i.e., $P_s = \{w\}$.
Next lemmas present some properties of vertex~$w$ and laminar collections~$\call$~and~$\calb$.

\begin{mylemma}
	\label{subsec:PV:lemma:w_in_L}
	Let $L \in \call_\subminus[W]$ and suppose $L$ contains a vertex in $V(T)$ and a vertex $v \in W \setminus V(T)$.
	If no subset in $\calb_\subminus[W \setminus V(T)]$ contains $v$, then ${w \in L}$.
\end{mylemma}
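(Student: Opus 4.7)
The plan is to analyze where $L$ sits in the binary laminar tree rooted at $W$ with respect to the chain of subsets $W = S_0 \supsetneq S_1 \supsetneq \cdots \supsetneq S_q$ assumed by the variable $S$ during $\PV(\pot,\tblist)$. Each $S_j$ is one of the two children of $S_{j-1}$ in $\call_\subminus$, and the algorithm terminates at step $q$ by setting $P_s = \{w\}$ with $w \in S_q \cap D_t$. At each pick step $j'$, the pick $P_{i_{j'}}$ lies in the sibling of $S_{j'}$ in $S_{j'-1}$, a set disjoint from $S_{j'} \supseteq S_q$. Combining this with $V(T) \setminus W \subseteq \bigcup_{i > t} D_i$ being disjoint from $W$ yields the two basic identities $V(T) \cap S_q = \{w\}$ and $V(T) \cap W \subseteq D_t$.

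Because $\call_\subminus$ is binary laminar, $L \in \call_\subminus[W]$ falls into exactly one of three configurations: (A)~$L = S_j$ for some $j$; (B)~$L \subsetneq S_q$; or (C)~there exists $j \ge 1$ such that $L$ is contained in the sibling of $S_j$ inside $S_{j-1}$. Case (A) is immediate, since $w \in S_q \subseteq L$. In case (B), the identity $V(T) \cap S_q = \{w\}$ combined with the hypothesis $L \cap V(T) \neq \emptyset$ forces $L \cap V(T) = \{w\}$, so $w \in L$.

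Case (C) is where the real effort lies. If step $j$ is a shrink step, the sibling of $S_j$ receives no pick, so $L \cap V(T) = \emptyset$, contradicting the first hypothesis. If step $j$ is a pick step, then on the sibling one has $V(T) \cap (\text{sibling}) = (\text{sibling}) \cap D_t$, so the vertex $v \in L$ with $v \notin V(T)$ must lie in $W \setminus D_t$, hence in $D_i$ for some $i < t$. The main obstacle is then to exhibit a subset of $\calb_\subminus[W \setminus V(T)]$ containing $v$; I would take the subset $B_i \in \calb_\subminus$ guaranteed by the definition of the subset path. From $v \in B_i \cap W$, the disjointness $B_i \cap D_t = \emptyset$ (which, together with $D_t \neq \emptyset$, rules out $W \subseteq B_i$), and the laminarity of $\calb_\subminus$, I obtain $B_i \subsetneq W$; and since $V(T) \cap W \subseteq D_t$ is disjoint from $B_i$, this gives $B_i \in \calb_\subminus[W \setminus V(T)]$, contradicting the second hypothesis.
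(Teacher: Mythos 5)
Your proposal takes a genuinely more explicit route than the paper's. The paper first establishes in a single sentence that $v \in D_t$ — that is, $v$ survives the entire pruning from $T_\subminus \cup T_\subplus$ down to $H_t = \hath[D_t \cup \dots \cup D_{\ell+1}]$ — by noting that otherwise some subset of $\calb_\subminus[W \setminus V(T)]$ would contain $v$; it then argues, somewhat informally, that because $L$ contains both a picked vertex and the unpicked $v \in D_t$, the picking process must have terminated inside $L$. You instead do a full case analysis on where $L$ sits relative to the chain $S_0 \supsetneq \dots \supsetneq S_q$: cases (A), (B), and (C)-shrink are correct and make the paper's informal ``$L$ was considered by the picking process'' precise, which is a real gain in rigor.

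However, case (C)-pick contains a gap. From $v \in L \subseteq W$, $v \notin V(T)$, and $V(T) \cap (\text{sibling}) = (\text{sibling}) \cap D_t$ you correctly conclude $v \in W \setminus D_t$, but then assert ``hence in $D_i$ for some $i < t$''. That inference presupposes $W \subseteq D_1 \cup \dots \cup D_t$, i.e.\ $W \subseteq V(\hath)$, which is not guaranteed: $W = B_t$ is only required to satisfy $D_t \subseteq B_t$ and $B_t \cap D_j = \emptyset$ for $j > t$, and may still contain vertices that were removed by the earlier pruning $\PP(H, \calb_\subplus)$ when $\hath$ was produced from $H = T_\subminus \cup T_\subplus$; such a $v$ lies in no $D_i$ at all. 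You need a parallel sub-case: such a $v$ belongs to some subset $B \in \calb_\subplus \subseteq \calb_\subminus$ deleted during $\PP(H, \calb_\subplus)$; since $D_t \subseteq V(\hath)$ is nonempty and $V(\hath)$ is disjoint from every deleted subset, laminarity of $\calb_\subminus$ rules out $W \subseteq B$, so $B \subsetneq W$; and since $V(T) \subseteq V(\hath)$ is also disjoint from $B$, one gets $B \in \calb_\subminus[W \setminus V(T)]$ with $v \in B$, contradicting the hypothesis. With this sub-case added, your argument is complete.
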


\begin{proof}
	Let $D_1, \dots, D_{\ell+1}$ be the computed subset path and recall that~$T$ is a subgraph of ${H_t = \hath[D_t \cup \dots \cup D_{\ell+1}]}$.
	Since no subset in $\calb_\subminus[W \setminus V(T)]$ contains $v$, the pruning algorithm has not deleted~$v$, and thus $v \in V(H_t)$.
	Because $L\subseteq W$, we have $v \in D_t$.
	This implies that~$L$ was considered by the picking process, and because not all vertices in $L \cap D_t$ were selected, all the remaining vertices of $T$ were picked from~$L$. Therefore, $w \in L$.
\end{proof}

Next lemma shows that the only processed subsets in $\calb_\subminus$ with degree one on $T$ which are not pruned from $T$ are those containing~$w$.

\begin{mylemma}
	\label{subsec:PV:lemma:B_has_deg_1_then_B_contains_w}
	Let $B \in \calb_\subminus$.
	If $|\delta_T(B)| = 1$, then $w \in B$.
\end{mylemma}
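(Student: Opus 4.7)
The plan is to prove the contrapositive: assuming $w \notin B$, we derive $|\delta_T(B)| \ne 1$. By the laminarity of $\calb_\subminus$ (invariant~\ref{subsec:GP:lemma:GP_inv:subsets_in_B_are_tight}) together with $w \in D_t \subseteq W \in \calb_\subminus$, the hypothesis $w \notin B$ rules out $W \subseteq B$, leaving the two cases $B \cap W = \emptyset$ and $B \subsetneq W$.

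For the case $B \cap W = \emptyset$, since the picked vertices $P_1, \ldots, P_s$ all lie in $D_t \subseteq W$, we have $B \cap V(T) \subseteq D_{t+1} \cup \cdots \cup D_{\ell+1}$. I will exploit the block-and-bridge structure of $T$: the block $T[D_{\ell+1}]$ coincides with $\hatt_\subminus$, each $D_i$ with $t+1 \le i \le \ell$ forms a block attached to $D_{i+1}$ by a unique bridge edge, and the picked tail $P_1, \ldots, P_s$ hangs off $D_{t+1}$. Since $B$ avoids $W$ (and therefore every $P_j$), the only edge of $\delta_T(B)$ outside $T_1 := T[D_{t+1} \cup \cdots \cup D_{\ell+1}]$ is the bridge between $D_{t+1}$ and $P_1$, contributing only when its $D_{t+1}$-endpoint lies in $B$. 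If $B \cap V(T) \subseteq D_{\ell+1}$, one checks that $|\delta_T(B)|$ equals $|\delta_{\hatt_\subminus}(B)|$ plus at most one, and combined with the prunedness of $\hatt_\subminus$ with $\calb_\subminus$ and invariant~\ref{subsec:GP:lemma:GP_inv:root_never_tight} (so $r \notin B$ while $r \in D_{\ell+1}$), this rules out $|\delta_T(B)| = 1$. Otherwise $B$ meets some $D_i$ with $i \le \ell$, and laminarity of $B$ with $B_i \in \calb_\subminus$ (recall $D_i \subseteq B_i$) splits into two sub-cases: if $B_i \subseteq B$, then $D_i \subseteq B$ and both bridges incident to the $D_i$-block lie in $\delta_T(B)$, giving $|\delta_T(B)| \ge 2$; if $B \subseteq B_i$, the prunedness of $\hath$ with $\calb_\subplus$ (via Lemma~\ref{subsec:threshold-tuple:lemma:main_lemma}) combined with the single-bridge structure of $T_1$ again forces at least two cut edges.

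For the case $B \subsetneq W$ with $w \notin B$, we may assume $B \cap V(T) \ne \emptyset$, else $\delta_T(B) = \emptyset$ immediately. Since $B \subseteq W$ and $w \notin B$, we have $B \cap V(T) \subseteq P_1 \cup \cdots \cup P_{s-1}$. I apply Lemma~\ref{subsec:PV:lemma:w_in_L} with $L = B$: if $B$ contains any vertex $v \in W \setminus V(T)$ not covered by a subset of $\calb_\subminus[W \setminus V(T)]$, the lemma yields $w \in B$, contradicting the hypothesis. Hence every vertex of $B \cap (W \setminus V(T))$ lies in some subset of $\calb_\subminus[W \setminus V(T)]$, which by laminarity of $\calb_\subminus$ means $B \setminus V(T)$ is a union of such subsets. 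The nested $S_1, S_2 \in \call_\subminus$ choices made by the picking process then constrain $B$'s intersection with the picked tail of $T$ enough to preclude $|\delta_T(B)| = 1$.

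The main obstacle lies in the sub-case $B \subseteq B_i$ of the disjoint case, where both the prunedness of $\hath$ and the laminar subset-path structure must be combined carefully, and in the nested case, where the precise interplay between $B$, $\calb_\subminus[W \setminus V(T)]$, and the picking process's nested laminar descent needs careful bookkeeping.
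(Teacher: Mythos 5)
Your proof attempts the contrapositive but takes a genuinely different route from the paper's, and the two places you flag as needing ``careful bookkeeping'' are genuine gaps rather than routine details. The paper's proof is built around a single device you do not use: the path $K$ in $T$ from $r$ to $w$. Since $r \notin B$ (invariant (gp5)) and $w \notin B$ (by hypothesis), any vertex of $K$ lying inside $B$ splits $K$ into two sub-paths that each cross $\delta_T(B)$, forcing $|\delta_T(B)| \geq 2$. Working in $H_t = \hath'[D_t \cup \dots \cup D_{\ell+1}]$, the paper observes that $H_t$ is pruned with $\{B' \in \calb_\subminus : v_t \notin B'\}$ for a specific $v_t \in V(K)$; thus if $|\delta_{H_t}(B)| = |\delta_T(B)| = 1$ then $v_t \in B$, contradiction. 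Otherwise $|\delta_{H_t}(B)| \geq 2 > 1 = |\delta_T(B)|$, and since $B$ meets no vertex of $K$ (which contains every bridge between consecutive $D_i$), $B$'s trace on $V(H_t)$ lies inside a single $D_j$; this forces $j = t$, since for $j \geq t+1$ the two degrees would coincide. Only then does the paper turn to the picking process, on a $B$ already localized inside $D_t$.

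Your sketch lacks this localization and leaves the decisive steps unargued. In the sub-case $B \cap W = \emptyset$, $B \subseteq B_i$, you assert that prunedness of $\hath$ ``again forces at least two cut edges'' without a reason, and it is not evident: $B$ can swallow both endpoints of a bridge if it spans several consecutive $D_i$'s, and nothing in your block-and-bridge picture precludes this. In the nested case $B \subsetneq W$, after invoking Lemma~\ref{subsec:PV:lemma:w_in_L} you defer the conclusion to unspecified ``constraints'' from the nested $S_1, S_2$ choices; this is exactly where a real argument is needed, and the paper supplies one: either the remaining quota $m$ of vertices to pick satisfies $m \le |B \cap D_t|$, in which case all $m$ are picked from $B$ and $w \in B$, or else every vertex of $B \cap D_t$ is picked and $T$ must both enter and leave $B$, giving $|\delta_T(B)| \geq 2$. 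Without the path-$K$ localization it is also unclear your case split is exhaustive; I recommend adopting the paper's $K$-based argument rather than patching the block-and-bridge sketch.
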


\begin{proof}
	Remember that the algorithm computes a subset path $D_1, \dots, D_{\ell+1}$ of a tree $\hath'$ processed with $\calb_\subminus$, where $\hath' = \hath$ if $\obj$ is a subset, and $\hath' = \hath - e$ for some edge $e$ if $\obj$ is an edge.
	Define $H_t = \hath'[D_t \cup \dots \cup D_{\ell+1}]$.
	Let $K$ be the path in $T$ starting in the root~$r$ and ending in $w$, and observe that there is a vertex $v_t \in V(K)$ such that $H_t$ is pruned with ${\{B' \in \calb_\subminus : v_t \notin B' \}}$.

	Let $B \in \calb_\subminus$ with $|\delta_T(B)| = 1$, and
	suppose for a contradiction that $w \notin B$.
	If $|\delta_T(B)| = |\delta_{H_t}(B)|$, then $v_t \in B$, and there are two edge-disjoint  paths crossing $B$, one from $w$ to $v_t$ and other from $v_t$ to the root~$r$. This is not possible because $|\delta_T(B)| = 1$. Therefore,
	$|\delta_T(B)| \ne |\delta_{H_t}(B)|$, and thus
	$|\delta_{H_t}(B)|\ge 2$.

	Observe that $K$ contains all edges connecting adjacent subset $D_i$~and~$D_{i+1}$, for~$i \ge 1$. It follows that $B \subseteq D_j$ for some $j \ge t$,
	because $B$ does not contain a vertex of $K$.
	If $j \ge t + 1$, then we would have ${|\delta_{H_t}(B)| = |\delta_T(B)|}$, a~contradiction. Thus, assume that $j = t$ and $B \subseteq D_t$.

	Because $T$ spans vertices in $B$ and $B \subseteq D_t$, theses vertices were selected by the picking process.
	Let $m$ the number of vertices still needed at the iteration which considered subset~$B$.
	If $m \le |B\cap D_{t}|$, then all the $m$ remaining vertices of $T$ were picked from $B\cap D_{t}$, and thus $w \in B$.
	Otherwise, all the vertices of $B \cap D_{t}$ were picked, and
	$T$ has two edges leaving $B$, which is a contradiction since $|\delta_T(B)| = 1$.
\end{proof}

Next lemma extracts the critical property of the tree output by the algorithm, which is used to bound the solution cost.
For some collection of subsets~$\call$, denote by $\call_w$ the collection of subsets in $\call$ which contain vertex~$w$.
Thus, $\call_w[W]$ contains subsets in $\call$ which are subsets of $W$ and contain~$w$.

\begin{mylemma}
	\label{subsec:bound:lemma:average_degree}
	Consider the execution of $\GP$ which returned $(T_\subminus,\calb_\subminus)$, and let~$\cala$ be the collection of active subsets that contain some vertex of $T$ at the beginning of an iteration.
	Then,
	\begin{align*}
		\sum_{A \in \cala}|\delta_T(A)| \leq 2 (|\cala| - |\cala_w[W]|).
	\end{align*}
\end{mylemma}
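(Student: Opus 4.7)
The plan is a contraction argument in the spirit of Goemans and Williamson's analysis~\cite{Goemans1995}, adapted to account for the picking process. Let $\call_0$ denote the laminar collection at the beginning of the iteration under consideration in the execution of $\GP(\pot,\tblist_\subminus)$; by invariant~\ref{subsec:GP:lemma:GP_inv:L_is_laminar}, the maximal subsets $\call_0^*$ partition $V$ and split into $\cala$ (the active subsets meeting $V(T)$, by the definition given in the lemma statement) and inactive ones. Let $\calb^+ := \{B \in \calb_\subminus \cap \call_0^* : B \cap V(T) \ne \emptyset\}$, so that $\cala \cup \calb^+$ partitions $V(T)$.

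The main obstacle lies in the first step: verifying that for each $X \in \cala \cup \calb^+$, the induced subgraph $T[X \cap V(T)]$ is connected. This requires tracking the construction of $T$ from the sequence $P_s, \ldots, P_1, D_{t+1}, \ldots, D_{\ell+1}$ in $\PV$, using that each $D_j$ and each $P_i$ induces a connected subgraph of $\hath$, together with the $\call_\subminus$-connectedness of $T_\subminus$ and $T_\subplus$ (invariant~\ref{subsec:GP:lemma:GP_inv:edge_in_F_are_tight}) and the laminar structure of $\call_\subminus$. The trickiest subcases are when $X$ straddles multiple $P_i$'s or $D_j$'s and when $X = \obj$ (the subset processed at iteration $|\tblist|$), which lies in $\calb_\subminus$ but not necessarily in $\calb_\subplus$.

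Once connectedness is established, contracting each $T[X \cap V(T)]$ to a single node yields a tree $T'$ whose nodes are exactly the subsets in $\cala \cup \calb^+$, with $|\cala| + |\calb^+| - 1$ edges. Each node $X$ has degree $|\delta_T(X)|$ in $T'$, so
\[\sum_{A \in \cala} |\delta_T(A)| + \sum_{B \in \calb^+} |\delta_T(B)| = 2(|\cala| + |\calb^+| - 1).\]
By Lemma~\ref{subsec:PV:lemma:B_has_deg_1_then_B_contains_w}, any $B \in \calb^+$ with $|\delta_T(B)| = 1$ must contain $w$; since subsets in $\calb^+$ are pairwise disjoint (being maximal), at most one such $B$ exists, and every other inactive node of $T'$ has degree at least two. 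Letting $\beta \in \{0,1\}$ indicate whether some $B \in \calb^+$ contains $w$, this gives $\sum_{B \in \calb^+} |\delta_T(B)| \ge 2|\calb^+| - \beta$.

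Finally I would observe that $\alpha + \beta \le 1$, where $\alpha := |\cala_w[W]| \in \{0,1\}$: at most one subset in $\call_0^*$ contains $w$, and it is either active (forcing $\beta = 0$) or inactive and in $\calb^+$ (forcing $\alpha = 0$). Substituting yields
\[\sum_{A \in \cala} |\delta_T(A)| \le 2(|\cala| + |\calb^+| - 1) - (2|\calb^+| - \beta) = 2|\cala| - 2 + \beta \le 2|\cala| - 1 - \alpha \le 2(|\cala| - \alpha),\]
as required. The degenerate situation $|\cala| + |\calb^+| \le 1$ is handled separately and is trivial, since then $V(T)$ lies inside a single maximal subset of $\call_0^*$, forcing $\sum_{A \in \cala} |\delta_T(A)| = 0$ while the right-hand side is non-negative.
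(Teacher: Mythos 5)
Your proof covers only the case where the contracted graph $T'$ is a tree, and the connectedness claim you flag as ``the main obstacle'' is in fact \emph{false} in general. Recall that $\cala$ comes from the execution returning $(T_\subminus,\calb_\subminus)$, so the laminar collection in play is $\call_\subminus$, and $T_\subminus$ is $\call_\subminus$-connected. But $T$ is built from $\hath \subseteq T_\subminus \cup T_\subplus$, and when $\obj$ is an edge, $T$ may contain the edge $e_\subplus \in E(T_\subplus)\setminus E(T_\subminus)$. In that case $T$ is not $\call_\subminus$-connected, and there can be a maximal subset $L$ with $T[L\cap V(T)]$ disconnected, so the contracted graph $T'$ has a cycle (exactly one, since $T$ has only one edge outside $T_\subminus$). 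With a cycle, $T'$ has $|\cala|+|\calb^+|$ edges rather than $|\cala|+|\calb^+|-1$, which worsens your count by $2$ and breaks the chain of inequalities: you would get $\sum_{A\in\cala}|\delta_T(A)|\le 2|\cala|+\beta$, which does not imply $2(|\cala|-\alpha)$ when $\alpha\ge 1$ or $\beta=1$.

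The paper's Case~2 recovers the bound in this situation by a separate argument showing that when the cycle exists, necessarily $|\cala_w[W]|=0$ and no inactive maximal subset has degree one on $T$ (so $\beta=0$ in your notation); the key facts are that the disconnecting subset $L$ satisfies $W\subseteq L$, that $L$ is the maximal subset containing $w$, and that $L$ has degree at least two. None of this follows from tracking the picking process or from Lemma~\ref{subsec:PV:lemma:B_has_deg_1_then_B_contains_w} alone. So your Case~1 analysis is sound (and essentially identical to the paper's), but to complete the proof you need the second case with its own argument, not a verification of connectedness that cannot succeed.
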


\begin{proof}
	Let $D_1, D_2, \dots, D_{\ell+1}$ be the subset path computed by the algorithm, and define $H_i = \hath[D_i \cup D_{i+1} \cup \dots D_{\ell+1}]$ for each $1 \le i \le \ell$.
	Observe that $T$ is a subgraph of $H_t$, and remember that $W$ is the subset in~$\calb_\subminus$ associated with~$D_t$, such that $H_{t+1} = H_t - W$ and $|\delta_{H_t}(W)| = 1$.
	Also, let $P_1, P_2, \dots, P_s$ be the sequence of subsets picked from $W$, such that $P_s = \{w\}$.
	By Lemma~\ref{subsec:PV:lemma:B_has_deg_1_then_B_contains_w}, any subset in $\calb_\subminus$ with degree one on $T$ contains $w$.

	Consider the laminar collection $\call_\subminus$ at the beginning of an iteration of $\GP$,
	and let $\cala$ be the collection of maximal subsets in $\callminusstar$ which are active and contain some vertex of~$T$,
	and $\cali$ be the collection of maximal subsets in $\callminusstar$ which are not active and contain some vertex of~$T$.
	Note that $\cala \cup \cali$ partitions $V(T)$.
	If we contract on $T$ each subset in $\cala \cup \cali$, obtaining a graph $T'$, then each vertex of~$T'$ corresponding to a subset~$S$ will have degree~$|\delta_T(S)|$.

	Observe that an active subset is a maximal subset of the laminar collection at the beginning of the iteration, thus the subsets in $\cala$ are disjoint, and then~${|\cala_w[W]| \le 1}$.
	We are also interested in counting the vertices of $T'$ with degree one which correspond to non-active subsets, thus let $\cali_1$ be the collection of subsets ${S\in \cali}$ with~${|\delta_T(S)| = 1}$.
	Because each subset in $\cali_1$ is in~$\calb_\subminus$ and has degree one on $T$, we know that $w \in S$ for every $S \in \cali_1$, by Lemma~\ref{subsec:PV:lemma:B_has_deg_1_then_B_contains_w}.

	We break the proof into two cases, depending on whether $T'$ is a tree.

	\textbf{Case 1:} Assume that the contracted graph $T'$ is a tree.

	We claim that $|\cali_1| + |\cala_w[W]| \le 1$.
	This means that there is at most one non-active subset with degree one, and, if there is one, then there are no active subsets of $W$ containing $w$.
	Indeed, if $\cali_1$ contains some subset $S$, then
	$w \in S$, and thus no active subset contains $w$.
	Otherwise, $|\cali_1|=0$, and the claim holds because $|\cala_w[W]| \le 1$.

	Since $T'$ is a tree, it follows that
	\begin{align*}
		\sum_{A \in \cala}|\delta_T(A)| + 2|\cali| - |\cali_1|
		 & \le \sum_{A \in \cala}|\delta_T(A)| + \sum_{I \in \cali}|\delta_T(I)|
		=    2(|\cala| + |\cali| - 1)                                            \\
		 & \le 2(|\cala| + |\cali| - |\cali_1| -|\cala_w[W]|),
	\end{align*}
	which implies $\sum_{A \in \cala}|\delta_T(A)| \le  2(|\cala| -|\cala_w[W]|)$, thus the lemma holds in this case.

	\textbf{Case 2:} Assume that the contracted graph $T'$ has a cycle.

	Because $T'$ has a cycle, there exists some maximal subset $L$ in $\cala \cup \cali$ which induces a disconnected subgraph $T[L]$.
	Since $T_\subminus$ is $L$-connected by invariant~\ref{subsec:GP:lemma:GP_inv:edge_in_F_are_tight}, we know that $T_\subminus[L]$ is connected,
	thus $H[V(T)\cup L]$ has a cycle $C$ whose vertices intersect~$L$, and then $\hath[V(T) \cup L]$ contains $C$ as well.
	Observe that this case can occur only when $\obj$ is an edge, and $e_\subplus$ is in $C$.
	Because $T$ contains at most one edge that is not in $T_\subminus$, the contracted tree~$T'$ contains at most one cycle.
	Also, note that $T[L]$ has only two components, since otherwise $H$ would have two cycles.

	Recall that $C$ has the edge $e_\subplus$.
	Suppose, for a contradiction, that both extremes of $e_\subplus$ are contained in the same  maximal subset $S \in \cala \cup \cali$.
	Then, because $T_\subminus$ is $S$-connected, $H[S]$ would contain a cycle which is not $C$. Since there is only one cycle, this is not possible, and we conclude that $e_\subplus$ connects consecutive maximal subsets in the cycle.
	Therefore, $e_\subplus$ is an edge of~$T'$, and at least one extreme of $e_\subplus$ is not in $L$.

	We claim that $W$ contains some but not all vertices of~$C$.
	Observe that $D_t$ spans vertices of $C$, since otherwise $C$ would be contained in $D_1, D_2, \dots, D_{t-1}$, by Lemma~\ref{subsec:PV:lemma:compute-path_edge_case}, and thus $T$ would not span vertices of $C$.
	Since $D_t$ is a subset of $W$, it follows that $W$ spans vertices of~$C$.
	Now, also by Lemma~\ref{subsec:PV:lemma:compute-path_edge_case}, $e_\subplus$ connects consecutive subsets $D_i$ to $D_{i+1}$ for some $i \ge 1$.
	Then $T$ contains vertices of both $D_i$ and $D_{i+1}$, and it follows that $t \le i$, since $D_t$ is the first subset of the subset path containing vertices of~$T$.
	Therefore, $W$ cannot contain all vertices of~$C$, because $D_{i+1} \subseteq V(H_{i+1})$ and $H_{i+1}$ does not contain vertices of~$W$.

	It follows that the cut $\delta_{C}(W)$ contains two edges of $C$.
	Since ${|\delta_T(W)| = 1}$, both edges cannot be edges of $T$, thus $W$ contains some vertex $v_1$ in $L$.
	Let $L_1$ and $L_2$ be the two connected components of~$T[L]$. Suppose without loss of generality that $v_1 \in L_1$ and let $v_2 \in L_2$.
	Because there is no path from $v_1$ to $v_2$ in $T- e_\subplus$, it follows
	that $v_2 \in D_j$ for some $j > t$. Thus, $v_2 \notin B_t = W$.
	Now $v_1 \in L \cap W$, and $v_2 \in L \setminus W$. Since $\call_\subminus$ is laminar, $W \subseteq L$.

	Figure~\ref{subsec:picking:fig:maximal_subset_cycle} illustrates $T$ with maximal subsets in dashed lines, whose contraction resulted in a graph $T'$ with a cycle.
	\begin{figure}[htb!]
		\center
		\scalebox{1}{\input{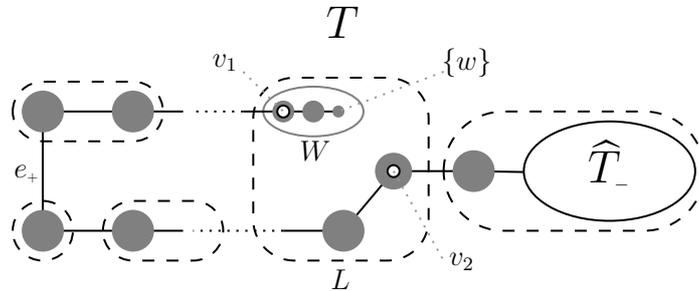}}
		\caption{A graph $T'$ with a cycle.}
		\label{subsec:picking:fig:maximal_subset_cycle}
	\end{figure}

	This implies $|\cala_w[W]| = 0$, since  $\cala_w[W]$ contains only subsets of $W$. Also, $|\cali_1| = 0$, because the maximal subset containing $w$ is $L$, which has degree at least $2$.
	Using the fact that in this case $T'$ has exactly one cycle, we get
	\begin{align*}
		\sum_{A \in \cala}|\delta_T(A)| + 2|\cali|
		 & \le \sum_{A \in \cala}|\delta_T(A)| + \sum_{I \in \cali}|\delta_T(I)| = 2(|\cala| + |\cali|) \\
		 & =  2(|\cala| + |\cali| -|\cala_w[W]|),
	\end{align*}
	which implies $\sum_{A \in \cala}|\delta_T(A)| \le  2(|\cala| -|\cala_w[W]|)$, completing the lemma.
\end{proof}


\section{The \texorpdfstring{$2$}{2}-approximation}
\label{sec:two-approx}

This section wraps up the whole algorithm.
First, we bound the cost of the tree output by $\PV$.
Then, we show how to use this bound to find a tree whose cost is within factor $2$ of an optimal solution for \kpcst.


\subsection{Bounding the cost of the tree}
\label{subsec:bound}

We adopt the same definitions introduced in Section~\ref{sec:find_solution}, except that, to simplify the notation, for the remainder of this section, we let $\call = \call_\subminus$.
Recall that $\call(S)$ denotes the collection of subsets in $\call$ that contains some, but not all vertices of $S$, that $\call[S]$ denotes the collection of subsets in $\call$ which are subsets of $S$, and that $\call_w$ is the collection of subsets in $\call$ that contain~$w$.
Next two lemmas bound the cost of edges and vertex penalties of the tree output by $\PV$.

\begin{mylemma}
	\label{subsec:bound:lemma:edge_cost_bound}
	Let $T$ be the tree returned by $\PV(\pot,\tblist)$, then
	\begin{align*}
		\ecost_{E(T)} ~\leq~ 2 \smashoperator{\sum_{L \in \call(V(T))}} y_L ~-~ 2 \smashoperator{\sum_{L \in \call_w[W]}} y_L.
	\end{align*}
\end{mylemma}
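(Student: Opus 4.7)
The plan is to apply the classical Goemans--Williamson time-integration argument, substituting the sharper average-degree bound given by Lemma~\ref{subsec:bound:lemma:average_degree} for the standard one.

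First, I would use tightness. Every edge $e\in E(T)$ lies in $E(T_\subminus)\cup E(T_\subplus)$ and hence is tight for $(y,\pot)$ by invariant~\ref{subsec:GP:lemma:GP_inv:edge_in_F_are_tight}; here Lemma~\ref{subsec:threshold-tuple:lemma:main_lemma} is invoked to identify the vector $y$ computed by the two growth-phase executions. Substituting $\ecost_e=\sum_{L:\,e\in\delta(L)} y_L$ and swapping the order of summation,
\[
\ecost_{E(T)} \;=\; \sum_{L} y_L\,|\delta_T(L)| \;=\; \sum_{L\in\call(V(T))} y_L\,|\delta_T(L)|,
\]
since $L\notin\call(V(T))$ forces $|\delta_T(L)|=0$.

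Next, I would decompose $y_L$ over the iterations of $\GP(\pot,\tblist_\subminus)$: write $y_L=\sum_i \Delta_i\,\mathbf{1}[L\in\cala_i]$, where $\Delta_i$ is the dual increment at iteration $i$ and $\cala_i$ the collection of active maximal subsets at that iteration, then interchange sums to obtain
\[
\ecost_{E(T)} \;=\; \sum_i \Delta_i \sum_{L\in\cala_i\cap\call(V(T))}|\delta_T(L)|.
\]
For each $i$, Lemma~\ref{subsec:bound:lemma:average_degree}, together with the observation that any $L\in\cala_i$ with $V(T)\subseteq L$ contributes $0$ to the inner sum, gives
\[
\sum_{L\in\cala_i\cap\call(V(T))}|\delta_T(L)| \;\le\; 2\bigl(|\cala_i\cap\call(V(T))|-|\cala_i\cap\call_w[W]|\bigr).
\]
Multiplying by $\Delta_i$, summing over $i$, and using $\sum_i \Delta_i\,\mathbf{1}[L\in\cala_i]=y_L$ for every $L\in\call$ collapses the right-hand side to $2\sum_{L\in\call(V(T))} y_L - 2\sum_{L\in\call_w[W]} y_L$, which is exactly the claimed bound.

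The main obstacle is the per-iteration application of Lemma~\ref{subsec:bound:lemma:average_degree}: although it is stated for the collection of active subsets that merely \emph{intersect} $V(T)$, what is needed here is the inequality after further restricting to $\call(V(T))$, so one must verify that discarding proper supersets of $V(T)$ from $\cala_i$ on both sides is safe. This amounts to checking that whenever some such superset is active and maximal, the corresponding $\call_w[W]$-term is unaffected (since $\call_w[W]$ contains only subsets of $W\subseteq V(T)$, which are incomparable with those supersets in the binary laminar collection $\call_\subminus$), so the average-degree bound restricts consistently. Once this bookkeeping is settled, the time-integration goes through as above.
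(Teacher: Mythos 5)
Your overall strategy is the same as the paper's: use tightness of the edges of $T$ (via Lemma~\ref{subsec:threshold-tuple:lemma:main_lemma} to identify the vector $y$), rewrite $\ecost_{E(T)}$ as $\sum_{L\in\call(V(T))}|\delta_T(L)|\,y_L$, and then bound this iteration by iteration of $\GP$ using Lemma~\ref{subsec:bound:lemma:average_degree}. The paper phrases the time-integration as an induction over iterations rather than an explicit sum $\sum_i \Delta_i$, but this is cosmetic; both reduce to the per-iteration bound
\[
\sum_{L\in\cala_i\cap\call(V(T))}|\delta_T(L)| \;\le\; 2\bigl(|\cala_i\cap\call(V(T))|-|\cala_i\cap\call_w[W]|\bigr).
\]

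However, your derivation of this per-iteration bound from Lemma~\ref{subsec:bound:lemma:average_degree} has a genuine gap. Write $\cala$ for the active maximal subsets that meet $V(T)$, as in the lemma, and $\cala^{\supseteq}=\{A\in\cala : V(T)\subseteq A\}$. You correctly observe that members of $\cala^{\supseteq}$ contribute $0$ to $\sum|\delta_T(\cdot)|$ and that they are not counted in $\cala_w[W]$, so after dropping them the left-hand side and the $\cala_w[W]$ term are unchanged. But dropping them also shrinks $|\cala|$ by $|\cala^{\supseteq}|$, so Lemma~\ref{subsec:bound:lemma:average_degree} only yields $\sum_{L\in\cala_i\cap\call(V(T))}|\delta_T(L)|\le 2(|\cala_i\cap\call(V(T))|-|\cala_i\cap\call_w[W]|)+2|\cala^{\supseteq}|$, which is weaker by the extra $2|\cala^{\supseteq}|$; the "restricts consistently" step does not go through. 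The correct way to close this is the paper's: if $\cala^{\supseteq}\ne\emptyset$, its unique element (active maximal subsets are disjoint) is the only active subset meeting $V(T)$, so $\cala_i\cap\call(V(T))=\emptyset$ and $\cala_i\cap\call_w[W]=\emptyset$, and the per-iteration inequality reads $0\le 0$. Separately, the justification you give in the parenthetical is also off: $W\subseteq V(T)$ is false in general, since $W=B_t$ contains the un-picked vertices of $D_t$ (and possibly earlier-deleted vertices), none of which lie in $V(T)$; and even if $W\subseteq V(T)$ held, subsets of $W$ would be \emph{nested} inside a proper superset of $V(T)$, not incomparable with it. The fact that matters is simply that any active superset of $V(T)$ contains $r$, while $W\in\calb_\subminus$ does not (invariant~\ref{subsec:GP:lemma:GP_inv:root_never_tight}), so such a superset is never in $\call_w[W]$.
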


\begin{proof}
	By Lemma~\ref{subsec:threshold-tuple:lemma:main_lemma}, we have that the vectors $y$ computed in the growth-phases $\GP(\pot,\tblist)$ and $\GP(\pot,\sublist)$ are the same.
	Since~$T \subseteq T_\subminus \cup T_\subplus$, each edge of~$T$ is tight, thus
	\begin{align*}
		\ecost_{E(T)}
			= \smashoperator[r]{\sum_{e \in E(T)}} \ecost_e
			= \sum_{e \in E(T)} \smashoperator[r]{\sum_{L: e \in \delta(L)}} y_L
			= \smashoperator[r]{\sum_{L \in \call(V(T))}} ~|\delta_T(L)| y_L.
	\end{align*}

	We prove by induction that, at the beginning of each iteration of~$\GP(\pot,\tblist)$,
	\begin{align*}
		\smashoperator[r]{\sum_{L \in \call(V(T))}}~ |\delta_T(L)| y_L ~\leq~ 2 \smashoperator{\sum_{L \in \call(V(T))}} y_L ~-~ 2 \smashoperator{\sum_{L \in \call_w[W]}} y_L.
	\end{align*}

	At the beginning of the growth-phase, $y = 0$, then the inequality holds. Thus, assume that the inequality holds at the beginning of an iteration, and let $\cala$ be the collection of active subsets that contain some vertex of $T$.

	Notice that, if some maximal subset $S$ contains $V(T)$, then $S \notin \call(V(T))$.
	Also, any such subset contains both $r$ and $w$, thus $S \notin \call_w[W]$, because~$\call$ is laminar.
	Therefore, if $V(T)$ is contained in some maximal subset, then neither the left nor the right side of the inequality changes.
	Hence, we can assume that no active subset contains $V(T)$.
	It follows that ${\cala \subseteq \call(V(T))}$ and that ${\cala_w[W] \subseteq \call_w[W]}$.
	Suppose that, in this iteration, the  variable~$y_L$ of each active subset $L$ is increased by $\Delta$.
	Then, the left side of the inequality is increased by ${\sum_{A \in \cala}|\delta_T(A)|\Delta}$, while the right side is increased by ${2 (|\cala| - |\cala_w[W]|)  \Delta}$.
	By Lemma~\ref{subsec:bound:lemma:average_degree}, the increase on the left side is smaller, thus the inequality is maintained at the end of the iteration.
\end{proof}

\begin{mylemma}
	\label{subsec:bound:lemma:penalty_cost_bound}
	Let $T$ be the tree returned by $\PV(\pot,\tblist)$, then
	\begin{align*}
		\vpen^\pot_{V \setminus V(T)} ~\leq \smashoperator[r]{\sum_{L \in \call[V \setminus V(T)]}} y_L ~~+ \smashoperator[r]{\sum_{L \in \call_w[W]}} y_L
	\end{align*}
\end{mylemma}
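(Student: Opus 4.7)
The plan is to split the penalty according to whether vertices lie outside or inside $W$, writing $V\setminus V(T) = ((V\setminus V(T))\setminus W)\sqcup (W\setminus V(T))$, and to bound each part by appealing to tightness of appropriate subsets of $\calb_\subminus$ and of $W$ itself. Since $w\in V(T)\cap W$, the laminarity of $\call$ forces the decomposition
\[
\call[V\setminus V(T)] \;=\; \call[(V\setminus V(T))\setminus W] \;\sqcup\; \call[W\setminus V(T)],
\]
since any $L\in\call$ with $W\subsetneq L\subseteq V\setminus V(T)$ would place $w$ outside $V(T)$, so the two partial bounds may be combined cleanly at the end.

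For the portion outside $W$, I would introduce the collection $\calb^\ast$ of inclusion-wise maximal subsets of $\calb_\subminus$ contained in $V\setminus(V(T)\cup W)$. These are pairwise disjoint by laminarity and each is tight for $(y,\pot)$. The key sub-claim is that they cover $(V\setminus V(T))\setminus W$: every such vertex $v$ lies outside $V(\hatt_\subminus)\subseteq V(T)$, hence was pruned from $T_\subminus$ by $\calb_\subminus$, and tracing the pruning through the subset path $D_1,\dots,D_{\ell+1}$ shows that the maximal pruned ancestor of $v$ must itself be disjoint from $V(T)\cup W$. Applying tightness $\vpen^\pot_B = \sum_{L\in\call[B]} y_L$ to each $B\in\calb^\ast$ and summing yields
\[
\vpen^\pot_{(V\setminus V(T))\setminus W} \;\le\; \sum_{L \in \call[(V\setminus V(T))\setminus W]} y_L.
\]

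For the portion inside $W$, I would exploit that $W\in\calb_\subminus$ is itself tight. Combining $\vpen^\pot_W = \sum_{L\in\call[W]} y_L$ with $\sum_{L\in\call[W\cap V(T)]} y_L \le \vpen^\pot_{W\cap V(T)}$ (from $y$ respecting $\vpen^\pot$) gives
\[
\vpen^\pot_{W\setminus V(T)} \;\le\; \sum_{L\in\call[W],\,L\not\subseteq V(T)} y_L.
\]
Each such $L$ has at least one vertex in $W\setminus V(T)$, so either $L\subseteq W\setminus V(T)$, contributing to $\sum_{L\in\call[W\setminus V(T)]} y_L$, or $L$ straddles $V(T)\cap W$ and $W\setminus V(T)$. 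In the straddling case Lemma~\ref{subsec:PV:lemma:w_in_L} forces $w\in L$, placing $L$ in $\call_w[W]$. Adding the bound for the portion outside $W$ then yields the lemma.

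The main obstacle is the straddling case, since Lemma~\ref{subsec:PV:lemma:w_in_L} delivers $w\in L$ only when its witness vertex $v\in L\cap(W\setminus V(T))$ is not covered by any subset of $\calb_\subminus[W\setminus V(T)]$. For straddling $L$'s where every such witness is already covered by some inner tight $B\in\calb_\subminus[W\setminus V(T)]$, laminarity gives $B\subsetneq L$, and the tightness of $B$ (already counted in the first RHS sum) combined with the binary-laminar structure of $\call[W]$ must be used to absorb the residual contribution of these $L$'s; this bookkeeping along the laminar tree is the technical heart of the proof.
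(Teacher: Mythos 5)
Your overall organization --- splitting $V\setminus V(T)$ into $R = (V\setminus V(T))\setminus W$ and $W\setminus V(T)$, handling $R$ via tight processed subsets and $W\setminus V(T)$ via tightness of $W$ --- is reasonable, but the derivation inside $W$ contains a genuine gap that you identify yet do not close. After subtracting only $\sum_{L\in\call[W\cap V(T)]}y_L$ from the tight value $\vpen^\pot_W = \sum_{L\in\call[W]}y_L$, your upper bound includes $y_L$ for every $L\subseteq W$ that straddles $W\cap V(T)$ and $W\setminus V(T)$; you would need each such $L$ to satisfy $w\in L$, but Lemma~\ref{subsec:PV:lemma:w_in_L} only gives this when $L$ contains a vertex of $W\setminus V(T)$ that is \emph{not} covered by any subset in $\calb_\subminus[W\setminus V(T)]$. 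The residual case --- $L$ straddling, with all of its $W\setminus V(T)$-vertices already covered by such subsets --- is not absorbed by your ``already counted in the first RHS sum'' remark: that sum contains the $y$-values of the inner covering subsets, but the offending $y_L$ itself belongs to neither target sum (its $L$ meets $V(T)$ and avoids $w$), and no cancellation mechanism has been exhibited.

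The paper's resolution is a finer, three-way partition of $W$. Let $P$ be the set of vertices of $W$ covered by some $B\in\calb_\subminus[W\setminus V(T)]$, let $I = W\cap V(T)$, and let $Q = W\setminus(P\cup I)$. One bounds $\vpen^\pot_Q$ (not $\vpen^\pot_{P\cup Q}$) by $\sum_{L\in\call[W]}y_L - \sum_{L\in\call[P\cup I]}y_L$; subtracting $\call[P\cup I]$ rather than merely $\call[I]$ is exactly what removes the problematic straddling subsets confined to $P\cup I$, so any surviving straddling subset must contain a vertex of $Q$, and Lemma~\ref{subsec:PV:lemma:w_in_L} then applies cleanly to put it in $\call_w[W]$. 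The penalty of $P$ is handled together with that of $R$ via tight subsets of $\calb_\subminus$, the two index collections being disjoint, and the pieces combine to give the stated bound. In short, the missing ingredient is to move the processed subsets inside $W$ (i.e., the part $P$) to the \emph{subtraction} side of the $W$-tightness inequality, rather than trying to account for their straddling ancestors after the fact.
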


\begin{proof}
	We partition~$W$ into three parts, $P$, $I$, and $Q$, corresponding to the set of vertices whose penalties can be paid by some processed component which is a proper subset of $W \setminus V(T)$, the intersection with $T$, and the vertices which are not paid.
	More precisely, define
	\begin{align*}
		P &= \{v \in W:v \in B \text{ for some } B \in \calb_\subminus[W\setminus V(T)]\},\\
		I &= V(T) \cap W,
		\quad \mbox{and} \quad
		Q = W \setminus (P \cup I).
	\end{align*}

	First, we consider the vertices of $W$ which are not paid. Since $W$ is tight for~$(y,\pot)$ and $y$ respects~$\vpen^\pot$, it follows that
	\begin{align}
	\vpen^\pot_{Q}
	= \vpen^\pot_W - \vpen^\pot_{P \cup I}
	&\leq \sum_{L \in \call[W]} y_L - \sum_{L \in \call[P\cup I]} y_L.
	\label{subsec:bound:lemma:penalty_cost_bound:cost_unpaid}
	\end{align}

	Denote by $\calp$ the collection of subsets in $\call[P\cup I]$ that contain a vertex in~$P$ and a vertex in~$I$.
	Then ${\{\call[P], \call[I], \calp\}}$ partitions the collection~${\call[P\cup I]}$.
	Similarly, denote by $\calw$ the collection of subsets in $\call[W]$ that contain a vertex in ${P \cup Q}$ and a vertex in $I$.
	Then ${\{\call[P \cup Q], \call[I],\calw\}}$ partitions the collection~$\call[W]$.

	We claim that $\calw \setminus \calp \subseteq \call_w[W]$.
	To see this, consider a subset ${L \in \calw \setminus \calp}$.
	Since $L$ contains a vertex in $I$, but is not in $\calp$, it does not contain a vertex in~$P$.
	It follows that that $L$ contains a vertex of $v \in Q$. Because $v \in W \setminus V(T)$ and no subset in $\calb_\subminus[W\setminus V(T)]$ contains~$v$, Lemma~\ref{subsec:PV:lemma:w_in_L} implies that ${w \in L}$. Therefore, $L \in \call_w[W]$, and then ${\calw \setminus \calp \subseteq \call_w[W]}$.

	We can simplify the indices in the summation~\eqref{subsec:bound:lemma:penalty_cost_bound:cost_unpaid} as
	\begin{align*}
	\call[W] \setminus
	\call[P \cup I]
	&=
	(\call[P \cup Q] \cup \call[I] \cup \calw)
	\setminus
	(\call[P] \cup \call[I] \cup \calp)\\
	&\subseteq
	(\call[P \cup Q] \cup \call_w[W]) \setminus \call[P].
	\end{align*}

	Now, we consider the vertices which are paid, i.e., vertices not spanned by~$T$ which are contained in some subset $B \in \calb_\subminus$. In addition to vertices~$P$, the vertices in ${R = V \setminus (V(T) \cup W)}$ are also paid.
	Note that any vertex in~$R$ was deleted by the pruning algorithm, thus every vertex in $R$ is indeed contained in some subset in $\calb_\subminus$
	which contains no vertex in $V(T) \cup W$.
	Hence, there is a collection $\{B_1,B_2,\dots,B_m\}$ of processed subsets in $\calb_\subminus$ that partitions the vertices in~$R$.
	By similar arguments, there is also a collection $\{B_{m+1},B_{m+2},\dots,B_\ell\}$ of subsets in $\calb_\subminus$ that partitions the vertices in~$P$.
	Thus, since each such subset is tight for $(y,\pot)$,
	\begin{align}
		\vpen^\pot_{R \cup P}
		&= \sum_{i=1}^\ell \vpen^\pot_{B_i}
		= \sum_{i=1}^\ell \sum_{L \in \call[B_i]} y_L
		=\sum_{L \in \call[R]\cup \call[P]} y_L. \label{subsec:bound:lemma:penalty_cost_bound:paid}
	\end{align}

	Since the collections of summations~\eqref{subsec:bound:lemma:penalty_cost_bound:cost_unpaid} and~\eqref{subsec:bound:lemma:penalty_cost_bound:paid} are disjoint,
	\begin{align*}
		(\call[W] \setminus
		\call[P \cup I]) \cup
		(\call[R]\cup \call[P])
		&\subseteq \call[P \cup Q] \cup \call_w[W] \cup \call[R]\\
		&\subseteq \call[V \setminus V(T)] \cup \call_w[W].
	\end{align*}

	By combining~\eqref{subsec:bound:lemma:penalty_cost_bound:cost_unpaid} and~\eqref{subsec:bound:lemma:penalty_cost_bound:paid}, we conclude
	\begin{align*}
	\vpen^\pot_{V \setminus V(T)}
		&=\vpen^\pot_{Q} + \vpen^\pot_{R \cup P}
		\le \sum_{L \in \call[V \setminus V(T)]} y_L + \sum_{L \in \call_w[W]} y_L. \qedhere
	\end{align*}
\end{proof}

Combining the bounds from the previous two lemmas, one obtains the following corollary.

\begin{mycorollary}
	\label{subsec:bound:coro:total_cost_T}
	Let $T$ be the tree returned by $\PV(\pot,\tblist)$, then
	\begin{align*}
		\ecost_{E(T)} + 2 \vpen^\pot_{V \setminus V(T)} ~\leq~ 2 \smashoperator{\sum_{L \in \call(V)}} y_L.
	\end{align*}
\end{mycorollary}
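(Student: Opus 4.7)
The plan is to simply add Lemmas~\ref{subsec:bound:lemma:edge_cost_bound} and~\ref{subsec:bound:lemma:penalty_cost_bound} (the latter multiplied by~$2$), observe that the $\call_w[W]$ terms cancel, and then conclude using set-containment arguments. Concretely, summing the two bounds yields
\[
\ecost_{E(T)} + 2\,\vpen^\pot_{V \setminus V(T)} \;\le\; 2\smashoperator{\sum_{L \in \call(V(T))}} y_L \;+\; 2\smashoperator{\sum_{L \in \call[V \setminus V(T)]}} y_L,
\]
so the task reduces to showing that
\[
\call(V(T)) \;\cup\; \call[V \setminus V(T)] \;\subseteq\; \call(V),
\]
as a disjoint union, which combined with $y \ge 0$ immediately gives the desired inequality.

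For disjointness, note that any $L \in \call(V(T))$ contains at least one vertex of $V(T)$ by definition, hence cannot be contained in $V \setminus V(T)$. For containment in $\call(V)$, I would argue as follows. If $L \in \call(V(T))$, then $L$ contains some but not all vertices of $V(T)$, so $L$ is nonempty and there exists a vertex of $V(T) \subseteq V$ not in $L$, making $L$ a proper nonempty subset of $V$, i.e., $L \in \call(V)$. If $L \in \call[V \setminus V(T)]$, then $L \subseteq V \setminus V(T)$, and since $r \in V(T)$, $L$ misses $r$ and is therefore a proper subset of $V$; moreover, $L$ is nonempty because $\emptyset \notin \call$ by invariant~\ref{subsec:GP:lemma:GP_inv:L_is_laminar}. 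Hence $L \in \call(V)$.

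There is no substantive obstacle here: the corollary is the direct algebraic combination of the two preceding lemmas, with the cancellation of the $\call_w[W]$ correction terms being exactly the reason those terms were tracked separately in the edge-cost and penalty-cost bounds.
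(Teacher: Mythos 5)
Your proof is correct and is precisely the argument the paper has in mind: the corollary is stated without proof, preceded only by the remark that it follows from ``combining the bounds from the previous two lemmas,'' which is exactly what you do. The cancellation of the $\call_w[W]$ terms and the index-set containment $\call(V(T)) \mathbin{\dot\cup} \call[V\setminus V(T)] \subseteq \call(V)$ are the intended mechanism, and your verification of nonemptiness, properness, and disjointness is sound.
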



\subsection{The approximation algorithm}
\label{subsec:approx}

Next lemma is the final ingredient of our $2$-approximation.

\begin{mylemma}
	\label{subsec:approx:lemma:2-approx_or_reduce}
	Let $T$ be the tree returned by $\PV(\pot,\tblist)$, and $T^*$ be an optimal solution.
	Also, suppose that $L^*$ is the minimal subset in $\call$ containing $V(T^*)$.
	If $L^* = V$, then
	\[
	\ecost_{E(T)} + 2 \vpen_{V \setminus V(T)} \leq 2 \left( \ecost_{E(T^*)} + \vpen_{V \setminus V(T^*)} \right).
	\]
\end{mylemma}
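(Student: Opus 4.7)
The plan is to combine two bounds already at our disposal: the upper bound on the cost of $T$ from Corollary~\ref{subsec:bound:coro:total_cost_T} and the lower bound on the cost of $T^*$ from Lemma~\ref{subsec:approx:lemma:opt_lower_bound} applied with $L^* = V$. The only nontrivial point is absorbing the $\pot$-terms, which we handle via the fact that $T$ spans exactly $k$ vertices by construction of $\PV$, while $T^*$ spans at least $k$ vertices by feasibility.

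First, I would invoke Corollary~\ref{subsec:bound:coro:total_cost_T} and expand the new penalty:
\[
\ecost_{E(T)} + 2\vpen_{V \setminus V(T)} + 2\pot \, |V \setminus V(T)| \;=\; \ecost_{E(T)} + 2\vpen^\pot_{V \setminus V(T)} \;\leq\; 2 \sum_{L \in \call(V)} y_L.
\]
Next, since $y$ respects $\ecost$ and $\vpen^\pot$ by invariants \ref{subsec:GP:lemma:GP_inv:y_respects} of the growth-phase, and the hypothesis $L^* = V$ is exactly what is needed to apply Lemma~\ref{subsec:approx:lemma:opt_lower_bound}, I would get
\[
\sum_{L \in \call(V)} y_L \;\leq\; \ecost_{E(T^*)} + \vpen_{V \setminus V(T^*)} + \pot \, |V \setminus V(T^*)|.
\]

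Chaining these two inequalities yields
\[
\ecost_{E(T)} + 2\vpen_{V \setminus V(T)} \;\leq\; 2\ecost_{E(T^*)} + 2\vpen_{V \setminus V(T^*)} + 2\pot \bigl( |V \setminus V(T^*)| - |V \setminus V(T)| \bigr).
\]
To finish, I would observe that the picking-vertices algorithm constructs $T$ as the subgraph induced by a sequence of subsets covering exactly $k$ vertices, so $|V(T)| = k$, whereas feasibility of $T^*$ gives $|V(T^*)| \geq k$. Hence $|V \setminus V(T^*)| \leq |V \setminus V(T)|$, and since $\pot \geq 0$ the residual term is non-positive, yielding the desired inequality.

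I do not expect any real obstacle: the conceptual work has already been done in the preceding sections, in particular in ensuring $|V(T)| = k$ exactly (so that the surplus in the $\pot$-terms has the right sign) and in ensuring that $y$ is common to both growth-phase executions (so that the same $y$ can appear in both the upper bound on $\ecost_{E(T)} + 2\vpen^\pot_{V \setminus V(T)}$ and the lower bound coming from $T^*$). The only subtlety to flag is the role of the assumption $L^* = V$, which is precisely the trigger needed for Lemma~\ref{subsec:approx:lemma:opt_lower_bound} to apply to the collection $\call(V)$ that appears in Corollary~\ref{subsec:bound:coro:total_cost_T}; without it, the collections on the two sides would not match and the chain would break.
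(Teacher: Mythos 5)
Your proof is correct and follows essentially the same route as the paper's: combining Corollary~\ref{subsec:bound:coro:total_cost_T} with Lemma~\ref{subsec:approx:lemma:opt_lower_bound} (applied with $L^* = V$), then absorbing the $\pot$-terms via $|V(T)| = k \leq |V(T^*)|$ and $\pot \geq 0$. The only cosmetic difference is that the paper substitutes $|V \setminus V(T^*)|$ into the $\pot$-term before chaining, while you collect the residual at the end; the substance is identical.
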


\begin{proof}
	Assume $V = L^*$.
	Since $T$ spans exactly $k$ vertices and $T^*$ spans at least $k$ vertices, we have ${|V \setminus V(T)| \geq |V \setminus V(T^*)|}$. Therefore,
	\begin{align*}
		\ecost_{E(T)} + 2 \vpen_{V \setminus V(T)}
			&=      \ecost_{E(T)} + 2 \vpen^\pot_{V \setminus V(T)}
			        - 2 \pot|V \setminus V(T)|\\
			&\leq   \ecost_{E(T)} + 2 \vpen^\pot_{V \setminus V(T)}
			        - 2 \pot|V \setminus V(T^*)|\\
			&\leq   \textstyle 2 \sum_{L \in \call(V)} y_L
						  - 2 \pot|V \setminus V(T^*)|\\
			&\leq   2 \left( \ecost_{E(T^*)} + \vpen_{V \setminus V(T^*)} \right),
	\end{align*}
	where the penultimate inequality follows from Corollary~\ref{subsec:bound:coro:total_cost_T}, and the last inequality follows from Lemma~\ref{subsec:approx:lemma:opt_lower_bound}, because $y$ respects~$\ecost$~and~$\vpen^\pot$.
\end{proof}

We finally present our $2$-approximation, which is denoted by $\AP$.
A~listing is given in Algorithm~\ref{subsec:approx:alg:AP}.

The algorithm will compute a series of trees, the best of which
will be the output.
In each iteration, start by computing a tree executing~$\GW(0,\emptyset)$, and, if it already spans at least $k$ vertices, then store this tree and stop the iteration process.
Otherwise, there is a threshold-tuple $(\pot,\tblist)$, by Lemma~\ref{subsec:TS:lemma:TS_inv}, which can be computed by $\TS$.
Now, store the tree returned by $\PV(\pot,\tblist)$ and let $\call$ be the laminar collection used by $\PV(\pot,\tblist)$ when computing this tree.
Find the subset $L_r \in \call$ which is the inclusion-wise maximal proper subset of $V$ containing the root $r$.
If $|L_r| < k$, then stop; otherwise, remove from $G$ any vertex not in $L_r$, and repeat the iteration process with $G[L_r]$.

Observe that at least one vertex is deleted in each iteration, and the reduced graph is connected because it is $\call$-connected.
Thus, the algorithm stops, and the output is the computed tree $T$ which minimizes the cost with respect to the original graph $G$.
We argue why~$T$ is a $2$-approximation.
Let~$T^*$ be an optimal solution with respect to~$G$ and consider the last iteration which processed some subgraph $G'$ containing~$T^*$.
If the algorithm stopped in this iteration after computing a tree using $\GW(0,\emptyset)$, then Lemma~\ref{appendix:bound_simple:lemma:2-approx} implies that this tree is a $2$-approximation with respect to~$G'$.
Otherwise, the inclusion-wise minimal subset containing $T^*$ is $V(G')$, and Lemma~\ref{subsec:approx:lemma:2-approx_or_reduce} implies that the tree computed by $\PV(\pot,\tblist)$ is a $2$-approximation with respect to $G'$.
Now, note that a $2$-approximation with respect to~$G'$ is also a $2$-approximation with respect to~$G$.

\begin{mytheorem}
	\label{subsec:approx:theo:2-approx}
	If $T$ is the tree returned by $\AP$, and $T^*$ is an optimal solution, then
	\begin{align*}
		\ecost_{E(T)} + 2 \vpen_{V \setminus V(T)} \leq 2 \left( \ecost_{E({T^*})} + \vpen_{V \setminus V(T^*)} \right).
	\end{align*}
\end{mytheorem}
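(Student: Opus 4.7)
The plan is to single out one iteration $i^*$ of $\AP$ whose stored tree $T_{i^*}$ already satisfies the claimed inequality, and then conclude using the output selection. Let $G_1 = G, G_2, \ldots$ denote the sequence of subgraphs handled by $\AP$, where $G_{i+1} = G_i[L_r]$ is formed from the maximal proper subset $L_r \in \call$ containing $r$ discovered in iteration~$i$. I will take $i^*$ to be the largest index with $V(T^*) \subseteq V(G_{i^*})$, which exists since $V(T^*) \subseteq V(G_1)$. Consequently $T^*$ is still feasible (and optimal up to the additive term $\vpen_{V \setminus V(G_{i^*})}$) for the \kpcst instance restricted to $G_{i^*}$.

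Two cases then follow, according to how $T_{i^*}$ was computed inside iteration $i^*$. If $\GW(0, \emptyset)$ on $G_{i^*}$ already returned a tree spanning at least $k$ vertices, then Lemma~\ref{appendix:bound_simple:lemma:2-approx} applied to the $G_{i^*}$ instance directly yields $\ecost_{E(T_{i^*})} + 2 \vpen_{V(G_{i^*}) \setminus V(T_{i^*})} \le 2 (\ecost_{E(T^*)} + \vpen_{V(G_{i^*}) \setminus V(T^*)})$. Otherwise, $\TS$ produces a threshold-tuple $(\pot, \tblist)$ and $T_{i^*} = \PV(\pot, \tblist)$ spans exactly $k$ vertices; here I would apply Lemma~\ref{subsec:approx:lemma:2-approx_or_reduce} with $L^* = V(G_{i^*})$ to derive the same inequality on $G_{i^*}$.

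The main obstacle is verifying, in this second case, that the inclusion-minimal subset $L^* \in \call$ containing $V(T^*)$ equals $V(G_{i^*})$, as the lemma demands; without it the bound does not follow. I will argue by contradiction: if $L^* \subsetneq V(G_{i^*})$, then since $\call$ is binary laminar and $V(G_{i^*}) \in \call$, there exist $L_1, L_2 \in \call$ with $V(G_{i^*}) = L_1 \cup L_2$, and laminarity together with $r \in V(T^*) \subseteq L^*$ forces $L^* \subseteq L_r$, where $L_r$ is whichever of $L_1, L_2$ contains $r$. Thus $V(T^*) \subseteq L_r$, which gives $|L_r| \ge |V(T^*)| \ge k$, so $\AP$ does not halt after iteration $i^*$ but proceeds with $G_{i^*+1} = G_{i^*}[L_r]$ satisfying $V(T^*) \subseteq V(G_{i^*+1})$, contradicting the maximality of $i^*$. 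Hence $L^* = V(G_{i^*})$ and Lemma~\ref{subsec:approx:lemma:2-approx_or_reduce} applies.

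To lift the bound from $G_{i^*}$ to the original graph $G$, I observe that $V(T_{i^*})$ and $V(T^*)$ are both contained in $V(G_{i^*})$, so that $\vpen_{V \setminus V(T_{i^*})}$ and $\vpen_{V \setminus V(T^*)}$ each exceed their $G_{i^*}$-analogues by the same additive term $\vpen_{V \setminus V(G_{i^*})}$; adding $2 \vpen_{V \setminus V(G_{i^*})}$ to both sides of the $G_{i^*}$-inequality produces $\ecost_{E(T_{i^*})} + 2 \vpen_{V \setminus V(T_{i^*})} \le 2 (\ecost_{E(T^*)} + \vpen_{V \setminus V(T^*)})$. Since $\AP$ returns the best tree found over all iterations with respect to the original graph, the output $T$ inherits this guarantee, yielding the theorem.
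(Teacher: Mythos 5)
Your proposal follows the same strategy as the paper's proof: identify the last iteration $i^*$ whose subgraph $G_{i^*}$ still contains $T^*$, apply Lemma~\ref{appendix:bound_simple:lemma:2-approx} (if $\GW(0,\emptyset)$ already spans $k$ vertices) or Lemma~\ref{subsec:approx:lemma:2-approx_or_reduce} (via $\PV$) on that restricted instance, and lift the resulting bound to $G$ by observing the symmetric additive contribution $\vpen_{V\setminus V(G_{i^*})}$ on both sides. The supporting steps you make explicit — the binary-laminarity plus maximality-of-$i^*$ argument showing $L^* = V(G_{i^*})$, and the identification of the algorithm's $L_r$ with the child subset containing $r$ — correspond to the informal justification the paper gives in the paragraph preceding the theorem and in the discussion at the end of the introduction to Section~\ref{sec:find_solution}, so the two proofs are essentially the same.
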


\begin{algorithm}[H]
	\DontPrintSemicolon
	\caption{$\AP$}
	\label{subsec:approx:alg:AP}

	Let $V_0 \gets V(G)$

	\While{$|V(G)| \ge k$}
	{
		\If{$\GW(0,\emptyset)$ returns at least $k$ vertices}
		{
			Let $T$ be the tree returned by $\GW(0,\emptyset)$\;
			Stop the loop
		}
		\nonl\;
		Compute threshold-tuple $(\pot,\tblist)$ by executing $\TS$\;
		Let $T$ be the tree returned by $\PV(\pot,\tblist)$\;
		Let $\call$ be the laminar collection computed in $\PV(\pot,\tblist)$\;

		\nonl\;
		Let $L_r \in \call$ be the maximal proper subset of $V$ containing $r$\;

		Update $G \gets G[L_r]$\;

	}

	\Return{the computed tree $T$ which minimizes
	$
		\ecost_{E(T)} + \vpen_{V_0 \setminus V(T)}
	$
	}
\end{algorithm}


\section{Final remarks}
\label{sec:final_remarks}

In this paper, we present a $2$-approximation for the $k$-prize-collecting Steiner tree problem.
This improves over the previous best known approximation factor~\cite{Matsuda2019}, and has a smaller running time.
Observe that $\AP$ iterates for at most $\mathcal{O}(|V|)$ times, and the time complexity of each iteration is dominated by the subroutine $\TS$, whose running time is $\calo(|V||E|^2 + |V|^3 \log^2|V|)$.
Therefore, our algorithm runs in time~${\calo(|V|^2|E|^2 + |V|^4 \log^2|V|)}$.

Variants of the prize-collecting Steiner tree problem are among the most classical network design problems, and have long been studied from both the practice and theory perspectives.
In this paper we considered only the version of \pcst with cardinality constraint, but similar techniques can apply to other connectivity problems, particularly those for which the primal-dual framework have been used~\cite{Goemans1995}.
Although the algorithm is based on the primal-dual framework, it does not use an \lp formulation.
We note that the standard integer linear program for \kpcst~\cite{Han2017} has integrality gap of at least~$4$.

Johnson~\etal~\cite{Johnson2000} also considered the quota version of \kmst, in which each vertex $v$ has an associated non-negative integer weight $\vweight_v$ and a solution is a minimum-cost tree with any number of vertices, such that the total weight of the vertices in the tree is at least some given quota.
Note that \kmst is the special case in which the quota is $k$ and every vertex has weight~one.
A~small modification of our algorithm also leads to a $2$-approximation for the quota variant of~\kpcst.
To do this, build an equivalent instance of $\kpcst$ by replacing each vertex $v$ with $\vweight_v$ copies at the same location, and observe that, although the size of this instance is not necessarily polynomial, the growth-phase and picking-vertices can be simulated in polynomial time.

\bibliographystyle{abbrv}
\bibliography{main}

\clearpage
\appendix


\section{Proof of Lemma~\ref{appendix:bound_simple:lemma:2-approx}}
\label{appendix:bound_simple}

In this appendix, we prove Lemma~\ref{appendix:bound_simple:lemma:2-approx}.
We show that, if $\GW(0,\emptyset)$ returns a tree $T$ with at least $k$ vertices, then $T$ is a $2$-approximation.
The proof is adapted from~\cite{Feofiloff2010}.

Let $(T,\calb)$ be the tuple returned by $\GP(0,\emptyset)$, and remember that the growth-phase computes a corresponding laminar collection $\call$ associated with vector~$y$.
Also, let $\hatt$ be the output of $\GW(0,\emptyset)$, and observe that $\hatt$ is the tree returned by $\PP(T,\calb)$.

Recall that $\call(S)$ denotes the collection of subsets in $\call$ which contain some but not all vertices of a subset~$S$, and that ${\call_v}$ denote the collection of subsets in $\call$ which contain a vertex~$v$.
Thus, $\call_r(S)$ is the collection of subsets in $\call(S)$ which contain the root~$r$.

We bound the cost of $\hatt$ into two steps.
Lemma~\ref{appendix:bound_simple:lemma:edge_cost_bound} bounds the cost of edges of $\hatt$, and Lemma~\ref{appendix:bound_simple:lemma:penalty_cost_bound} bounds the penalty of vertices not in $\hatt$.

\begin{mylemma}
	\label{appendix:bound_simple:lemma:edge_cost_bound}
	Let $\hatt$ be the tree returned by $\GW(0,\emptyset)$, then
	\begin{align*}
		\ecost_{E(\hatt)}
		\leq
		~~~
		2 \smashoperator{\sum_{L \in \call(V(\hatt))\setminus \call_r}} y_L
	\end{align*}
\end{mylemma}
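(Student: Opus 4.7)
The plan is to prove this inequality by induction on the iterations of $\GP(0,\emptyset)$. Since every edge of $T$ is tight by invariant~\ref{subsec:GP:lemma:GP_inv:edge_in_F_are_tight} and $\hatt \subseteq T$, I will first rewrite
\[
\ecost_{E(\hatt)} = \sum_{e \in E(\hatt)} \sum_{L: e \in \delta(L)} y_L = \sum_{L} |\delta_\hatt(L)|\, y_L,
\]
and then maintain the invariant
\[
\sum_{L} |\delta_\hatt(L)|\, y_L \leq 2 \sum_{L \in \call(V(\hatt)) \setminus \call_r} y_L
\]
throughout the growth-phase. The base case is trivial since initially $y = 0$. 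For the inductive step, let $\cala$ denote the active subsets, $\Delta \geq 0$ the uniform increment applied to $y_A$ for each $A \in \cala$, and $\cala^+ = \cala \cap \call(V(\hatt))$. The left-hand side grows by $\Delta \sum_{A \in \cala^+} |\delta_\hatt(A)|$ (subsets not in $\cala^+$ have $|\delta_\hatt(A)| = 0$), while the right-hand side grows by $2\Delta\,|\cala^+ \setminus \call_r|$, so it suffices to show $\sum_{A \in \cala^+} |\delta_\hatt(A)| \leq 2\,|\cala^+ \setminus \call_r|$.

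The crux is a contraction argument. I will contract each maximal component $L \in \call^*$ intersecting $V(\hatt)$ to a single vertex, obtaining an auxiliary graph $\hatt'$. Since $T$ is $\call$-connected by invariant~\ref{subsec:GP:lemma:GP_inv:edge_in_F_are_tight} and the pruning operation preserves connectivity of $\calb$-pieces (Lemma~\ref{subsec:PP:lemma:PP_inv}), the subgraph $\hatt[V(\hatt) \cap L]$ remains connected for each such $L$, and therefore $\hatt'$ inherits the structure of a tree on $|\cala^+| + |\calb^+|$ vertices, where $\calb^+ = \{B \in \calb \cap \call^* : B \cap V(\hatt) \neq \emptyset\}$. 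The handshake identity then gives
\[
\sum_{A \in \cala^+} |\delta_\hatt(A)| + \sum_{B \in \calb^+} |\delta_\hatt(B)| = 2(|\cala^+| + |\calb^+| - 1).
\]
Each $B \in \calb^+$ satisfies $|\delta_\hatt(B)| \neq 1$ because $\hatt$ is pruned with $\calb$, and $|\delta_\hatt(B)| \geq 1$ because $B$ meets $V(\hatt)$ but cannot contain $r$ by invariant~\ref{subsec:GP:lemma:GP_inv:root_never_tight} while $\hatt$ is connected with $r \in V(\hatt)$; hence $|\delta_\hatt(B)| \geq 2$. Substituting yields $\sum_{A \in \cala^+} |\delta_\hatt(A)| \leq 2(|\cala^+| - 1)$.

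To close the induction, I will use that the component containing $r$ is always active by invariant~\ref{subsec:GP:lemma:GP_inv:root_never_tight}. Whenever $\cala^+$ is non-empty, this root component lies in $\cala^+ \cap \call_r$, so $|\cala^+ \setminus \call_r| = |\cala^+| - 1$ and the required inequality follows; and when $\cala^+$ is empty (which happens exactly when the root component already contains all of $V(\hatt)$), both sides of the incremental bound are zero. The main obstacle I expect is justifying cleanly that $\hatt'$ really is a tree, i.e., that $\hatt[V(\hatt) \cap L]$ stays connected for every $L \in \call^*$ at the current iteration. This relies on combining the $\call$-connectedness of $T$ with laminarity: for $L \in \cala$ maximal, each $B \in \calb$ is either disjoint from $L$ or strictly contained in $L$, and the pruning invariants force its removal to preserve the connectivity of the remaining piece. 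This is the delicate structural step that the Feofiloff et al.\ template handles in detail.
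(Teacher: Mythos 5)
Your proof follows the paper's argument essentially step for step: the same telescoping rewrite $\ecost_{E(\hatt)} = \sum_L |\delta_\hatt(L)|\,y_L$, the same round-by-round induction, the same contraction-and-handshake count with inactive pieces having degree at least~$2$, and the same observation that the root's component is always active so that $|\cala^+ \setminus \call_r| = |\cala^+|-1$. The step you flag as delicate is actually immediate from the Helly property of subtrees of a tree — $\hatt$ is a subtree of $T$ and $T[L]$ is a subtree by $\call$-connectedness (invariant~\ref{subsec:GP:lemma:GP_inv:edge_in_F_are_tight}), so $\hatt[V(\hatt)\cap L]$ is their intersection and hence connected — and note that your laminarity sketch slightly misstates the cases, since a $B\in\calb$ can \emph{strictly contain} $L$ (in which case $L$ is simply removed from $\hatt$ entirely, so it never enters the contraction).
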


\begin{proof}
	Since every edge in $\hatt$ is tight for $(y,0)$, we have
	\begin{align*}
		\ecost_{E(\hatt)}
			= \smashoperator[r]{\sum_{e \in E(\hatt)}} \ecost_e
			= \sum_{e \in E(\hatt)} \smashoperator[r]{\sum_{L: e \in \delta(L)}} y_L
			= \sum_{L \in \call} |\delta_{\hatt}(L)| y_L.
	\end{align*}

	We prove by induction that, at the beginning of each iteration of~$\GP(0,\emptyset)$,
	\begin{align*}
		\sum_{L \in \call} |\delta_{\hatt}(L)| y_L
		\leq
				~~~
				2 \smashoperator{\sum_{L \in \call(V(\hatt))\setminus \call_r}} y_L
	\end{align*}

	At the beginning of the growth-phase, $y = 0$, then the inequality holds.
	Thus, assume that the inequality holds at the beginning of an iteration.
	Consider the laminar collection $\call$ at the beginning of this iteration, let~$\cala$ be the collection of active maximal subsets in $\call^*$ which contain some vertex of~$\hatt$, and let $\cali$ be the collection of non-active maximal subsets in~$\call^*$ which contain some vertex of~$\hatt$.

	If some maximal subset contains $V(\hatt)$, then neither side of the inequality changes, thus assume that no maximal subset contains $V(\hatt)$.
	Suppose that the variable $y_L$ of each active subset~$L$ is increased by~$\Delta$ in this iteration. Hence, the left side of the inequality  increases by $\sum_{A \in \cala}|\delta_{\hatt}(A)|\Delta$, and the right side increases by $2 |\cala \setminus \cala_r|\Delta $.
	We claim that
	\[
	\sum_{A \in \cala}|\delta_{\hatt}(A)| \leq 2 |\cala \setminus \cala_r|,
	\]
	and thus the inequality is maintained at the end of the iteration.

	Note that $\cala \cup \cali$ partition $V(\hatt)$.
	Thus, we can create a graph~$T'$ from~$\hatt$ by contracting each subset in $\cala \cup \cali$.
	As $\hatt$ is a tree and a subgraph of $T$, it follows that the graph $T'$ is a tree because $T$ is $\call$-connected by invariant~\ref{subsec:GP:lemma:GP_inv:edge_in_F_are_tight}.
	Let~$S$ be a non-active subset, and observe that $S$ is in $\calb$. Since $\hatt$ is pruned with~$\calb$, Corollary~\ref{subsec:PP:coro:PP_works} implies that the degree of $S$  on $\hatt$ is not one.
	Therefore, the degree of each vertex of $T'$ corresponding a subset in $\cali$ is at least $2$.
	It follows that
	\begin{align*}
		\sum_{A \in \cala}|\delta_{\hatt}(A)| + 2 |\cali|
			\leq \sum_{A \in \cala}|\delta_{\hatt}(A)| + \sum_{I \in \cali}|\delta_{\hatt}(I)|
			= 2 (|\cala| + |\cali| - 1).
	\end{align*}

	Since the subset containing the root $r$ is always active, exactly one subset in~$\cala$ contains $r$, hence ${|\cala \setminus \cala_r| = |\cala|-1}$, and thus
		${\sum_{A \in \cala}|\delta_{\hatt}(A)| \leq 2 |\cala \setminus \cala_r|}$,
	which shows the claim.
\end{proof}

\begin{mylemma}
	\label{appendix:bound_simple:lemma:penalty_cost_bound}
	Let $\hatt$ be the tree returned by $\GW(0,\emptyset)$, then
	\begin{align*}
		\vpen_{V \setminus V(\hatt)}
		\leq ~~~~
		 \smashoperator{\sum_{L \in \call[V \setminus V(\hatt)]}} y_L.
	\end{align*}
\end{mylemma}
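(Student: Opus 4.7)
The plan is to show that $V \setminus V(\hatt)$ can be partitioned into a collection of disjoint tight subsets from $\calb$, and then invoke the tightness definition with $\pot = 0$ to convert the penalty of each such subset into a sum of $y$-values.

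First, I would argue that every vertex $v \in V \setminus V(\hatt)$ lies in some subset that was deleted during $\PP(T,\calb)$. Indeed, the pruning-phase removes a vertex only when it is contained in some $B \in \calb$ chosen in some iteration as having degree one. Hence, if I let $\calb'$ be the subcollection of $\calb$ consisting of those subsets actually deleted at some iteration of $\PP(T,\calb)$, then $\bigcup_{B \in \calb'} B \supseteq V \setminus V(\hatt)$; in fact, since each such deleted $B$ lies in $\calb$ and no subset in $\calb$ contains the root by invariant \ref{subsec:GP:lemma:GP_inv:root_never_tight}, equality holds.

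Second, I would extract a disjoint subcollection by selecting the inclusion-wise maximal subsets of $\calb'$, call them $B_1, \dots, B_m$. Since $\calb$ is laminar by invariant \ref{subsec:GP:lemma:GP_inv:subsets_in_B_are_tight}, so is $\calb'$, and the maximal subsets of a laminar collection are pairwise disjoint. The sets $B_1,\dots,B_m$ still cover $V \setminus V(\hatt)$, because every vertex in $V \setminus V(\hatt)$ lies in some $B \in \calb'$, which in turn lies in one of the $B_i$'s. Thus $\{B_1,\dots,B_m\}$ is a partition of $V \setminus V(\hatt)$.

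Third, I would use tightness: since each $B_i \in \calb$ is tight for $(y,0)$ by invariant \ref{subsec:GP:lemma:GP_inv:subsets_in_B_are_tight}, and since $\vpen^0_{B_i} = \vpen_{B_i}$ (because $\pot = 0$), we get $\vpen_{B_i} = \sum_{L \in \call[B_i]} y_L$. Summing over $i$ and using disjointness of the $B_i$'s (which by laminarity of $\call$ also implies disjointness of the collections $\call[B_i]$) yields
\[
\vpen_{V \setminus V(\hatt)} = \sum_{i=1}^m \vpen_{B_i} = \sum_{i=1}^m \sum_{L \in \call[B_i]} y_L \leq \sum_{L \in \call[V \setminus V(\hatt)]} y_L,
\]
where the last inequality holds because each $\call[B_i] \subseteq \call[V \setminus V(\hatt)]$ and $y_L \ge 0$.

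The only real subtlety is verifying that the maximal deleted subsets indeed cover every vertex of $V \setminus V(\hatt)$; this is immediate from the description of the pruning-phase but needs to be stated explicitly. All other steps are routine consequences of laminarity and the tightness invariants already established for the growth-phase.
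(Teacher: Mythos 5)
Your proof is correct and follows essentially the same route as the paper's: both observe that every vertex outside $\hatt$ was deleted as part of some $B \in \calb$, use laminarity of $\calb$ to extract a disjoint subcollection partitioning $V \setminus V(\hatt)$, and then apply tightness of each chosen $B$ for $(y,0)$. Your extra invocation of invariant \ref{subsec:GP:lemma:GP_inv:root_never_tight} is harmless but not where the real work is — the equality $\bigcup_{B \in \calb'} B = V \setminus V(\hatt)$ follows simply because a deleted $B$'s vertices are removed and never reinstated, combined with $V(T) = V$ from Corollary~\ref{subsec:GP:corollary:GP_output}.
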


\begin{proof}
	By Corollary~\ref{subsec:GP:corollary:GP_output}, every vertex not spanned by $T$ is contained in a subset in $\calb$, and since $\PP(T,\calb)$ only deletes from $T$ subsets in $\calb$, we have that every vertex not spanned by $\hatt$ is contained in a subset in $\calb$.
	Since $\calb$ is laminar, there is a collection $\{B_1,\cdots,B_m\}$ of subsets in $\calb$ which partitions the vertices in $V \setminus V(\hatt)$. Because  every subset in $\calb$ is tight for $(y,0)$,
	\begin{align*}
				\vpen_{V \setminus V(\hatt)}
			&= \vpen^0_{V \setminus V(\hatt)}
			= \sum_{i=1}^m \vpen_{B_i}^0
			= \sum_{i=1}^m \smashoperator[r]{\sum_{L \in \call[B_i]}} y_L
			\leq \smashoperator[r]{\sum_{L \in \call[V \setminus V(\hatt)]}} y_L.
			\qedhere
	\end{align*}
\end{proof}

Now, we can prove Lemma~\ref{appendix:bound_simple:lemma:2-approx}.

\appendixboundsimple*

\begin{proof}
	Let $L^*$ be the minimal subset in $\call$ containing all the vertices of~$T^*$.
	Also, let $\calp$ be the collection of subsets in $\call$
	which contain all the vertices of~$L^*$.
	Observe that $\{\call(L^*),\call[V \setminus L^*],\calp\}$ partitions~$\call$.
	Also, since $L^*$ contains~$r$, it follows that ${\calp \subseteq \call_r}$.

	Combining Lemmas~\ref{appendix:bound_simple:lemma:edge_cost_bound} and \ref{appendix:bound_simple:lemma:penalty_cost_bound} we have
	\begin{align*}
		\textstyle \ecost_{E(\hatt)} + 2 \vpen_{V \setminus V(\hatt)}
			\le  2 \sum_{L \in \call(V(\hatt))\setminus \call_r} y_L + 2 \sum_{L \in \call[V \setminus V(\hatt)]} y_L.
	\end{align*}

	Observe that the subsets considered in the terms are disjoint. Thus, we can simplify the indices of the summation as
	\begin{align*}
		(\call(V(\hatt))\setminus \call_r)
		\cup (\call[V \setminus V(\hatt)])
		\subseteq
		\call \setminus \call_r
		\subseteq
		\call \setminus \calp
		=
		\call(L^*) \cup \call[V \setminus L^*].
	\end{align*}

	Now, since $y$ respects~$\vpen^0$, using Lemma~\ref{subsec:approx:lemma:opt_lower_bound} for $\pot = 0$,
	\begin{align*}
		\ecost_{E(\hatt)} + 2 \vpen_{V \setminus V(\hatt)}
			&\textstyle \le  2  \sum_{L \in \call(L^*)} y_L + 2\sum_{L \in \call[V \setminus L^*]} y_L\\
			&\le  2 \ecost_{E(T^*)} + 2 \vpen_{L^* \setminus V(T^*)} + 2 \vpen^0_{V \setminus L^*}\\
			&=    2 \ecost_{E(T^*)} + 2 \vpen_{V \setminus V(T^*)}.
			\qedhere
	\end{align*}
\end{proof}

\end{document}